\newenvironment{customthm}[1]
  {\innercustomthm}
  {\endinnercustomthm}
\newenvironment{customlem}[1]
  {\innercustomlem}
  {\endinnercustomthm}
\newcommand{\msc}[1]{\mbox{\sc #1}} 
\title{Strong Progress for Session-Typed Processes in a Linear Metalogic with Circular Proofs}
\titlerunning{Strong progress for session typed processes}
\author{Farzaneh Derakhshan}{Carnegie Mellon University}{fderakhs@andrew.cmu.edu}{}{}
\author{ Frank Pfenning}{Carnegie Mellon University}{fp@cmu.edu}{}{}
\authorrunning{F. Derakhshan et al.}
\begin{document}
\nolinenumbers
\maketitle              
\begin{abstract}
We introduce an infinitary first order linear logic with least and greatest fixed points. To ensure cut elimination, we impose a validity condition on infinite derivations. Our calculus is designed to reason about rich signatures of mutually defined inductive and coinductive linear predicates. In a major case study we use it to prove the strong progress property for binary session-typed processes under an asynchronous communication semantics.  As far as we are aware, this is the first proof of this property.
\keywords{Session types, circular proofs, linear logic, cut elimination.}
\end{abstract}

\section{Introduction}

 \emph{Session types} describe the communication behavior of interacting processes \cite{honda1993types,honda1998language}. Processes are connected via channels and the communication takes place by exchanging messages along the channels. Each channel has a type that governs the protocol of the communication. Generally, session types may be \emph{recursive}, thereby allowing them to capture unbounded interactions.  \emph{Binary session types} are a particular form in which each channel has just two endpoints. A channel connects the provider of a resource to its client. When such a channel connects two processes within a configuration of processes it is considered \emph{internal} and private; other \emph{external} channels provide an interface to a configuration and communication along them may be observed.

The {\it progress property} for a configuration of processes states that during its computation it either (i) takes an internal communication step, or (ii) is empty, or (iii) communicates along one of its external channels. The process typing rules guarantee the progress property even in the presence of recursive session types and recursively defined processes \cite{caires10concur,Toninho2013ESP}. The progress property is enough to ensure the absence of deadlocks, but it does not guarantee that a process will eventually terminate either in an empty configuration or one attempting to communicate along an external channel. The programmer will generally be interested in this stronger form of progress, requiring additional guard conditions on processes beyond typing.
 

In this paper we prove the strong progress property for processes that satisfy such a guard condition in a computational model based on \emph{asynchronous} communication. We restrict ourselves to the subsingleton fragment of recursive binary session types, differentiated into least and greatest fixed points. The processes defined over this fragment may use at most one resource and provide exactly one. Interestingly, the subsingleton fragment with recursive types already has the computational power of Turing machines \cite{deyoung2016substructural}. In prior work \cite{derakhshan2019circular} we already introduced a similar guard condition for this fragment with a \emph{synchronous} interpretation of communication and proved that programs satisfying this condition enjoy the strong progress property.  So our main interest here is not that such a proof exists, but (a) the novel structure of the proof, (b) the fact that the proof is carried out in a substructural metalogic with circular proofs, and (c) that it holds in an asynchronous semantics.

Underlying our development are the proof-theoretic roots of session types.
The recursion-free fragment of session types corresponds to sequent proofs in linear logic under a Curry-Howard interpretation of propositions as types, derivations as programs, and cut reduction as communication \cite{caires10concur,deyoung2016substructural}. In prior work \cite{derakhshan2019circular}, we extended the Curry-Howard interpretation of derivations in Fortier and Santocanale's singleton logic with circular proofs~\cite{Fortier13csl,santocanale2002calculus}. Under this interpretation, strong progress of a process follows from the cut elimination property of its underlying derivation. We showed that if a process satisfies a guard condition, its underlying derivation satisfies Fortier and Santocanale's condition. One shortcoming of building the guard condition upon the condition on its underlying logical derivation is that we cannot recognize some interesting programs as guarded, even though they enjoy the strong progress property.  As a corollary to the halting problem, no decidable condition can recognize all programs with the strong progress property \cite{derakhshan2019circular,deyoung2016substructural}. However, our long term objective is to capture more programs with this property as long as the algorithm is still effective, compositional, and predictable by the programmer. 

In this paper, we take a significant step toward this goal by formalizing strong progress as a predicate indexed by session types in a metalogic. We follow the approach of processes-as-formulas to provide an asynchronous semantics for session-typed processes and to carry out the proof of strong progress in the metalogic.  We use this proof technique as a case study of the guard criterion established in prior work, but also of how to prove properties of programming languages in a metalanguage with circular proofs. In addition, the theorem itself is also new since it deals with an \emph{asynchronous semantics}, which is generally more realistic and more appealing for concurrent programming than the prior synchronous one.
 
{\bf A New Metalogic.}  To carry out our argument in a metalogic  we need a calculus in which we can easily embed session-typed processes and define their operational behavior, which strongly suggests a linear metalogic.  Moreover, the formalization of strong progress inherits the need for using nested least and greatest fixed points from the session types that it is defined upon. Furthermore, we must be able to use these fixed points for proofs.  For these reasons, we decided to introduce a new metalogic: a calculus for intuitionistic linear logic with fixed points and infinitary proofs. In this logic, we can build an elegant derivation for the strong progress property of a process with clearly marked (simultaneous) inductive and coinductive steps. The resulting metalogical derivation is then bisimilar to the typing derivation for the process. This helps us to better understand the interplay between mutual inductive and coinductive steps in the proof of strong progress, and how they relate to the behavior of the program. 

In our first order linear metalogic, we allow circularity in derivations generalizing the approach of Brotherston et al. \cite{brotherston2005cyclic,brotherston2011JLC} by allowing both least and greatest fixed points.  However, we follow their  approach by allowing circular derivations instead of explicitly applying induction and coinduction principles. To ensure soundness of the proofs we impose a validity condition on our derivations. We introduce a cut elimination algorithm and prove its termination on valid derivations.

{\bf Contributions.} In summary, the main contributions of this paper are twofold. First, we introduce a new metalogic: an infinitary sequent calculus for first order multiplicative additive linear logic with (mutual) least and greatest fixed points (Section \ref{sec:metalogic}). We provide a validity condition on derivations of this logic to ensure the cut elimination property (Section~\ref{sec:validitycondition}). Second, we embed session-typed processes and their asynchronous semantics in this metalogic using a processes-as-formulas interpretation (Section~\ref{sec:procs}). We then define the strong progress property as a predicate with nested least and greatest fixed points. We prove strong progress of guarded programs by providing a syntactic proof for this predicate in our calculus and verifiying that this proof ensures strong progress of the underlying program (Section \ref{sec:stronprogress}).

{\bf Other Related Work.} Our metalogic and the validity condition imposed upon its derivations are a generalization of Fortier and Santocanale's singleton logic and their guard condition, respectively.  Baelde et al.~\cite{baelde2016infinitary,doumane2017infinitary} introduced a validity condition on the pre-proofs in multiplicative-additive linear logic with fixed points and proved cut-elimination for valid derivations. Our results, when restricted to the propositional fragment, differ from Baelde et al.'s in the treatment of intuitionistic linear implication ($\multimap$) versus its classical counterpart ($\bindnasrepma$). A stronger condition on the linear implication along with the fact that we are in an intuitionistic setting where exactly one formula is allowed as a succedent, allow us to adapt Fortier and Santocanale's cut elimination proof. This proof is essentially different from Baelde et. al's proof of cut elimination \cite{baelde2016infinitary,doumane2017infinitary}; in particular, we do not need to interpret the logical formulas in a classical truth semantics. 

Furthermore, our calculus is  essentially different from the finitary ones introduced by Baelde for the first order MALL with fixed points \cite{baelde2007least} and  $\text{Linc}^{-}$ \cite{tiu2012jal}, a first order logic with fixed points, since we allow for circularity. 


 Miller \cite{miller92welp} used the processes-as-formulas approach for expressing processes in the $\pi$-calculus as formulas in linear logic with non-logical constants. Revisions of this embedding to more expressive (finitary) extensions of linear logic \cite{horne2019MSCS,bruscoli02ICLP,tiu2010TOCL} are used to prove properties about processes, e.g. proofs of progress (deadlock-freedom) for circular multiparty sessions \cite{Horne20} and  bisimilarity for $\pi$-calculus processes \cite{tiu2010TOCL}.

The strong progress property for session typed processes is of similar nature to strong normalization in typed $\lambda$-calculi. In the setting of non-recursive session types, this property is reduced to termination of the computation and is often proved using logical relations \cite{perez12esop,deyoungsax}. The proofs of strong normalization for the simply typed $\lambda$-calculus, and termination for recursion-free processes both rely on an induction over the type structure and no longer apply after adding recursive types. In response, step-indexed logical relations \cite{appel2001indexed,ahmed2006step,ahmed2004semantics} have been developed to prove properties of typed calculi with recursive types. However,  neither strong normalization nor strong progress can be formalized as a logical relation indexed by the steps of computation as they are both associated to termination.
 
\section{First order intuitionistic linear logic with fixed points}\label{sec:metalogic}
In this section we introduce our metalogic: first order intuitionistic multiplicative additive linear logic with fixed points ({\small $\mathit{FIMALL}^{\infty}_{\mu,\nu}$}). The syntax and calculus of {\small $\mathit{FIMALL}^{\infty}_{\mu,\nu}$} is similar to the first order linear logic, but it is extended to handle predicates that are defined as mutual least and greatest fixed points.  The syntax of formulas follows the grammar
{\small\[
\begin{array}{lcl}
A & ::= & 1 \mid  0 \mid \top \mid A \otimes A \mid A \multimap A \mid A \oplus A \mid A\,\&\, A  \mid \exists x.\, A(x)  \mid \forall x.\, A(x) \mid s=t \mid T(\overline{t})
\end{array}
\]}%
where $s,t$ stand for terms and $x,y$ for term variables. We do not specify a grammar for terms; all terms are of the only type $U$. 
$T(\overline{t})$ is an instance of a predicate. A predicate can be defined using least and greatest fixed points in a \emph{signature} $\Sigma$.
{\small\[\Sigma ::= \cdot \mid \Sigma, T(\overline{x})=^{i}_{\mu} A \mid \Sigma, T(\overline{x})=^{i}_\nu A\]}%
The subscript $a$ of a fixed point $T(\overline{x})=^i_{a}A$ determines whether it is a least or greatest fixed point. If $a=\mu$, then predicate $T(\overline{x})$ is a least fixed point and inductively defined (e.g., the property of being a natural number) and if $a=\nu$ it is a greatest fixed point and coinductively defined (e.g., the lexicographic order on streams). Here we restrict $\Sigma$ to the definitions in which each fixed point predicate occurs only in covariant or contravariant positions, i.e. we do not allow mixed positions {\cite{pierce2002types}}.

The superscript $i\in \mathbb{N}$ is the relative priority of $T(\overline{x})$ in the signature $\Sigma$ with the condition that if $  T_1(\overline{x})=^{i}_{a} A, T_2(\overline{x})=^{i}_b B \in \Sigma$, then $a=b$. We say $T_1(\overline{x})=^{i}_{a} A$ has higher priority than $T_2(\overline{x})=^{j}_b B$ if $i<j$. The priorities determine the order by which the fixed-point equations in $\Sigma$ are solved {\cite{santocanale2002calculus}}. {Similar to prior work \cite{Fortier13csl,derakhshan2019circular} we use priority on predicates to define the validity condition on infinite derivations.}

\begin{example}\label{Nat-pred}
Let signature $\Sigma_1$ be
{\small\[
\begin{array}{lcl}
\mathtt{Stream}(x)&=^1_{\nu}& (\exists y. \exists z.(x = y \cdot z) \otimes\, \mathtt{Nat}(y) \otimes \mathtt{Stream}\, (z))\\
\mathtt{Nat}(x)&=^2_{\mu} &(\exists y. (x=\mathsf{s} y)\, \otimes \, \mathtt{Nat}(y))\, \oplus\, (x=\mathsf{z})\\
\end{array}
\]}%
where predicate $\mathtt{Nat}$ refers to the property of being a natural number, and predicate $\mathtt{Stream}$ refers to the property of being a stream of natural numbers. We interpret it as $\mathtt{Stream}$ having a higher priority relative to $\mathtt{Nat}$.
\end{example}


Derivations in {\small $\mathit{FIMALL}^{\infty}_{\mu,\nu}$} establish judgments of the form $\Gamma \vdash_{\Sigma} A$ where $\Gamma$ is an unordered list of formulas and $\Sigma$ is the signature. We omit $\Sigma$ from the judgments, since it never changes throughout a proof. The infinitary sequent calculus for this logic is given in Figure \ref{fig:rules-1}, in which we generalize $\oplus$ and $\&$ to be $n$-ary connectives $\oplus\{l_i:A_i\}_{i \in I}$ and $\&\{l_i:A_i\}_{i \in I}$. The binary disjunction and conjunction are defined as $A\oplus B=\oplus \{\pi_1:A,\pi_2:B\}$ and $A\& B=\& \{\pi_1:A,\pi_2:B\}$. Constants $0$ and $\top$ are defined as the nullary version of these connectives: $0=\oplus\{\}$ and $\top=\&\{\}$.
{\small\begin{figure*}[!t]
{\begin{equation*}
\hspace{-2.5cm}
\arraycolsep=0.1cm\begin{array}{cccc}
    \infer[\msc{Id}]{A \vdash A}{} &  \infer[\msc{Cut}]{\Gamma, \Gamma' \vdash C}{\Gamma \vdash A & \Gamma' , A \vdash C}  &
     \infer[1R]{\cdot \vdash 1}{} & \infer[1L]{\Gamma, 1 \vdash C}{ \Gamma \vdash C} \\[0.5em]
     \infer[\otimes R]{\Gamma, \Gamma' \vdash A_1 \otimes A_2}{\Gamma \vdash A_1 & \Gamma' \vdash A_2}&\infer[\otimes L]{\Gamma, A_1 \otimes A_2 \vdash B}{\Gamma, A_1, A_2\vdash B } &
      \infer[{\multimap} R]{\Gamma \vdash A_1 \multimap A_2}{\Gamma , A_1 \vdash A_2}&\infer[{\multimap} L]{\Gamma, \Gamma', A_1 \multimap A_2 \vdash B}{\Gamma \vdash A_1 & \Gamma', A_2 \vdash B}\\[0.5em]
     \infer[\oplus R]{\Gamma \vdash \oplus \{l_i: A_i\}_{i \in I}}{\Gamma \vdash A_k & k \in I}&\infer[\oplus L]{\Gamma, \oplus\{l_i:A_i\}_{i\in I} \vdash B}{\Gamma, A_i\vdash B & \forall i \in I} &
     \infer[\& R]{\Gamma \vdash \& \{l_i: A_i\}_{i \in I}}{\Gamma \vdash A_i & \forall i \in I}&\infer[\& L]{\Gamma, \&\{l_i:A_i\}_{i\in I} \vdash B}{\Gamma, A_k\vdash B & k \in I}\\[0.5em]
      \infer[\exists R]{\Gamma \vdash \exists x. P(x)}{\Gamma \vdash P(t)}&\infer[\exists L_x]{\Gamma, \exists x. P(x) \vdash B}{\Gamma, P(x)\vdash B & x\; \mbox{fresh}} & 
     \infer[\forall R_x]{\Gamma \vdash \forall x. P(x)}{\Gamma \vdash P(x) & x\; \mbox{fresh}}&\infer[\forall L]{\Gamma, \forall x. P(x) \vdash B}{\Gamma, P(t)\vdash B }\\[0.5em]
     \infer[\mu_{T} R]{\Gamma \vdash T(\overline{t})}{{\Gamma \vdash [\overline{t}/\overline{x}]A }&{T(\overline{x})=_{\mu} A} } &\infer[\mu_{T} L]{\Gamma, T(\overline{t}) \vdash B}{{\Gamma, [\overline{t}/\overline{x}]A\vdash B}&{ T(\overline{x})=_{\mu} A }}& 
      \infer[\nu_{T} R]{\Gamma \vdash T(\overline{t})}{{\Gamma \vdash [\overline{t}/\overline{x}]A}&{ T(\overline{x})=_{\nu}A} } & \infer[\nu_{T} L]{\Gamma, T(\overline{t}) \vdash B}{{\Gamma, [\overline{t}/\overline{x}]A\vdash B}&{ T(\overline{x})=_{\nu} A} }
      \end{array}
     \end{equation*}
      \vspace{-10pt}
      \[\arraycolsep=1.3cm\begin{array}{cccc}
      &\infer[{=}R]{\cdot \vdash s=s}{}& \infer[{=} L]{\Gamma, s=t \vdash B}{\Gamma[\theta] \vdash B[\theta] & \forall \theta \in \mathtt{mgu}(t,s) }&
      \end{array}\]
       \vspace{-16pt}
    \caption[Caption for LOF]{Infinitary calculus for first order linear logic with fixed points. (In the {=}L rule, the set $\mathtt{mgu}(t,s)$ is either empty, or a singleton set containing a most general unifier.)}
    \label{fig:rules-1}}
     \vspace{-16pt}
\end{figure*}}

{ An inference system characterizes the meaning of a formula as long as it satisfies the cut elimination property. The calculus in Figure~\ref{fig:rules-1} is infinitary, meaning that we allow building infinite derivations. Infinite derivations in our calculus do not necessarily enjoy the cut elimination property and thus are called \emph{pre-proofs} instead of \emph{proofs}. In Section~\ref{sec:validitycondition}, we introduce a validity condition on derivations to identify valid proofs among all pre-proofs. We prove that the cut elimination property holds for the derivations satisfying the condition.

The ability to build an infinite derivation is rooted in the fixed point rules. The least and greatest fixed point rules look similar as they unfold the definition of a predicate defined as fixed points in $\Sigma$. However, we will see in sections~\ref{sec:tower} and \ref{sec:validitycondition} that they each play a different role in determining validity of a derivation.}

A \emph{circular derivation} is the finite representation of an infinite one in which we can identify each open subgoal with an identical interior judgment. In the first order context we may need to use a substitution rule right before a circular edge to make the subgoal and interior judgment exactly identical \cite{brotherston2005cyclic}. See Appendix~\ref{app:logic} for the substitution rule and an how to transform a circular derivation into an infinite one. The definitions and proofs in this paper are based on the infinite system of Figure \ref{fig:rules-1}. But we present the derivations in a circular form.

 It may not be feasible to present a large piece of derivation fully in the calculus of Figure~\ref{fig:rules-1}. For the sake of brevity, we may represent predicates defined as least fixed points in the signature using pattern matching and build equivalent derivations based on that signature \cite{rosu2017matching,brotherston2005cyclic}. Whenever we use pattern matching, we make it clear how to transform the signature and derivations into our main logical system.

\section{An example - Tower of Hanoi}\label{sec:tower}
In this section, we provide an example to illustrate the applicability of our metalogic for formalizing the behavior and properties of recursive programs.



Tower of Hanoi problem consists of a source peg with a number of disks in ascending order of size. The goal is to transfer the entire stack of disks to a target peg using an auxiliary peg such that no larger disk moves onto a smaller one. Algorithm~\ref{alg:tower} shows a well-known recursive program for this puzzle. The input consists of the pegs, i.e. source, target, and auxiliary, and a stack of disks $I$ that needs to be moved in the current call. In each recursive call the size of the stack gets smaller ${\small(|I'|<|I|)}$ which makes the algorithm terminating.
\begin{small}
\begin{minipage}[c]{1.09\linewidth}
{\begin{algorithm}[H] \small
\SetAlgoLined 
{\bf def move(source, target, auxiliary, $I$)}:\\
    \If{$I =I'\,k$}
       {{\color{red}\bf move(source, auxiliary, target, $I'$)} {\small {\color{Brown} \it\#Move disks in $I'$ from source to auxiliary}\;
       {\bf\color{blue} target.push(source.pop())} {\small\color{Brown}\it \#Move disk k from source to target}\;
       {\color{OliveGreen}\bf move(auxiliary, target, source, $I'$)}
        {\small\color{Brown}\it \#Move the disks in $I'$ that we left on auxiliary onto target}\;
}}
\caption{Tower of Hanoi - move disks in $I$ from source to target using auxiliary.}
\vspace{-3pt}\label{alg:tower}
\end{algorithm}}
\end{minipage}
\end{small}
\begin{figure}[!t]
    \centering
{\small\[
\begin{array}{lll}
    \mathsf{move}(s,t,a,I,m)\, =^1_{\mu}  &\oplus\{\mathit{next}:\exists I'\, k.\, (I=I'\,k) & \otimes\,  {\color{red}\mathsf{move}(s,a,t,I',m)}\\ && \otimes\, {\color{blue}\mathsf{pop\_push}(s,t,k,I', m)} \\ && \otimes\, {\color{OliveGreen}\mathsf{move}(a,t,s,I',2^{\mid I' \mid}+m)},   \\
     &\;\;\;\;\mathit{done}: I=\epsilon\;\}
\end{array}\]}
{\small\[
\begin{array}{ll}
\mathsf{pop\_push}(s,t,k,I',m)=^1_{\mu}\forall L.\forall L'\,\,  &(\mathsf{count}(2^{|I'|}+m-1)  \otimes \mathsf{peg}(s,k\,L') \otimes \mathsf{peg}(t,L)) \multimap\\ &(\mathsf{count}(2^{|I'|}+m) \otimes \mathsf{peg}(s,L') \otimes\, \mathsf{peg}(t,k\,L))
\end{array}\]}
\vspace{-15pt}
    \caption{Interpretation of Algorithm~\ref{alg:tower} in {\small $\mathit{FIMALL}^{\infty}_{\mu,\nu}$}.}
    \label{fig:moveind} \vspace{-20pt}
\end{figure}
\indent In Figure~\ref{fig:moveind} we introduce a formula $\mathsf{move}(s,t,a,I,m)$ to interpret the recursive algorithm above in {\small $\mathit{FIMALL}^{\infty}_{\mu,\nu}$}. The formula is defined as a least fixed point with direct correspondence to the steps in Algorithm~\ref{alg:tower}. The predicate $\mathsf{peg}(s,L)$ presents the configuration of peg $s$ with stack $L$ on it.  Predicate $\color{blue}{\mathsf{pop\_push}(s,t,k,I',m)}$ formalizes the pop-push step (the 4th line) in Algorithm~\ref{alg:tower} using the \emph{linear implication}: it \emph{replaces} two configurations  $\mathsf{peg}(s,k\,L')$ and $\mathsf{peg}(t,L)$ with new configurations $\mathsf{peg}(s,L')$ and $\mathsf{peg}(t,k\,L)$. 

{\small $\mathit{FIMALL}^{\infty}_{\mu,\nu}$} is not an ordered logic. There is no a priori ordering on the application of pop-push predicates called (recursively) by $\mathsf{move}$. We use an extra argument $m$ to sort pop-push predicates in the order of instructions in Algorithm~\ref{alg:tower}. We rely on the fact that moving stack $I$ requires ${\small 2^{|I|}{-}1}$ instructions. Predicate $\mathsf{count}(m)$ ensures that pop-push steps are applied in order: its argument is incremented after applying the current pop-push predicate and becomes a pointer to the next one. 


The correctness of the algorithm is stated as: $\mathsf{move}(s,t,a, I_s,0)$ on an initial configuration $\mathsf{peg}(s, I_s)$, $\mathsf{peg}(a, \epsilon)$, and $\mathsf{peg}(t, \epsilon)$ with $\mathsf{count}(0)$, results in the final configuration
$\mathsf{peg}(s, \epsilon)$, $\mathsf{peg}(a, \epsilon)$, and $\mathsf{peg}(t, I_s)$ with $\mathsf{count}(2^{|I_s|}-1)$. We prove a generalization of this property:
{\small\begin{equation*} \hspace{-1cm} \dagger\,\mathsf{peg}(s,I\,L_{s}),\mathsf{peg}(t,L_t), \mathsf{peg}(a,L_a), \mathsf{count}(m), \mathsf{move}(s,t,a,L_s,m)\vdash \mathsf{peg}(s,L_{s})\otimes\,\mathsf{peg}(t,I\,L_t)\otimes\, \mathsf{peg}(a,L_a) \otimes\, \mathsf{count}(2^{|I|}+m-1)
\end{equation*}}
\begin{figure}[!t]
{\footnotesize
\begin{equation*}
\infer[\color{Brown}{\mu L}]{\dagger \,\mathsf{p}(s,I\,L_{s}),\mathsf{p}(t,L_t), \mathsf{p}(a,L_a),\mathsf{c}(m), {\color{Plum}\mathsf{m}(s,t,a,I,m)}\vdash \mathsf{p}(s,L_{s})\otimes\mathsf{p}(t,I\,L_t)\otimes \mathsf{p}(a,L_a)\otimes \mathsf{c}(2^{|I|}+m-1)} {\infer*{}{\deduce{{\color{red}\dagger_1} \,\mathsf{p}(s,I'\,k\,L_{s}),\mathsf{p}(t,L_t), \mathsf{p}(a,L_a), \mathsf{c}(m),   {\color{Plum}\mathsf{m}(s,a,t,I',m)} \vdash  \mathsf{p}(s,k\,L_{s})\otimes\mathsf{p}(t,L_t)\otimes \mathsf{p}(a,I'\,L_a) \otimes \mathsf{c}(2^{|I'|}{+}m{-}1)}{ {\color{OliveGreen}\dagger_2}\, \mathsf{p}(s,L_{s})\otimes\mathsf{p}(t,k\,L_t)\otimes \mathsf{p}(a,I'\,L_a)\otimes \mathsf{c}(2^{|I'|}{+}m),   {\color{Plum}\mathsf{m}(a,t,s,I',m)} \vdash  \mathsf{p}(s,L_{s})\otimes\mathsf{p}(t,I'\,k\,L_t)\otimes \mathsf{p}(a,L_a)\otimes \mathsf{c}(2^{|I'k|}{+}m{-}1)}}}   
\end{equation*}
}
\vspace{-0.9cm}
 \caption{A circular derivation for correctness of Algorithm~\ref{alg:tower}.(See Appendix~\ref{app:hanoi} for full derivation.)}
\label{fig:tower}
\vspace{-0.7cm}
\end{figure}
Using a few obvious logical rules, judgment $\dagger$ can be deduced from judgments ${\color{red}\dagger_1}$ and ${\color{OliveGreen}\dagger_2}$ (Figure~\ref{fig:tower}). We can identify ${\color{red}\dagger_1}$ and ${\color{OliveGreen}\dagger_2}$ with $\dagger$ after the proper substitution and form two {\bf cycles in the derivation}. 
 Predicates ${\color{Plum}\mathsf{move}}$ in the antecedents of judgments ${\color{red}\dagger_1}$ and ${\color{OliveGreen}\dagger_2}$ are both logically derived from ${\color{Plum} \mathsf{move}}$ in $\dagger$. On the thread of rules that transforms this predicate, there is an application of ${\color{Brown} \mu L}$. The ${\color{Brown} \mu L}$ on this thread corresponds to {\color{Brown}reducing an inductive input} before applying the inductive hypothesis; it ensures validity of both cycles.

Adding greatest fixed points to the signature mandates a more complex validity condition. As an example of such signature consider an infinitary variant of Tower of Hanoi: after the original stack (constant $I_s$) is completely transferred from  the source to the target, the program can choose to either terminate or reset the configuration and solve the puzzle again. 

We formalize a recursive solution to the infinitary variant in Figure~\ref{fig:moveco} with a minimal change to the predicate {$\mathsf{move}$}: at the end of each recursive call we add {${\color{Orange} \mathsf{start}(t,s,a,I,m)}$}. Predicate ${\color{Orange}\mathsf{start}}$ either terminates (label $\mathit{term}$) or asserts that the original stack has been transferred to the target ({$I=I_s$}), resets the configuration, and solves the puzzle again (label $\mathit{reset}$).
This solution is not necessarily terminating anymore, but it is {\bf productive}: ${\color{Orange}\mathsf{start}}$ calls $\mathsf{move}$ recursively only when it can assert that $I_s$ has been completely transferred from $s$ to $t$ once more. We define predicate ${\color{Orange}\mathsf{start}}$ as a \emph{greatest fixed point} and give it a \emph{higher priority} than $\mathsf{move}$. The higher priority of the greatest fixed point ${\color{Orange}\mathsf{start}}$ ensures that the overall solution is \emph{coinductive} rather than inductive.

The property asserted by $\dagger$ does not hold for the productive definition of $\mathsf{move}$ in Figure~\ref{fig:moveco}: the program has a possibility to run forever and never reach a \emph{final} configuration. However, we can build a circular derivation for $\dagger$ as sketched in Figure~\ref{fig:cotower}. This derivation should not and does not amount to a valid proof in {\small $\mathit{FIMALL}^{\infty}_{\mu,\nu}$}. There is a third {\bf cycle} in Figure~\ref{fig:cotower} formed from $\dagger_4$ to $\dagger$. This cycle does not correspond to a valid application of an inductive step. Predicate {\color{Plum}$\mathsf{move}$} in ${\dagger_4}$ is derived from ${\color{Plum}\mathsf{move}}$ in $\dagger$ as highlighted in purple. But the purple thread connecting these two predicates contains both $\mu L$ and $\nu L$ rules, with $\nu L$ having a higher priority. The $\nu L$ application with a higher priority cancels out the effect of the $\mu L$ rule on the cycle. The $\mu L$ rule cannot be used to justify validity of an inductive step anymore. In the next section, we provide a machinery to identify valid circular proofs based on this discussion. 
%


\begin{figure}[!t]
    \centering
    {\small\[
\begin{array}{lll}
    \mathsf{move}(s,t,a,I,m)\, =^2_{\mu}  &\oplus\{\mathit{next}:\exists I'\, k.\, (I=I'\,k) & \otimes\,  {\color{red}\mathsf{move}(s,a,t,I',m)}\\ && \otimes\, {\color{blue}\mathsf{pop\_push}(s,t,k,I', m)} \\ && \otimes\, {\color{OliveGreen}\mathsf{move}(a,t,s,I',2^{\mid I' \mid}{+}m)}\\
    && \otimes \, {\color{Orange} \mathsf{start}(t,s,a,I,2^{\mid I \mid}{+}m{-}1)}, \\
     &\;\;\;\;\mathit{done}: I=\epsilon\;\}\\
{\mathsf{start}(t,s,a,I,m)}=^1_{\nu}& \oplus\{\mathit{restart}: I=I_s \otimes\, \forall L,L'& (\mathsf{peg}(s, L) \otimes \mathsf{peg}(t,I_s\,L')\otimes\mathsf{count}(m)) \multimap\\&& \;\;(\mathsf{peg}(s, I_s\,L) \otimes \mathsf{peg}(t,L') \otimes \, \mathsf{count}(0)\,\otimes \, \mathsf{move}(s,t,a,I_s,0)),\\
&\;\;\;\;\mathit{term}:1\;\}
\end{array}
\]}\vspace{-0.7cm}
    \caption{Interpretation of a productive solution for infinitary Tower of Hanoi.}
    \vspace{-0.5cm}
    \label{fig:moveco}
\end{figure}

\begin{figure}[!t]
    \centering
{\footnotesize
\[\hspace{-1.5cm}
\infer[\mu L]{\dagger}{\infer*{}{\infer[]{}{\dagger_1 & \dagger_2 &\infer[\nu L]{{\color{Orange}\dagger_3}\, \mathsf{p}(s,L_{s})\otimes\mathsf{p}(t,I'\,k\,L_t)\otimes \mathsf{p}(a, L_a)\otimes \mathsf{c}(2^{|I_s|}{+}m{-}1),   {\color{Plum}\mathsf{s}(t,s,a,I,2^{|I'k|}{+}m{-}1)} \vdash  \mathsf{p}(s,L_{s})\otimes\mathsf{p}(t,I'\,k\,L_t)\otimes \mathsf{p}(a,L_a)\otimes \mathsf{c}(2^{|I'k|}{+}m{-}1)
}{\infer*{}{\dagger_4\,\mathsf{p}(a, L_a),{\color{Plum} \mathsf{p}(s,I_s\, L), \mathsf{p}(t,L'), \mathsf{c}(0), \mathsf{m}(s,t,a,I_s,0)} \vdash  \mathsf{p}(s,L_{s})\otimes\mathsf{p}(t,I_s\,L_t)\otimes \mathsf{p}(a,L_a)\otimes \mathsf{c}(2^{|I_s|}{+}m{-}1)}}}}}
\]
}

    \vspace{-1cm}
    \caption{An invalid circular derivation for correctness of infinitary Tower of Hanoi solution }
    \vspace{-20pt}
    \label{fig:cotower}
\end{figure}
\section{A validity condition on first order derivations} \label{sec:validitycondition}

In Section~\ref{sec:metalogic}, we introduced an infinitary calculus for first order linear logic with fixed points.
In Section \ref{sec:tower}, we saw that not all infinite derivations amount to a valid proof. This section establishes a concept of validity with respect to cut elimination. As usual, cut elimination for valid derivations ensures consistency: it implies that there is no proof for {\small$\cdot \vdash 0$} in our calculus.


To establish a validity condition we need to keep track of behavior of any particular formula throughout the whole derivation.  We annotate formulas with position variables $\mathbf{x},\mathbf{y},\mathbf{z}$ and track their generations $\alpha, \beta$. Different generations of the same variable allow us to capture the progress of a given formula which is made when an unfolding message is applied on it. We picked generational position variables to track formulas over alternatives, e.g. Baelde et al.'s (pre)formula occurrences, since they resemble channels in session-typed processes \cite{derakhshan2019circular}. We will use this analogy in the proof of strong progress property (Section~\ref{sec:stronprogress}). 

Figure \ref{fig:rules-2} shows the calculus annotated with position variable generations and their relations. A basic judgment in the annotated calculus is of the form {\small$\Delta \vdash_{\Omega} \mathbf{z}^\beta{:}C$} where $\Delta$ is an unordered list of formulas annotated with (unique) generational position variables, i.e. its elements are of the form {\small$\mathbf{x}^\alpha{:}A$}. The set $\Omega$ keeps the relation between different generation of position variables in a derivation for each priority.


The relation of a new generation to its priors is determined by the role of the rule that introduces it in (co)induction. The $\mu L$ rule breaks down an inductive antecedent and $\nu R$  produces a coinductive succedent. They both take a step toward termination/productivity of the proof: we put the new generation of the position variable they introduce to be less than the prior ones in the given priority.
Their counterpart rules $\nu L$ and $\mu R$, however, do not contribute to termination/productivity. They break the relation between the new generation and its prior ones for the given priority. 

In the $\mu L$ rule, for example, we add the relation $\mathbf{y}^{\alpha+1}_i<\mathbf{y}^{\alpha}_i$ to $\Omega'$. It is interpreted as the new generation  $\mathbf{y}^{\alpha+1}$ is less than its prior generation on priority $i$. For the other priorities $j\neq i$ we keep $\mathbf{y}^{\alpha+1}_j=\mathbf{y}^{\alpha}_j$.

In the {\small$\msc{Cut}$} rule we introduce a fresh position variable annotated with a generation, {\small$\mathbf{w}^\eta$}. Since it refers to a new formula, we designate it to be incomparable to other position variables. We consider {\small$\mathbf{w}^\eta$} as a continuation of {\small$\mathbf{z}^\beta$} in the rule $\otimes R$ similar to its counterpart in the guard condition for propositional {\small$\mu$-$\mathit{MALL}^{\infty}$} \cite{baelde2016infinitary}. Similarly, we keep the relation of  {\small$\mathbf{y}^\alpha$} with its continuation {\small$\mathbf{w}^\eta$} in $\Omega$ for the $\otimes L$ rule. The fresh position variable $\mathbf{w}^\eta$  introduced in $\multimap R$ (resp. $\multimap L$) rule switches its polarity from right to left (resp. left to right) so it cannot be equal to $\mathbf{z}^\beta$ (resp. $\mathbf{y}^\alpha$).
 As a result, our condition on $\multimap$ is more restrictive than its classical counterpart $\bindnasrepma$ in \cite{baelde2016infinitary}.

\begin{figure}[t!]
    \centering
   \begin{minipage}[c]{1.09\linewidth} {\small
 \begin{adjustwidth}{-1.5cm}{0cm}
 \begin{equation*}
     \begin{array}{cccc} 
       \infer[\msc{Id}]{ \mathbf{x}^\alpha:A \vdash_{\Omega} \mathbf{z}^\beta:A}{} &  \infer[\msc{Cut}]{\Delta, \Delta' \vdash_{\Omega} \mathbf{z}^\beta:C}{\Delta \vdash_{\Omega}  \mathbf{w}^\eta:A & \Delta', \mathbf{w}^\eta:A \vdash_{\Omega} \mathbf{z}^\beta:C} &
     \infer[1R]{\cdot \vdash_{\Omega}  \mathbf{z}^\beta:1}{} & \infer[1L]{\Delta, \mathbf{y}^\alpha:1 \vdash_{\Omega} \mathbf{z}^\beta:C}{ \Delta\vdash_{\Omega}\mathbf{z}^\beta:C}
     \end{array}
 \end{equation*}
 \end{adjustwidth}
 \vspace{-15pt}
  \begin{adjustwidth}{-1cm}{0cm}
 \begin{equation*}
     \arraycolsep=0.6cm\begin{array}{cc}
      \infer[\otimes R]{\Delta, \Delta' \vdash_{\Omega} \mathbf{z}^\beta:A_1 \otimes A_2}{\Delta \vdash_{\Omega\cup\{\mathbf{w}^\eta=\mathbf{z}^\beta\}} \mathbf{w}^\eta:A_1 & \Delta' \vdash_{\Omega} \mathbf{z}^\beta:A_2}&\infer[\otimes L]{\Delta, \mathbf{y}^\alpha:A_1 \otimes A_2 \vdash_{\Omega}\mathbf{z}^\beta:B}{\Delta, \mathbf{w}^\eta:A_1, \mathbf{y}^\alpha:A_2\vdash_{\Omega\cup\{\mathbf{w}^\eta=\mathbf{y}^\alpha\}} \mathbf{z}^\beta:B }\\[0.75em]
      \infer[\multimap R]{\Delta \vdash_{\Omega} \mathbf{z}^\beta:A_1 \multimap A_2}{\Delta, \mathbf{w}^\eta:A_1 \vdash_{\Omega}  \mathbf{z}^\beta:A_2}&\infer[\multimap L]{\Delta, \Delta', \mathbf{y}^\alpha:A_1 \multimap A_2 \vdash_{\Omega} \mathbf{z}^\beta:B}{\Delta \vdash_{\Omega}\mathbf{w}^\eta:A_1 & \Delta', \mathbf{y}^\alpha:A_2 \vdash_{\Omega} \mathbf{z}^\beta:B}\end{array}
 \end{equation*}
 \end{adjustwidth}
  \vspace{-10pt}
     \begin{equation*}
     \hspace{-2.9cm}
    \arraycolsep=0.1cm\begin{array}{cccc}
     \infer[\oplus R]{\Delta \vdash_{\Omega}  \mathbf{z}^\beta:\oplus \{l_i: A_i\}_{i \in I}}{{\Delta \vdash_{\Omega}\mathbf{z}^{\beta}:A_k}&{k \in I}}&\infer[\oplus L]{\Delta, \mathbf{y}^\alpha:\oplus\{l_i:A_i\}_{i\in I} \vdash_{\Omega} \mathbf{z}^\beta{:}B}{\Delta, \mathbf{y}^\alpha:A_i\vdash_{\Omega}  \mathbf{z}^\beta:B & \forall i \in I}&
     \infer[\& R]{\Delta \vdash_{\Omega}  \mathbf{z}^\beta:\& \{l_i: A_i\}_{i \in I}}{\Delta \vdash_{\Omega}  \mathbf{z}^\beta{:} A_i & \forall i \in I}&\infer[\& L]{\Delta, \mathbf{y}^\alpha:\&\{l_i:A_i\}_{i\in I} \vdash_{\Omega} \mathbf{z}^\beta: B}{{\Delta, \mathbf{y}^\alpha:A_k\vdash_{\Omega} \mathbf{z}^\beta:B}&{k \in I}}\\[0.75em] 
      \infer[\exists R]{\Delta \vdash_{\Omega} \mathbf{z}^\beta:\exists x. P(x)}{\Delta \vdash_{\Omega} \mathbf{z}^\beta:P(t)}& \infer[\exists L_x]{\Delta, \mathbf{y}^\alpha:\exists x. P(x) \vdash_{\Omega} \mathbf{z}^\beta:B}{\Delta, \mathbf{y}^\alpha:P(x) \vdash_{\Omega} \mathbf{z}^\beta:B & x\; \mbox{fresh}}&
     \infer[\forall R_x]{\Delta \vdash_{\Omega}\mathbf{z}^\beta:\forall x. P(x)}{\Delta \vdash_{\Omega} P ::\mathbf{z}^\beta:P(x) & x\; \mbox{fresh}} & \infer[\forall L]{\Delta, \mathbf{y}^\alpha:\forall x. P(x) \vdash_{\Omega} \mathbf{z}^\beta:B}{\Delta, \mathbf{y}^\alpha:P(t)\vdash_{\Omega} \mathbf{z}^\beta:B }
     \end{array}
     \end{equation*}
     \[\begin{array}{cc}
     \infer[\mu R]{\Delta \vdash_{\Omega} \mathbf{z}^\beta:T(\overline{t})}{\deduce{\Delta \vdash_{\Omega'}\mathbf{z}^{\beta+1}:[\overline{t}/\overline{x}]A \hspace{0.5cm} T(\overline{x})=^j_{\mu} A}{\Omega'= \Omega \cup \{\mathbf{z}^{\beta+1}_{i}= \mathbf{z}^{\beta}_{i}\mid i\neq j\}} } & \infer[\mu L]{\Delta, \mathbf{y}^\alpha:T(\overline{t}) \vdash_{\Omega} \mathbf{z}^\beta:B}{\deduce{\Delta, \mathbf{y}^{\alpha+1}: [\overline{t}/\overline{x}]A\vdash_{\Omega'}\mathbf{z}^\beta:B \hspace{1.4cm} T(\overline{x})=^j_{\mu} A}{ \Omega'= \Omega \cup \{\mathbf{y}^{\alpha+1}_{i}= \mathbf{y}^{\alpha}_{i}\mid i\neq j\} \cup \{\mathbf{y}^{\alpha+1}_{j}<\mathbf{y}^{\alpha}_{j}\}}} \\[0.55em]
     \infer[\nu R]{\Delta \vdash_{\Omega} \mathbf{z}^\beta:T(\overline{t})}{\deduce{\Delta \vdash_{\Omega'} \mathbf{z}^{\beta+1}:[\overline{t}/\overline{x}]A \hspace{2.1cm} T(\overline{x})=^j_{\nu} A}{  
      \Omega'= \Omega \cup \{\mathbf{z}^{\beta+1}_{i}= \mathbf{z}^{\beta}_{i}\mid i\neq j\}\cup \{\mathbf{z}^{\beta+1}_{j}<\mathbf{z}^{\beta}_{j}\}}} &  \infer[\nu L]{\Delta, \mathbf{y}^\alpha:T(\overline{t}) \vdash_{\Omega} \mathbf{z}^\beta:B}{\deduce{\Delta, \mathbf{y}^{\alpha+1}:[\overline{t}/\overline{x}]A\vdash_{\Omega'} \mathbf{z}^\beta:B \hspace{0.5cm} T(\overline{x})=^j_{\nu} A}{\Omega'= \Omega \cup \{\mathbf{y}^{\alpha+1}_{i}= \mathbf{y}^{\alpha}_{i}\mid i\neq j\}} }\\[0.35em]
     \infer[= R]{\cdot \vdash_{\Omega} \mathbf{z}^\beta:(s=s)}{}&\infer[= L]{\Delta, \mathbf{y}^\alpha:(s=t) \vdash_{\Omega} \mathbf{z}^\beta:B}{\Delta[\theta]\vdash_{\Omega}  \mathbf{z}^\beta: B[\theta] & \forall \theta \in \mathtt{mgu}(t,s) }
     \end{array}\]}
     \end{minipage}
      \vspace{-15pt}
    \caption{Infinitary calculus annotated with position variables and their generations.}
    \label{fig:rules-2}
    \vspace{-20pt}
\end{figure}

Unlike the existing validity conditions for infinitary calculi defined only over least fixed points \cite{brotherston2005cyclic,brotherston2011JLC,sprenger2003structure, Das2018csl}, priorities are essential in a setting with nested least and greatest fixed points. Here both inductive and coinductive predicates may be unfolded infinitely often along the left and right sides of a branch, but \emph{only the one with the highest priority shall be used to ensure validity of it}. We, similar to Fortier and Santocanale~\cite{Fortier13csl,santocanale2002calculus}, have the flexibility to assign a priority to each predicate variable. In the validity conditions introduced by Dax et al.~\cite{dax2006lineartime} and Baelde et al.~\cite{baelde2016infinitary} the syntactic subformula ordering determines the priorities.


To sort fixed point unfolding rules applied on previous generations of position variable $\mathbf{x}^\alpha$ by their priorities we use a snapshot of $\mathbf{x}^\alpha$. For a given signature $\Sigma$, \emph{snapshot} of an annotated position variable $\mathbf{x}^\alpha$ is a list $\mathsf{snap}(\mathbf{x}^\alpha)=[\mathbf{x}^\alpha_i]_{i \le n}$, where  $n$ is the maximum priority in $\Sigma$. Having the relation between annotated position variables in $\Omega$, we can define a partial order on  snapshots of annotated position variables. We write\,
{\small$\mathsf{snap}(\mathbf{x}^\alpha)=[\mathbf{x}^\alpha_1\cdots \mathbf{x}^\alpha_n]<_{\Omega}[\mathbf{z}^\beta_1\cdots \mathbf{z}^\beta_n]= \mathsf{snap}(\mathbf{z}^\beta)$}%
if the list {$[\mathbf{x}^\alpha_1\cdots \mathbf{x}^\alpha_n]$} is less than {$[\mathbf{z}^\beta_1\cdots \mathbf{z}^\beta_n]$} by the lexicographic order defined by the transitive closure of the relations in $\Omega$.
We adapt the definitions of left $\mu$-trace and right $\nu$-trace from Fortier and Santocanale to our setting.
\begin{definition}\label{def:mu}
{\small An infinite branch of a derivation is a \emph{left $\mu$-trace} if for infinitely many position variables $\mathbf{x1}^{\alpha_1}, \mathbf{x2}^{\alpha_2}, \cdots$ appearing as antecedents of judgments $\mathbf{xi}^{\alpha_i}: A_i, \Delta_i \vdash_{\Omega_i} \mathbf{w}^{\beta}:C_i,$ 
we can form an infinite chain of inequalities 
{\small$
\mathsf{snap}(\mathbf{x1}^{\alpha_1})>_{\Omega_2}\mathsf{snap}(\mathbf{x2}^{\alpha_2})>_{\Omega_3}\cdots
$}.

Dually, an infinite branch of a derivation is a \emph{right $\nu$-trace} if for infinitely many position variables $\mathbf{y1}^{\beta_1}, \mathbf{y2}^{\beta_2}, \cdots$ appearing as the succedents of judgments $\Delta_i \vdash_{\Omega_i} \mathbf{yi}^{\beta_i}:C_i$ in the branch
we can form an infinite chain of inequalities 
{\small$
\mathsf{snap}(\mathbf{y1}^{\beta_1})>_{\Omega_2}\mathsf{snap}(\mathbf{y2}^{\beta_2})>_{\Omega_3}\cdots.
$} }
\end{definition}

\begin{definition}[Validity condition for infinite derivations] An infinite derivation is a \emph{valid proof} if each of its infinite branches is either a left $\mu$-trace or a right $\nu$-trace. A circular derivation is a \emph{proof} if it has a valid underlying infinite derivation.
\end{definition}
 We introduce a cut elimination algorithm for infinite pre-proofs in {\small $\mathit{FIMALL}^{\infty}_{\mu,\nu}$}. We prove that this algorithm is productive for valid derivations: the algorithm receives an infinite valid proof as an input and outputs a cut-free infinite valid proof productively. An algorithm is productive if every piece of its output is generated in a finite number of steps. We present the algorithm and its proof of productivity in Appendix \ref{app:cut}. 
 
\begin{theorem}
A valid (infinite) derivation enjoys the cut elimination property. 
\end{theorem}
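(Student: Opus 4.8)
The plan is to prove the theorem by exhibiting a \emph{productive} cut-elimination algorithm and showing that, on valid inputs, it never stalls. Following Fortier and Santocanale, but exploiting the single-succedent intuitionistic setting (which avoids the need for any truth-value interpretation of formulas), I would first define the local cut reductions. These come in two kinds: \emph{principal} reductions, in which the cut formula is introduced by the last rule on both premises, and \emph{commutative} reductions, in which a cut is permuted upward past a rule acting on a side formula. For the fixed-point connectives, the principal reduction against a $\mu R$/$\mu L$ pair (resp.\ $\nu R$/$\nu L$ pair) unfolds both occurrences and re-forms a cut on the body $[\overline{t}/\overline{x}]A$; the point is that exactly these unfoldings are the ones that decrement a snapshot coordinate in $\Omega$, namely $\mu L$ on the left and $\nu R$ on the right.

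The second step is to verify that each reduction preserves the validity condition: applied to a valid pre-proof it yields a valid pre-proof. This is a local argument that tracks how position variables, their generations, and the relation set $\Omega$ are transported into the reduct, and checks that every infinite branch of the reduct is still a left $\mu$-trace or a right $\nu$-trace. The delicate part here is the commutative reductions, since they redistribute the position-variable annotations, and one must confirm that the $<_{\Omega}$ inequalities witnessing a trace survive the permutation.

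The heart of the proof is productivity. The algorithm emits a cut-free rule at the root by repeatedly reducing the topmost active cut; I would argue that if it failed to be productive, it would run an infinite sequence of reductions without ever committing to a root rule. Such a sequence singles out a single infinite branch of the original derivation—the ``principal thread'' along which the cut is perpetually chased upward. By the validity condition this branch is a left $\mu$-trace or a right $\nu$-trace, so there is an infinite, strictly $<_{\Omega}$-descending chain of snapshots along it. Since a snapshot is a list indexed by the finitely many priorities of $\Sigma$ and snapshots are compared lexicographically, the coordinate at the highest priority that changes infinitely often must strictly decrease infinitely often in a well-founded order, a contradiction. Hence the algorithm is productive, and the limit of its output is a cut-free valid proof of the same sequent.

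I expect the main obstacle to be making this productivity argument fully precise: one must show that a nonproductive reduction sequence genuinely traces one infinite branch of the input derivation rather than drifting across incomparable threads, and that along that thread the \emph{highest-priority} fixed point is necessarily unfolded in the progress direction ($\mu L$ on the left, $\nu R$ on the right) infinitely often, so that it—and not a lower-priority or counteracting $\nu L$/$\mu R$ unfolding—is the one supplying the descent. The bookkeeping of how generations and snapshots are carried across commutative reductions, together with the well-foundedness at the dominant priority, is exactly where the priority machinery of $\Omega$ does the essential work.
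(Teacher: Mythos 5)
There is a genuine gap, and it lies at the heart of your productivity argument. You argue: a non-productive run singles out an infinite branch of the input derivation; by validity this branch carries an infinite strictly $<_\Omega$-descending chain of snapshots; and this contradicts well-foundedness. But this is backwards. An infinite $<_\Omega$-descending chain of snapshots is not a contradiction --- it is precisely the \emph{definition} of a left $\mu$-trace or right $\nu$-trace, and every infinite branch of a valid proof has one. The generations are formal symbols and $\Omega$ records formal relations ($\mathbf{y}^{\alpha+1}_i < \mathbf{y}^\alpha_i$); there is no well-founded semantic domain in which these chains must terminate, so no contradiction can be extracted from a single branch in isolation. A valid derivation (e.g.\ an honest proof by induction) has such branches, yet cut elimination against it terminates; what makes it terminate is not anything visible on that one branch, but the interaction \emph{through the cut}: the $\mu L$ unfoldings on a left $\mu$-trace must be matched by $\mu R$ unfoldings on the right of a neighbouring premise, and a branch with infinitely many $\mu R$'s (and no dominating $\nu R$'s) cannot be a right $\nu$-trace, so by validity \emph{that} premise must itself carry a left $\mu$-trace, pushing the obligation further left. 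The contradiction in the paper's proof (Lemma 3 in Appendix C) is obtained from this cross-premise propagation, not from well-foundedness of snapshot values: one shows (i) a maximal infinite branch of the reduction trace is a $\mu$-branch, (iii) every $\mu$-branch has a $\mu$-branch strictly to its left, and (ii) the branch order is chain-complete so a maximal decreasing chain of $\mu$-branches has a minimum --- and (i)--(iii) are jointly inconsistent.

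A second, related defect is that the ``principal thread'' you rely on does not exist. Principal reductions consume rules from \emph{both} premises of a cut, and in this calculus $\otimes R$ and $\multimap L$ additionally split the right-hand side, so a non-terminating reduction corresponds to an infinite, finitely-branching \emph{tree} of interleaved threads drawn from many subderivations, not to one branch of the input. This is why the paper (following Fortier--Santocanale) replaces binary cuts by $n$-ary cuts on a ``branching tape'' --- with binary cuts and commutative reductions alone, the lazy strategy loops when two consecutive cuts must be permuted --- and why it must distinguish the incomparable threads created by $\otimes R$ from the ordered ones created by cut and $\multimap L$, obtaining a chain-complete partial order of branches rather than Fortier--Santocanale's complete lattice. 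Your flagged ``obstacle'' (that the sequence might drift across incomparable threads) is thus not a technicality to be patched but the actual shape of the problem: the argument has to be global over all threads of the trace, and the well-foundedness you appeal to has to be located in the lexicographic order on those threads, not in the snapshots themselves. Your step 2 (local preservation of validity under reductions) is sound in spirit and does correspond to the part of the paper's proof showing the output derivation is valid.
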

\begin{proof}
 Consider a derivation given in the system of Figure~\ref{fig:rules-1}. We first annotate the derivation to get one in the system of Figure \ref{fig:rules-2}. This can be done productively: we start by  annotating the root with arbitrary generational position variables, and continue by replacing the last rule with its annotated version. In the appendix, we prove a lemma which states termination of the principal reductions of the algorithm  (Lemma \ref{thm:main}).  {Fortier and Santocanale's proof of a similar lemma for singleton logic is based on an observation that a $\nu$-trace on a valid derivation tree has only a limited number of branches on its right side created by the cut rule.  In our calculus  $\otimes R$ and $\multimap L$ rules also create branches. We introduce a fresh position variable $\mathbf{w}^\eta$ as a succedent in the branching rules. However, in the branches created by $\otimes R$ we keep the relation between the fresh variable $\mathbf{w}^\eta$ with its parent $\mathbf{z}^\beta$. As a result there may be an infinite $\nu$-trace with infinitely many branches on its right created by the $\otimes R$ rule. To take advantage of a similar observation to Fortier and Santocanale, we distinguish between branching on fresh succedent position variables created by a cut or a $\multimap L$ rule, and a $\otimes R$ rule. After this distinction, our cut elimination algorithm creates a trace which is a chain complete partially ordered set, rather than a complete lattice in Fortier and Santocanale's proof. We show that having a chain complete partially ordered set is enough for proving the lemma.}  As a corollary to this lemma our cut elimination algorithm produces a potentially infinite cut-free proof for the annotated derivation. We further prove that the output of the cut elimination algorithm is valid.  By simply ignoring the annotations of the output, we get a cut free valid derivation in the calculus of Figure \ref{fig:rules-1}. 
\end{proof}
\section{Session-typed processes}\label{sec:procs}
We define {\it session types} with grammar
{\small$\mathtt{A}::= {\oplus}\{\ell : \mathtt{A}_\ell\}_{\ell \in L} \mid {\&}\{\ell : \mathtt{A}_\ell\}_{\ell \in L} \mid 1 \mid \mathtt{t},$}
where $L$ ranges over finite sets of labels denoted by $\ell$ and $k$. $\mathtt{t}$ stands for type variables whose definition is given in a signature $\mathbf{\Sigma}$. The signature $\mathbf{\Sigma}$ is defined as {\small\[\mathbf{\Sigma} ::= \cdot \mid \mathbf{\Sigma}, \mathtt{t}=^{i}_{\mu} \mathtt{A} \mid \mathbf{\Sigma}, \mathtt{t}=^{i}_\nu \mathtt{A},\]} \noindent with the condition that if $  \mathtt{t}=^{i}_{a} \mathtt{A} \in \mathbf{\Sigma}$ and $\mathtt{t}=^{i}_b \mathtt{B} \in \mathbf{\Sigma}$, then $a=b$. If $a=\mu$, then session type $\mathtt{t}=^i_{a}$ is a positive fixed point and if $a=\nu$ it is a negative fixed point.


\begin{table*}[t!]
\begin{adjustwidth}{-1.2cm}{}
\centering
{\small
\begin{tabular}{@{}lllllc@{}}
\toprule
\multicolumn{2}{@{}l}{\textbf{Session type (curr./cont.)}} &
\multicolumn{2}{l}{\textbf{Process term (curr./cont.)}} &
\textbf{Description} &
\textbf{Pol} \\[3pt]
$x {:} \oplus\{\ell{:}\mathtt{A}\}_{\ell \in L}$ & $x {:} \mathtt{A}_k$ & $x.k; \mathsf{P}$ &
$\mathsf{P}$ & provider sends label $k$ along $x$  & + \\
 & & $\mathbf{case}\, x (\ell \Rightarrow \mathsf{Q}_{\ell})_{\ell \in L}$ & $\mathsf{Q}_k$ &
client receives label $k$ along $x$ & \\[2pt]
$x {:} \&\{\ell{:}\mathtt{A}\}_{\ell \in L}$ & $x {:} \mathtt{A}_k$ & $\mathbf{case}\, x (\ell \Rightarrow \mathsf{P}_{\ell})_{\ell \in L}$ &
$\mathsf{P}_k$ & provider receives label $k$ along $x$  & - \\
 & & $x.k\; \mathsf{Q}$ & $\mathsf{Q}$ & client sends label $k$ along $x$   & \\[2pt]
$x : 1$ & - & $\mathbf{close}\, x$ &
- & provider sends ``$\mathsf{close}$'' along $x$ and terminates  & + \\
 & & $\mathbf{wait}\,x;\mathsf{Q}$ & $\mathsf{Q}$ & provider receives ``$\mathsf{close}$'' along $x$  & \\
 
$x {:} \mathtt{t}$ & $x {:} \mathtt{A}$ & $x.\mu_\mathtt{t}; \mathsf{P}$ &
$\mathsf{P}$ & provider sends unfolding message $\mu_\mathtt{t}$ along $x$  & + \\
 ($\mathtt{t}=^i_{\mu}\mathtt{A} \in \mathbf{\Sigma}$) & & $\mathbf{case}\, x (\mu_\mathtt{t} \Rightarrow \mathsf{Q})$ & $\mathsf{Q}$ &
client receives unfolding message $\mu_\mathtt{t}$ along $x$ & \\[2pt]
$x {:} \mathtt{t}$ & $x {:} \mathtt{A}$ & $\mathbf{case}\, x (\nu_\mathtt{t} \Rightarrow \mathsf{P})$ &
$\mathsf{P}$ & provider receives unfolding message $\nu_{\mathtt{t}}$ along $x$   & - \\
($\mathtt{t}=^i_{\nu}\mathtt{A} \in \mathbf{\Sigma}$) & & $x.\nu_\mathtt{t}\; \mathsf{Q}$ & $\mathsf{Q}$ & client sends unfolding message $\nu_\mathtt{t}$ along $x$  & \\[2pt]
\bottomrule
\end{tabular}}
\end{adjustwidth}
\vspace{-2pt}
\caption{Overview of intuitionistic linear session types with their operational meaning.}
\label{tab:session_types}
\vspace{-30pt}
\end{table*}



Several notations introduced for {\small $\mathit{FIMALL}^{\infty}_{\mu,\nu}$} calculus  overlap with the notations we use in the context of session-typed processes.  For example in both,  a signature stores definitions and the relative priority of fixed points. The difference is that in the context of the first order calculus $\Sigma$ contains predicates but in the context of session-typed processes $\mathbf{\Sigma}$ contains (not dependent) session types. In the rest of this paper, we fix a signature $\mathbf{\Sigma}$ with $\mathbf{n}$ being the maximum priority in $\mathbf{\Sigma}$. The overloaded notation is inevitable since our metalogic is a generalization of the infinitary subsingleton logic with fixed points based on which binary session typed processes are defined. We use it for our advantage in the last section to prove our main result using a bisimulation.


A process $\mathsf{P}$ in this context is represented as $x:\mathtt{A} \vdash \mathsf{P} :: (y:\mathtt{C})$, where $x$ and $y$ are its left and right channels of types $\mathtt{A}$ and $\mathtt{C}$, respectively.
One can read this judgment as ``process $\mathsf{P}$ \emph{provides} a service of type $\mathtt{C}$ along channel $y$ while \emph{using} a service that is provided by some other process $\mathsf{Q}$ along channel $x$.'' An interaction along a channel $x$ is of two forms (i) $\mathsf{Q}$ sends a message to the right, and process  $\mathsf{P}$ receives it from the left, or (ii) process $\mathsf{P}$ sends a message to the left and process $\mathsf{Q}$ receives it from the right. In the first case the session type $\mathtt{A}$ is of positive polarity, and in the second case it is negative. Since a process might not use any service provided along its left channel, e.g. $\cdot \vdash \mathsf{P} :: (y:\mathtt{B}),$ we generalize the labelling of processes to be of the form $\bar{x}:\omega \vdash \mathsf{P}::(y:\mathtt{B}),$
where $\bar{x}$ is either empty or $x$, and $\omega$ is empty given that $\bar{x}$ is empty. 

Mutually recursive processes are allowed and implemented using process variables. A process variable $\mathsf{X}$ is defined as process $\mathsf{P}$ using a judgment of the form {\small${\bar{x}:\omega \vdash \mathsf{X}=\mathsf{P}_{\bar{x},y}:: (y:\mathtt{B})}$}.
The syntax for calling a  process variable recursively is {\small${ y \leftarrow \mathsf{X} \leftarrow \overline{x}}$}.  We store all process variable definitions in a finite set $V$. 


A summary of the operational reading of session types is presented  in  Table~\ref{tab:session_types}.  The  first  column  indicates  the session type before the message exchange, the second column the session type after the exchange. The corresponding process terms  are  listed  in  the  third  and  fourth  column,  respectively. The fifth column provides the operational meaning of the type  and  the  last  column  its polarity.  Since the fixed points are isorecursive, processes send and receive explicit fixed point unfolding messages. As the names suggest a positive fixed point has a positive semantics and a negative fixed point has a negative one.


In the next section we introduce an \emph{asynchronous} dynamics for the processes.  In an asynchronous semantics only receivers can be blocked, while senders spawn a message and proceed with their continuation. To ensure that consecutive messages  arrive  in the  order  they  were  sent, we use the overlapping notion of generations from {\small $\mathit{FIMALL}^{\infty}_{\mu,\nu}$} and annotate a process typing judgment with generations.  A judgment in the annotated format is of the form ${\bar{x}^\alpha:\mathtt{A} \vdash_{\Omega} \mathtt{P} :: (y^\beta:\mathtt{B})}$.  The generation $\alpha$ of a channel $x^\alpha$ is incremented to $\alpha+1$ whenever a new message is spawned or received, except for type $1$ because there is no continuation. Figure \ref{fig:stp-order} presents the annotated inference rules for session-typed processes corresponding to derivations in subsingleton logic with fixed points. The set $\Omega$ in the judgment plays the same role as its counterpart in {\small $\mathit{FIMALL}^{\infty}_{\mu,\nu}$}. Later in this section we explain its role in developing a guard condition over processes.

\begin{example}
Consider  process $ y^\beta{:}1 \vdash \mathsf{Loop} :: (x^\alpha:\mathtt{nat})$ in Figure~\ref{fig:1a} defined over the signature $\Sigma_2:=  \mathtt{nat}=^{1}_{\mu} \oplus\{ \mathit{z}:\mathtt{1}, \mathit{s}:\mathtt{nat}\}$. Process $\mathsf{Loop}$ spawns the following  infinite stream of messages: 
$x^\alpha.\mu_{\mathtt{nat}},\,x^{\alpha+1}.s,\, x^{\alpha+2}.\mu_{\mathtt{nat}},\, x^{\alpha+3}.s,\, \cdots.$
\end{example}

\begin{figure}[!t]
\begin{adjustwidth}{-0.8cm}{}
\begin{subfigure}[b]{0.5\textwidth}
{\small
\begin{align}
y \leftarrow \mathtt{Loop} \leftarrow x = \  & Ry.\mu_{nat};  && \%\ \textit{send}\ \mu_{nat}\  \textit{to right} \notag\\
& \phantom{low}  Ry.s;  && \%\ \textit{send label}\ \mathit{s} \  \textit{to right} \notag\\
& \phantom{low s} y \leftarrow \mathtt{Loop} \leftarrow x &&{\color{red} \%\ \textit{recursive call}}\ \notag
\end{align}}
\vspace{-18pt}
    \caption{Process Loop} \label{fig:1a}
\vspace{-20pt}%
\end{subfigure}
\hspace{5mm}%
\begin{subfigure}[b]{0.5\textwidth}
{\small\[\begin{aligned}
 \Sigma_3:=\ & \mathsf{ack}=^{1}_{\mu} \oplus\{ \mathit{ack}:\mathsf{astream}\},\\
& \mathsf{astream}=^{2}_{\nu} \& \{\mathit {head}: \mathsf{ack}, \ \ \mathit{tail}: \mathsf{astream}\}
 \end{aligned}\]}
\vspace{-18pt}
\caption{Signature $\Sigma_3$} \label{fig:1c}
\vspace{-6pt}
\end{subfigure}
\begin{subfigure}[b]{0.5\textwidth}
{\small \begin{align}
 & y\leftarrow \mathtt{Pong} \leftarrow w=&& \notag\\ & \phantom{s} Lw.\nu_{astream};&&  \% \  \textit{send}\ \mathit{\nu_{astream}} \ \textit{to } w \notag\\& \phantom{s} Lw.\mathit{head};&&  \% \  \textit{send label}\ \mathit{head} \ \textit{to } w \notag\\ & \phantom{s} \mathbf{case}\, Lw\ (\mu_{ack} \Rightarrow && \% \  \textit{receive}\ \mathit{\mu_{ack}} \ \textit{from } w \notag\\ &\phantom{sma} \mathbf{case}\, Lw\ ( && \% \  \textit{receive a label from }w \notag \\ & \phantom{sma s} y \leftarrow  \mathtt{Pong} \leftarrow w))&& {\color{red} \% \  \textit{recursive call}} \notag \end{align}}
 \vspace{-20pt}
\caption{Ping and Pong processes} \label{fig:1b}
\end{subfigure}
\begin{subfigure}[b]{0.5\textwidth}
{\small \begin{align}
 & w \leftarrow \mathtt{Ping} \leftarrow x= && \notag \\ &  \phantom{s} \mathbf{case}\, Rw\  (\nu_{astream} \Rightarrow && \% \  \textit{receive } \mathit{\nu_{astream}} \textit{from } w \notag\\ & \phantom{sma} \mathbf{case}\, Rw\ ( &&\notag \% \  \textit{receive a label from }w \\ &\phantom{small} \mathit{head} \Rightarrow Rw.\mu_{ack};&&   \% \  \textit{send}\ \mu_{ack} \ \textit{to }w \notag\\ & \phantom{small sp}Rw.\mathit{ack}; \notag && \% \  \textit{send label}\ \mathit{ack}\ \textit{to }w\\ & \phantom{small sp} w\leftarrow \mathtt{Ping}\leftarrow x && {\color{red} \% \  \textit{recursive call}} \notag \\ & \phantom{smal} \mid \mathit{tail} \Rightarrow w \leftarrow  \mathtt{Ping}\leftarrow x)) \notag&& {\% \  \textit{recursive call}}
\end{align}}
\end{subfigure}
\vspace{-20pt}
    \caption{}
\vspace{-20pt}
    \label{fig:loop}
\end{adjustwidth}
\end{figure}

Recall that our goal is to identify processes that terminates either in an empty configuration or one attempting to receive along an external channel. Obviously, process $\mathsf{Loop}$ does not have this property; it keeps spawning a message along channel $x$ while ignoring the channel $y$ on its left. This is not surprising as $\mathsf{Loop}$ does not receive any unfolding messages before calling itself recursively. However, receiving an unfolding message before each recursive call is not enough to guarantee the strong progress property. Consider processes $x{:}1\vdash\mathsf{Ping}::(w{:}\mathtt{astream})$ and $w{:}\mathtt{astream}\vdash\mathsf{Pong}::(y{:}1)$ in Figure \ref{fig:1b} defined over signature $\Sigma_3$ (Figure \ref{fig:1c}).  
$\mathsf{ack}$ is a type with {\it positive} polarity that, upon unfolding, describes a protocol requiring an {\it acknowledgment} message to be sent to the right (or be received from the left).  $\mathsf{astream}$ is a type with {\it negative} polarity of a potentially infinite stream where its $\mathit{head}$ is always followed by an acknowledgement. $\mathsf{Ping}$ and $\mathsf{Pong}$ are connected along their private channel $w$. 
$\mathsf{Ping}$ receives an $\mathtt{astream}$ unfolding message followed by a request for the head from $\mathsf{Pong}$. Next, $\mathsf{Pong}$ receives an $\mathtt{ack}$ unfolding message followed by an acknowledgement from $\mathsf{Ping}$ and they both call themselves recursively. The back and forth communication along $w$ continues indefinitely while the external channels $x$ and $y$ are ignored. 

The key to avoiding such indefinite internal exchanges is to use priorities of the fixed point unfolding messages and define $\mu$- and $\nu$-traces over process typing derivations. \emph{Receiving} an unfolding message ($\mu$L or $\nu$R) along channel $x$ ensures that a recursive call is guarded as long as the process \emph{does not send} an unfolding message ($\nu$L or $\mu$R) with \emph{higher priority} along the same channel. With this definition $\mathsf{Pong}$ is a guarded process but not $\mathsf{Ping}$.

Similar to  {\small $\mathit{FIMALL}^{\infty}_{\mu,\nu}$}, the set $\Omega$ in Figure~\ref{fig:stp-order} tracks sending and receiving fixed point unfolding messages. We use the overloaded operator $\mathsf{snap}$ to define $\mu$- and $\nu$-traces over process typing derivations (see Appendix~\ref{app:guard}). The guard condition for processes is defined similarly in Definition~\ref{def:valproc}. In Section \ref{sec:stronprogress} we show that a guarded process has strong progress.
\begin{definition}[Guard condition for processes]\label{def:valproc} A program defined over signature $\mathbf{\Sigma}$ and the set of process definitions $V$ is \emph{guarded} if any infinite branch in the derivation of $\bar{x}^\alpha:\omega \vdash y \leftarrow \mathsf{X} \leftarrow x ::y^\beta:\mathtt{B}$ for every $\bar{x}:\omega \vdash \mathsf{X}=\mathsf{P}_{\bar{x},y} ::y:\mathtt{B} \in V$ is either a left $\mu$-trace or a right $\nu$-trace.
\end{definition}

The calculus introduced in this section for processes is different from the metalogic infinitary calculus in Sections \ref{sec:metalogic} and \ref{sec:validitycondition}. Among other things, the metalogic is \emph{synchronous} as implied by its cut elimination procedure. While, the session type calculus is asynchronous in its operational semantics.  Moreover, the properties that are relevant for the two calculi are different: \emph{standard cut elimination} for an infinitary system in the metalogic and a specific operational property of \emph{strong progress} in the object language.

\begin{figure*}[t!]
\begin{center}
{\small\[
\begin{tabular}{c c}
\infer[\msc{Id}]{x^\alpha: \mathtt{A} \vdash_{\Omega} y \leftarrow x :: (y^{\beta}: \mathtt{A})}{} & \infer[\msc{Cut}^{w}]{ \bar{x}^{\alpha}: \omega \vdash_{\Omega}  ((w:\mathtt{A}) \leftarrow \mathsf{P}_{w} ; \mathsf{Q}_{w}) :: (y^{\beta}: \mathtt{C})}{ \bar{x}^{\alpha}:  \omega \vdash_{\Omega} \mathsf{P}_{w} ::(w^\eta:\mathtt{A}) & w^\eta: \mathtt{A} \vdash_{\Omega} \mathsf{Q}_{w} :: (y^{\beta}: \mathtt{C})}\\
\infer[1R]{\cdot \vdash_{\Omega} \mathbf{close}\, Ry :: (y^{\beta}: 1)}{} & \infer[1L]{x^{\alpha}: 1 \vdash_{\Omega} \mathbf{wait}\, Lx;\mathsf{Q} :: (y^{\beta}: \mathtt{A})}{  \cdot \vdash_{\Omega} \mathsf{Q} :: (y^{\beta}: \mathtt{A})}\\
    \infer[\oplus R]{\bar{x}^{\alpha}:\omega \vdash_{\Omega} Ry.k; \mathsf{P} :: (y^{\beta}: \oplus\{\ell:\mathtt{A}_{\ell}\}_{\ell \in L})}{\bar{x}^{\alpha}: \omega \vdash_{\Omega \cup \{y^\beta=y^{\beta+1}\}} \mathsf{P} :: (y^{\beta+1}: \mathtt{A}_{k}) \quad (k \in L)} &  \infer[\oplus L]{x^{\alpha}:\oplus\{ \ell:\mathtt{A}_\ell \}_{ \ell \in L} \vdash_{\Omega} \mathbf{case}\, Lx \ (\ell\Rightarrow \mathsf{P}_{\ell}):: (y^{\beta}: \mathtt{C})}{\forall \ell\in L \quad x^{\alpha+1}:\mathtt{A}_{\ell} \vdash_{\Omega\cup \{x^\alpha=x^{\alpha+1}\}} \mathsf{P}_\ell :: (y^{\beta}:\mathtt{C})}\\
    \infer[\& R]{\bar{x}^{\alpha}: \omega \vdash_{\Omega} \mathbf{case}\, Ry\ (\ell \Rightarrow \mathsf{P}_\ell) :: (y^{\beta}: \& \{\ell:\mathtt{A}_\ell\}_{\ell \in L})}{\forall \ell\in L \quad \bar{x}^{\alpha}: \omega \vdash_{\Omega\cup \{y^\beta=y^{\beta+1}\}} \mathsf{P}_\ell :: (y^{\beta+1}:\mathtt{A}_{\ell})} & \infer[\& L]{x^{\alpha}: \&\{ \ell:\mathtt{A}_l \}_{ \ell \in L} \vdash_{\Omega} Lx.k; \mathsf{P} :: (y^{\beta}:\mathtt{C})}{k\in L \quad x^{\alpha+1}:  \mathtt{A}_{k} \vdash_{\Omega\cup \{x^\alpha=x^{\alpha+1}\}}  \mathsf{P} :: (y^{\beta}:\mathtt{C})}\\
\end{tabular}
\]}
\[\begin{tabular}{c c}
  \infer[\mu R]{ \bar{x}^{\alpha}: \omega \vdash_{\Omega} Ry.\mu_t; \mathsf{P}_{y^{\beta}} :: (y^{\beta}:t)}{\deduce{  \bar{x}^{\alpha}: \omega \vdash_{\Omega'} \mathsf{P}_{y} :: (y^{\beta+1}:\mathtt{A})}{\Omega'= \Omega \cup \{(y^{\beta})_{j} =(y^{\beta+1})_{j} \mid j\neq i\}} & t=^i_{\mu}\mathtt{A} & }   & \infer[\mu L]{x^{\alpha}: t \vdash_{\Omega} \mathbf{case}\, Lx\ (\mu_{t} \Rightarrow \mathsf{Q}_{x^{\alpha}}):: (y^{\beta}: \mathtt{C})}{\deduce{x^{\alpha+1}: \mathtt{A} \vdash_{\Omega'} \mathsf{Q}_{x^{\alpha+1}} :: (y^{\beta}:\mathtt{C})}{ \Omega'=\Omega \cup \{x^{\alpha+1}_{i} < x^{\alpha}_{i}\} \cup \{x^{\alpha+1}_{j} =x^{\alpha}_{j} \mid j\neq i\} }  &  t=^i_{\mu} \mathtt{A} }
\end{tabular}\]
\[\begin{tabular}{cc}
  \infer[\nu R]{\bar{x}^{\alpha}: \omega \vdash_{\Omega} \mathbf{case}\, Ry \ (\nu_t \Rightarrow \mathsf{P}) :: (y^{\beta}: t)}{\deduce{\bar{x}^{\alpha}: \omega \vdash_{\Omega'} \mathsf{P} :: (y^{\beta+1}: \mathtt{A})}{\Omega'= \Omega \cup \{y^{\beta+1}_{i} < y^{\beta}_{i}\} \cup \{y^{\beta+1}_{j} = y^{\beta}_{j} \mid i\neq j\}}  & t=^i_{\nu}\mathtt{A}  } & \infer[\nu L]{x^{\alpha}: t \vdash_{\Omega} Lx.\nu_{t}; \mathsf{Q}:: (y^{\beta}: \mathtt{C})}{\deduce{x^{\alpha+1}: \mathtt{A} \vdash_{\Omega'} \mathsf{Q} :: (y^{\beta}: \mathtt{C})} {\Omega'=\Omega \cup \{(x^{\alpha+1})_{j} =(x^{\alpha})_{j} \mid j\neq i \}} & t=^i_{\nu} \mathtt{A} }  \\
\end{tabular}\]
\vspace{0cm}
{\[\qquad \qquad \qquad \qquad \qquad \qquad\infer[\msc{Def}(X)]{\bar{x}^{\alpha}: \omega \vdash_{\Omega} y \leftarrow \mathsf{X} \leftarrow \bar{x}:: (y^{\beta}: \mathtt{C})}{\bar{x}^{\alpha}: \omega \vdash_{\Omega} \mathsf{P}_{\bar{x}, y} :: (y^{\beta}: \mathtt{C}) & \bar{u}:\omega \vdash \mathsf{X}=\mathsf{P}_{\bar{u},w} :: (w:\mathtt{C}) \in V }\]}
\end{center}
 \vspace{-15pt}
\caption{Infinitary typing rules for processes with an ordering on channels.}
\label{fig:stp-order}
\vspace{-20pt}
\end{figure*}

\section{Asynchronous Dynamics}
In this section, we follow the approach of processes-as-formulas to provide an asynchronous semantics for session-typed processes. We express the computational behaviour of configurations as a recursive predicate in our metalogic. A configuration $\mathcal{C}$ is a list of processes that communicate with each other along their private channels. It is defined with the grammar ${\mathcal{C} ::= \cdot \mid \mathsf{P} \mid (\mathcal{C}_1 \mid_{x: \mathtt{A}}\ \mathcal{C}_2)}$, where $\mid$ is an associative, noncommutative operator and $(\cdot)$ is the unit. The type checking rules for configurations
$\bar{x}^\alpha: \omega \Vdash \mathcal{C}:: (y^\beta:\mathtt{B})$ are:
{\small\[
\begin{tabular}{c c  c }
\infer[\msc{emp}]{x^\alpha:\mathtt{A} \Vdash \cdot :: (x^\alpha:\mathtt{A}) }{} &  \infer[\msc{comp}]{\bar{x}^\alpha:\omega \Vdash \mathcal{C}_1 |_{z: \mathtt{A}} \ \mathcal{C}_2 :: (y^\beta:\mathtt{B})}{\bar{x}^\alpha:\omega \Vdash \mathcal{C}_1:: (z^\eta:\mathtt{A}) & z^\eta:\mathtt{A} \Vdash \mathcal{C}_2 :: (y^\beta:\mathtt{B})} &
\infer[\msc{proc}]{\bar{x^\alpha}:\omega \Vdash \mathsf{P} :: (y^\beta:\mathtt{B})}{\bar{x}^\alpha:\omega \vdash \mathsf{P} :: (y^\beta:\mathtt{B})} 
\end{tabular}
\]}
 An asynchronous computational semantics for configuration $(\bar{x}^\alpha:\omega) \Vdash \mathcal{C}::(y^\beta:\mathtt{B})$ is defined in Figure~\ref{fig:comp} under processes-as-formulas as the recursive predicate $\mathsf{Cfg}_{x^\alpha:\omega, y^\beta:\mathtt{B}}(\mathcal{C})$. We use pattern matching to define $\mathsf{Cfg}$. This predicate is parameterized by the left ($x^\alpha$) and right ($y^\beta$) endpoints of the configuration. For the sake of brevity, we drop these parameters for some cases in Figure~\ref{fig:comp}. For those cases we provide in the last column the type and generation of the channel along which the communication takes place. The interested reader is referred to Appendix~\ref{app:async} for a more detailed definition of $\mathsf{Cfg}$ with all its parameters, and conversion of the definition to a formula in the language of {\small $\mathit{FIMALL}^{\infty}_{\mu,\nu}$}.  
 
 The first two cases in the definition of $\mathsf{Cfg}$ reflect the rules for composition of configurations and an empty configuration. In the second line where two configurations are composed, a fresh channel $z^\eta$ is created. Channel $z^\eta$ is an internal channel in the composition of configurations $\mathcal{C}_1$ and $\mathcal{C}_2$ and is used by them to communicate with each other. Channels $\bar{x}^\alpha$ and $y^\beta$ are the externals channels of this composition.

The rest of the cases in Figure~\ref{fig:comp} refer to the configuration consists of a single process $(\bar{x}^\alpha:\omega) \vdash \mathsf{P}::(y^\beta:\mathtt{B})$. For identity (row 3) we put the generational channels to be equal to each other. Cut (row 4) spawns a new process $\mathsf{Q}_1$ offering along a fresh variable $z^\eta$ and continues as $\mathsf{Q}_2$ which is using the resource offered along $z^\eta$. Processes $\mathsf{Q}_1$ and $\mathsf{Q}_2$ communicate along their private channel $z^\eta$. 

The definition of predicate in rows 5-14 captures the operational meaning of session types presented in Table~\ref{tab:session_types}.
For the cases in which the process sends a message along a channel (rows 5,8,9,12,13), we first declare the message and then proceed the computation with the rest of the process. We use predicate $\mathsf{Msg}(x^\alpha.b)$ to formalize the declaration of message $b$ along $x^\alpha$. 
In the cases where the process needs to receive a message to continue (rows 6,7,10,11,14), the predicate is defined as a conjunction of the possible continuations. The definition may proceed with each label provided that the label is declared via a message predicate {$\mathsf{Msg}$}. $\mathsf{Cfg}$ provides an asynchronous semantic: a process spawns a message via $\mathsf{Msg}$ formula and continues its computation. The message may sit around for a while until another process receives it.

\begin{figure}[h]
\newcommand{\semi}{\mathrel{;}}
\small  \centering
 {\begin{equation*}
\hspace{-1cm} 
\begin{array}{llcll}
1. &\mathsf{Cfg}_{x^\alpha:\mathtt{A},x^\alpha:\mathtt{A}} (\mathsf{\cdot}) & =^{\mathbf{n}+2}_{\mu}& 1 & \text{empty}\\
2. &\mathsf{Cfg}_{\bar{x}^\alpha:\omega,y^\beta:\mathtt{B}} (\mathcal{C}_1|_{z:\mathtt{C}}\mathcal{C}_2) & =^{\mathbf{n}+2}_{\mu}& \exists z. \exists \eta.  \mathsf{Cfg}_{\bar{x}^\alpha:\omega,z^\eta:\mathtt{C}}(\mathcal{C}_1) \otimes  \mathsf{Cfg}_{z^\eta:\mathtt{C},y^\beta:\mathtt{B}}(\mathcal{C}_2) & \text{composition}\\
3.&\mathsf{Cfg}_{x^\alpha:\mathtt{A},y^\beta:\mathtt{A}} (y\leftarrow x) & =^{\mathbf{n}+2}_{\mu}& (x^\alpha=y^\beta)& \text{identity}\\
4.&\mathsf{Cfg}_{\bar{x}^\alpha:\omega,y^\beta:\mathtt{B}}( (z:\mathtt{C}) \leftarrow \mathsf{Q}_1 \semi \mathsf{Q}_2) & =^{\mathbf{n}+2}_{\mu} &  \exists z. \exists \eta.  \mathsf{Cfg}_{\bar{x}^\alpha:A,z^\eta:\mathtt{C}}(\mathsf{Q}_1) \otimes  \mathsf{Cfg}_{z^\eta:\mathtt{C},y^\beta:\mathtt{B}}(\mathsf{Q}_2) & \text{cut}\\
5.&\mathsf{Cfg}(\mathbf{close} Ry)& =^{\mathbf{n}+2}_{\mu}&  \mathsf{Msg}(y^{\beta}.\mbox{closed}) \otimes 1 & y^\beta{:}1\\
6.&\mathsf{Cfg}(\mathbf{wait} Lx; \mathsf{Q})& =^{\mathbf{n}+2}_{\mu} &   \mathsf{Msg}(x^{\alpha}.\mbox{closed}) \multimap \mathsf{Cfg}(\mathsf{Q}) & x^\alpha{:}1\\
7.&\mathsf{Cfg}(\mathbf{case} Ry(\ell \Rightarrow \mathsf{Q}_\ell)_{\ell\in L})& =^{\mathbf{n}+2}_{\mu}& \&\{\ell:  \mathsf{Msg}(y^\beta.\ell) \multimap \mathsf{Cfg}(\mathsf{Q}_\ell)\}_{\ell \in L} & y^\beta{:}\&\{\ell:\mathtt{B}_\ell\}_{\ell \in L}\\
8.&\mathsf{Cfg}({Lx}.k;\mathsf{Q}) & =^{\mathbf{n}+2}_{\mu} &  \mathsf{Msg}(x^\alpha.k) \otimes \mathsf{Cfg}(\mathsf{Q})& x^\alpha{:}\&\{\ell:\mathtt{A}_\ell\}_{\ell \in L} \\
9.&\mathsf{Cfg}({Rw}.k;\mathsf{Q}) & =^{\mathbf{n}+2}_{\mu} &  \mathsf{Msg}(y^\beta.k) \otimes \mathsf{Cfg}(\mathsf{Q})& y^{\beta}{:} \oplus \{\ell:\mathtt{B}_\ell\}_{\ell \in L}\\
10.&\mathsf{Cfg}(\mathbf{case} Lx(\ell \Rightarrow \mathsf{Q}_\ell)_{\ell\in L})& =^{\mathbf{n}+2}_{\mu}& \&\{\ell:  \mathsf{Msg}(x^\alpha.\ell) \multimap \mathsf{Cfg}(\mathsf{Q}_\ell)\}_{\ell\in L} & x^\alpha{:}\oplus\{\ell:\mathtt{A}_\ell\}_{\ell \in L} \\
11.&\mathsf{Cfg}(\mathbf{case} Ry(\nu_\mathtt{t} \Rightarrow \mathsf{Q}))& =^{\mathbf{n}+2}_{\mu}&   \mathsf{Msg}(y^\beta.\nu_t) \multimap \mathsf{Cfg}(\mathsf{Q})& y^\beta{:}\mathtt{t}\qquad \qquad \mathtt{t}=^i_{\nu} \mathtt{C} \in \mathbf{\Sigma}\\ 
12.&\mathsf{Cfg}({Lx}.\nu_\mathtt{t};\mathsf{Q}) & =^{\mathbf{n}+2}_{\mu} &  \mathsf{Msg}(x^\alpha.\nu_t) \otimes \mathsf{Cfg}(\mathsf{Q}) &  x^\alpha{:}\mathtt{t} \qquad \qquad \mathtt{t}=^i_{\nu} \mathtt{C} \in \mathbf{\Sigma}\\
13.&\mathsf{Cfg}({Ry}.\mu_\mathtt{t};\mathsf{Q}) & =^{\mathbf{n}+2}_{\mu} &  \mathsf{Msg}(y^\beta.\mu_t) \otimes \mathsf{Cfg}(\mathsf{Q}) & y^\beta{:}\mathtt{t} \qquad \qquad \mathtt{t}=^i_{\mu} \mathtt{C} \in \mathbf{\Sigma}\\
14.&\mathsf{Cfg}(\mathbf{case} Lx(\mu_\mathtt{t} \Rightarrow \mathsf{Q}))& =^{\mathbf{n}+2}_{\mu}&    \mathsf{Msg}(x^\alpha.\mu_t) \multimap \mathsf{Cfg}(\mathsf{Q})& x^\alpha{:}\mathtt{t} \qquad \qquad \mathtt{t}=^i_{\mu} \mathtt{C} \in \mathbf{\Sigma}\\
15.&\mathsf{Cfg}_{\bar{x}^\alpha:\omega,y^\beta:\mathtt{B}}(\mathsf{Y}) & =^{\mathbf{n}+1}_{\nu}& \mathsf{Cfg}_{\bar{x}^\alpha:\omega, y^\beta:\mathtt{B}}(\mathsf{Q} [y/w, \overline{x}/\bar{u}] ) & \bar{u}:\omega \vdash \mathsf{Y}=\mathsf{\mathsf{Q}}:: (w:\mathtt{B}) \in V \end{array}
\end{equation*}
}    \vspace{-15pt}
    \caption{Definition of predicate $\mathsf{Cfg}$.}
    \label{fig:comp}
    \vspace{-10pt}
\end{figure}
In the last case a process variable is unfolded while instantiating the left and right channels $\bar{u}$ and $w$ in the process definition with proper channel names $\bar{x}$ and $y$, respectively. All cases of $\mathsf{Cfg}$ except the last one are defined recursively on a process with a smaller size; we define the predicate in these cases as a least fixed point. In the last case a process variable $\mathsf{Y}$ is replaced by its definition $\mathsf{Q}$ of a possibly larger size; accordingly the predicate is defined as a greatest fixed point in this case. Since in the presence of recursion the behavior of a configuration is defined coinductively, we put the priority of the $\nu$-term in the last case to be higher than $\mu$-terms in the other cases. The first $\mathbf{n}$ priorities in the signature are reserved for the recursive cases in the definition of the strong progress predicate which we will introduce in Section \ref{sec:stronprogress}.


\section{The strong progress property}\label{sec:stronprogress}

Strong progress in an asynchronous setting states that a program eventually terminates either in an empty configuration or one attempting to receive along an external channel. In Section \ref{sec:procs}, we showed that recursion destroys strong progress similar to the way adding circularity to a calculus breaks down cut elimination. In this section, we give a direct proof for the strong progress property of guarded programs with an \emph{asynchronous communication}. 


 Our goal is to formalize the concept of strong progress for a configuration of processes in the language of {\small $\mathit{FIMALL}^{\infty}_{\mu,\nu}$}. We first focus on a particular case in which the configuration is closed, i.e. it does not use any resources. We need to show the computational behavior of configuration $\cdot \Vdash\mathcal{C} ::(y^\beta:\mathtt{B})$  as defined in Figure \ref{fig:comp} ensures that its external channel $y^\beta{:}\mathtt{B}$ will be eventually closed or blocked by waiting to receive a message. In the later case as soon as the message becomes available $y^\beta$ evolves to its new generation $\beta+1$ of a correct type and the continuation $y^{\beta+1}$ has to maintain the same property. In Figure \ref{fig:[]} we define a predicate $[y^\beta:\mathtt{B}]$ to formalize this property. Similar to the definition of $\mathsf{Cfg}$ in Figure \ref{fig:comp} we use pattern matching here. We explain the conversion of this definition to a formula in the language of {\small $\mathit{FIMALL}^{\infty}_{\mu,\nu}$} in Appendix~\ref{app:derivation}.
 
 The first line in the definition of $[y^\alpha:\mathtt{B}]$ (Figure \ref{fig:[]}) corresponds to the case in which $y^\beta$ terminates; in this case we announce the formula that $y^\beta$ is closed ($\mathsf{Msg}(y^\beta.\mbox{closed})$) and proceed with the predicate $[\cdot:\cdot]$.  The predicate $[\cdot:\cdot]$ is defined as constant $1$. For positive types ($\oplus$ and positive fixed points) the provider declares a message of a correct type along $y^\beta$ while the continuation channel $y^{\beta+1}$ has to enjoy the strong progress property. For negative types ($\&$ and negative fixed points) the channel waits on a message along $y^\beta$. Upon receive of such message the continuation channel $y^{\beta+1}$ has to maintain the strong progress property.


The recursive definitions in all the cases except for the fixed points (rows 3 and 5) are defined based on a session type with a smaller size; the predicate is defined as a least fixed point in these cases. When the session type is a positive fixed point $\mathtt{t}$, the predicate is inductively defined on the structure of $\mathtt{t}$ and thus inherits its priority from $\mathtt{t}$. When the session type is a negative fixed point $\mathtt{t}=^i_{\nu}\mathtt{A}$, the predicate is defined as a greatest fixed point based on the structure of $\mathtt{t}$ and ensures a productive property: as desired by strong progress the channel $y^\beta$ is blocked until it receives a message and when the message is received the continuation $y^{\beta+1}$ continues to maintain the desired property. The priority of the predicate is inherited from its underlying session type $\mathtt{t}$. We will see the priority assigned to each case ensures that the if a program is guarded then the derivation for its strong progress is valid.


\begin{figure}
\small
    \centering
\[
\begin{array}{lcl}
{[y^\beta:1 ]} & =^{\mathbf{n}+2}_{\mu} & \mathsf{Msg}(y^\beta.\mbox{closed}) \otimes[\cdot: \cdot]\\
{[y^\beta: \&\{\ell:\mathtt{A}_\ell\}_{\ell\in L}]} & =^{\mathbf{n}+2}_{\mu} &  \&\{\ell:{ (\mathsf{Msg}(y^\beta.\ell) \multimap[y^{\beta+1}:\mathtt{A}_\ell])}\}_{\ell \in L} \\
 {[y^\beta: \mathtt{t}]} & =^i_{\nu} & ( \mathsf{Msg}(y^\beta.\nu_t) \multimap[y^{\beta+1}:\mathtt{A}]) \quad  \mathtt{t}=^{i}_{\nu}\mathtt{A}\in \mathbf{\Sigma}\\
{[y^\beta:\oplus \{\ell:\mathtt{A}_i\}]} & =^{\mathbf{n}+2}_{\mu} & \oplus\{\ell:(\mathsf{Msg}(y^\beta.\ell) \otimes [y^{\beta+1}:\mathtt{A}_\ell])\}_{\ell \in L} \\
{[y^\beta: \mathtt{t}]} & =^{i}_{\mu} & ( \mathsf{Msg}(y^\beta.\mu_t) \otimes[y^{\beta+1}:\mathtt{A}]) \quad  \mathtt{t}=^i_{\mu} \mathtt{A}\in \mathbf{\Sigma}\\
{[\cdot: \cdot]} & =^{\mathbf{n}+2}_{\mu} &  1\\
\end{array}
\]
 \vspace{-15pt}
    \caption{Definition of predicate $[ y^\beta:\mathtt{A} ]$.}
    \label{fig:[]}
    \vspace{-20pt}
\end{figure}

 Using the predicate defined in Figure~\ref{fig:[]}, strong progress for closed configuration $\cdot \vdash \mathcal{C}:: (y^\beta{:}B)$ is formalized as {\small$\mathsf{Cfg}_{\cdot,y^\beta:B}(\mathcal{C}) \multimap  [y^\beta:\mathtt{B}].$} We use $\mathcal{C} \in \llbracket \cdot \vdash y^\beta:\mathtt{B} \rrbracket$ as an abbreviation for this formula. The generalization of this predicate to an open configuration ${x^\alpha:\mathtt{A} \Vdash \mathcal{C} ::(y^\beta:\mathtt{B})}$ is straightforward. Configuration $\mathcal{C}$ satisfies strong progress if its composition with any closed configuration with strong progress has the strong progress property: 
 
 \noindent
 {\small$
     \forall X.\,( \forall z. \forall \eta.\, X \in \llbracket \cdot \vdash z^\eta{:}\mathtt{A} \rrbracket \multimap X\mid_{x{:}\mathtt{A}}\mathcal{C} \in \llbracket \cdot \vdash y^\beta{:}\mathtt{B} \rrbracket).$}
     
\noindent We put {\small$\mathcal{C} \in \llbracket x^\alpha{:}\mathtt{A} \vdash y^\beta{:}\mathtt{B} \rrbracket$} to abbreviate this formula.
\begin{lemma}\label{lem:derivation}
For a configuration of processes $\bar{x}^\alpha:\omega \vdash \mathcal{C} ::(y^\beta:\mathtt{B})$, there is a (possibly infinite) derivation for $ \mathcal{C} \in \llbracket \bar{x}^\alpha:\omega \vdash y^\beta:\mathtt{B} \rrbracket$.
\end{lemma}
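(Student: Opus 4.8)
The plan is to build the (possibly infinite) derivation by corecursion on the typing derivation of $\mathcal{C}$, translating each typing rule of Figure~\ref{fig:stp-order} together with the configuration rules ($\msc{emp}$, $\msc{comp}$, $\msc{proc}$) into a fixed block of rules of the annotated metalogic (Figure~\ref{fig:rules-2}). Since the lemma only asserts the \emph{existence} of a derivation, I need not verify the validity condition here, so a circular typing derivation (arising from the $\msc{Def}$ rule) may simply be unfolded into an infinite metalogic derivation. To make the construction uniform I first strengthen the goal: instead of deriving the abbreviation $\mathcal{C}\in\llbracket \bar{x}^\alpha{:}\omega\vdash y^\beta{:}\mathtt{B}\rrbracket$ directly, I derive the sequent $\Gamma,\mathsf{Cfg}_{\bar{x}^\alpha:\omega,y^\beta:\mathtt{B}}(\mathcal{C})\vdash[y^\beta:\mathtt{B}]$, where $\Gamma$ is empty when $\mathcal{C}$ is closed and is the single left interface $[x^\alpha:\mathtt{A}]$ when $\mathcal{C}$ uses $x^\alpha{:}\mathtt{A}$. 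The stated formula then follows by $\forall R$ and $\multimap R$: after introducing a fresh client $X$, assuming $\forall z\,\forall\eta.\,(\mathsf{Cfg}_{\cdot,z^\eta:\mathtt{A}}(X)\multimap[z^\eta:\mathtt{A}])$ and $\mathsf{Cfg}_{\cdot,y^\beta:\mathtt{B}}(X\mid_{x:\mathtt{A}}\mathcal{C})$, I unfold the composition clause (row $2$ of Figure~\ref{fig:comp}) with $\mu L$, $\exists L$, $\otimes L$, feed $\mathsf{Cfg}(X)$ to the client hypothesis via $\forall L$ and $\multimap L$ to obtain the interface $[z^\eta:\mathtt{A}]$, and then apply the strengthened sequent.

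For the strengthened sequent I proceed by cases on the process action, unfolding $\mathsf{Cfg}(\mathcal{C})$ on the left and the target interface $[y^\beta:\mathtt{B}]$ (and, where relevant, the left interface in $\Gamma$) with the appropriate fixed point rule, matched by $\mu L/\nu L$ and $\mu R/\nu R$ according to the polarity recorded in the defining equations of Figures~\ref{fig:comp} and~\ref{fig:[]}. A send on the right (rows $5,9,13$) produces a $\mathsf{Msg}$ that is matched against the $\otimes$ in $[y^\beta:\mathtt{B}]$ by $\mu R$, $\otimes L$, $\otimes R$, $\oplus R$ (or $1R$ in the $\mathbf{close}$ case) and $\msc{Id}$, after which the continuation $\mathsf{Cfg}$ drives the corecursive call; a receive on the right (rows $7,11$) unfolds the guarded $\multimap$ on both sides and closes with $\& R$, $\multimap R$, $\& L$, $\multimap L$, $\msc{Id}$. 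The left-channel actions (rows $6,8,10,12,14$) are symmetric but consume or produce messages against the left interface in $\Gamma$: a left receive case-splits the positive client interface with $\oplus L$ and consumes its $\mathsf{Msg}$ through $\multimap L$, advancing the left interface to its next generation, while a left send feeds a $\mathsf{Msg}$ into the negative interface; in each case the message exchange is simulated exactly and the construction continues with the continuation process. A recursive call $y\leftarrow\mathsf{X}\leftarrow\bar{x}$ unfolds $\mathsf{Cfg}$ through its greatest-fixed-point clause (row $15$, with $\nu L$) to the body $\mathsf{Q}$; this goal coincides, up to the substitution recorded at the $\msc{Def}$ rule, with the goal at which the construction first entered $\mathsf{X}$, so I close a cycle, obtaining a finite circular presentation that is bisimilar to the typing derivation.

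The most delicate case, and the place I expect the real work, is composition (equivalently the process cut, row $4$): here $\mathcal{C}=\mathcal{C}_1\mid_{w:\mathtt{D}}\mathcal{C}_2$, and unfolding $\mathsf{Cfg}$ yields $\exists w\,\exists\theta.\,\mathsf{Cfg}_{\bar{x}^\alpha:\omega,w^\theta:\mathtt{D}}(\mathcal{C}_1)\otimes\mathsf{Cfg}_{w^\theta:\mathtt{D},y^\beta:\mathtt{B}}(\mathcal{C}_2)$, so the two components no longer share structure with a single interface. The plan is to splice the two corecursively obtained sub-derivations together with a single metalogic $\msc{Cut}$ on the internal interface $[w^\theta:\mathtt{D}]$: the left premise is the sub-derivation of $\Gamma,\mathsf{Cfg}(\mathcal{C}_1)\vdash[w^\theta:\mathtt{D}]$ and the right premise that of $[w^\theta:\mathtt{D}],\mathsf{Cfg}(\mathcal{C}_2)\vdash[y^\beta:\mathtt{B}]$, which recompose to the required sequent. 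The care needed is to identify the existentially introduced generation $w^\theta$ with the cut formula's generation and, throughout, to keep the position-variable relations in $\Omega$ consistent so that every emitted rule is a legal instance of the annotated calculus; none of these steps depends on the guard condition, which is precisely why a derivation exists for every well-typed $\mathcal{C}$ even though only guarded ones will later yield valid proofs.
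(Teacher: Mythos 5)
Your proposal is correct and follows essentially the same route as the paper: reduce the statement to the sequent $\overline{[\bar{x}^\alpha{:}\omega]}, \mathsf{Cfg}_{\bar{x}^\alpha:\omega,y^\beta:\mathtt{B}}(\mathcal{C}) \vdash [y^\beta{:}\mathtt{B}]$ via the outer $\forall R$/$\multimap R$/$\mu L$/$\exists L$/$\forall L$/$\otimes L$/$\multimap L$ steps, then build the derivation blockwise by cases on the configuration and process action, using a metalogic $\msc{Cut}$ on the internal interface for composition and spawn, and closing cycles at recursive calls after the $\nu L$ unfolding of row 15 — which is exactly the paper's construction (its $\star$ blocks, Cases 1--13). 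Your observation that validity need not be checked here is also consistent with the paper, which defers validity to the subsequent bisimulation lemma.
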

\begin{proof}
 Here we provide a summary of the proof. For the complete proof see Appendix \ref{app:derivation}. To get the derivation we aim for, it is enough to build a circular derivation  $\star\, [x^\alpha:\mathtt{A}],  \mathsf{Cfg}_{x^\alpha:\mathtt{A},y^\beta:\mathtt{B}}(\mathcal{C}) \vdash  [y^\beta:\mathtt{B}]$ for an open configuration $x^\alpha:\mathtt{A} \vdash \mathcal{C}:: y^\beta{:}\mathtt{B}$, and derivation $ \star \;\mathsf{Cfg}_{\cdot,y^\beta:\mathtt{B}}(\mathcal{C}) \vdash  [y^\beta{:}\mathtt{B}]$ for a closed configuration $\cdot \vdash \mathcal{C}:: y^\beta:\mathtt{B}$. We provide a circular derivation for each possible pattern of $\mathcal{C}$. 
In all cases, a rule is applied on $[x^\alpha:\mathtt{A}]$ or $[y^\beta:\mathtt{B}]$ only if there is a process willing to receive or send a message along channels $x$ or $y$, respectively, in the configuration. 
\end{proof}

We need to show that when defined over a guarded program the derivation introduced in Lemma~\ref{lem:derivation} is a valid proof. 
Since a configuration is always finite, it is enough to prove validity of the annotated derivation for a single process $\mathsf{P}$. The idea is to introduce a validity-preserving bisimulation between the annotated derivation given in Lemma~\ref{lem:derivation} and the typing derivation of process $ \mathsf{P}$. The key is to assign a matching position variable $\mathbf{y}^\beta$ to every predicate $[y^\beta{:}\mathtt{t}]$ occurring in the derivation. A sketch of the bisimulation is given below. For more details see Appendix \ref{app:derivation}. 
{\small\[
\begin{tikzcd}
 \bar{x}^\alpha:\omega \vdash_{\Omega} \mathsf{P} :: (y^\beta:\mathtt{B}) \arrow[r, dash, "\mathcal{R}"] \arrow[d, hookrightarrow] &  \overline{\mathbf{x}^\alpha:[\bar{x}^\alpha:\omega]}, \mathbf{c}^\eta:\mathsf{Cfg}_{\bar{x}^\alpha:\omega, y^\beta:\mathtt{B}}(\mathsf{P})\vdash_{\Omega'}  \mathbf{y}^\beta:[y^\beta:\mathtt{B}] \arrow[d, Rightarrow, dashed] \\
  \bar{z}^\gamma:\omega' \vdash_{\Lambda} \mathsf{Q} :: (w^\delta:\mathtt{D}) \arrow[r, dash, dashed, "\mathcal{R}"'] & \overline{\mathbf{z}^\gamma:{[\bar{z}^\gamma:\omega']}}, \mathbf{d}^\theta:\mathsf{Cfg}_{\bar{z}^\gamma:\omega', w^\delta:\mathtt{D}}(\mathsf{Q})\vdash_{\Lambda'} \mathbf{w}^{\delta}: {[w^\delta:\mathtt{D}]}
\end{tikzcd}
\]}
Finally, we show that predicate $ \mathcal{C} \in \llbracket \bar{x}^\alpha:\omega \vdash y^\beta:\mathtt{B} \rrbracket$ is sound in the sense that its valid cut free proof ensures strong progress of $\mathcal{C}$.
\begin{theorem}\label{thm:strongprogress}
For the judgment $\bar{x}^\alpha:\omega\Vdash \mathcal{C} ::(y^\beta:\mathtt{B})$ where $\mathcal{C}$ is a configuration of guarded processes, there is a valid cut-free proof for $ \mathcal{C} \in \llbracket \bar{x}^\alpha:\omega \vdash y^\beta:\mathtt{B} \rrbracket$ in {\small $\mathit{FIMALL}^{\infty}_{\mu,\nu}$}. Validity of this proof ensures the strong progress property of $\mathcal{C}$.
\end{theorem}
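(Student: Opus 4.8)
The plan is to split the statement into two tasks: first produce a valid cut-free proof of $\mathcal{C} \in \llbracket \bar{x}^\alpha:\omega \vdash y^\beta:\mathtt{B} \rrbracket$, and then argue that validity of such a proof forces the operational strong progress of $\mathcal{C}$. For the first task I would start from Lemma~\ref{lem:derivation}, which already supplies a (possibly infinite) derivation $\mathcal{D}$ of the target predicate. The only two missing properties are that $\mathcal{D}$ be \emph{valid} and \emph{cut-free}; once $\mathcal{D}$ is shown valid, the cut elimination theorem for $\mathit{FIMALL}^{\infty}_{\mu,\nu}$ produces a cut-free derivation, and, as noted in its proof, the output is again valid, so both properties are secured at once.

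To establish validity of $\mathcal{D}$ I would use the finiteness of configurations to reduce to a single process: every infinite branch of the configuration derivation eventually stays inside the sub-derivation of one process $\mathsf{P}$, since the $\msc{comp}$/$\msc{proc}$ structure is finite and the only source of infinite branching is recursive process-variable unfolding. For a single process I would invoke the validity-preserving bisimulation $\mathcal{R}$ sketched after the lemma, which pairs each strong-progress judgment $\overline{\mathbf{x}^\alpha:[\bar{x}^\alpha:\omega]}, \mathbf{c}^\eta:\mathsf{Cfg}(\mathsf{P}) \vdash_{\Omega'} \mathbf{y}^\beta:[y^\beta:\mathtt{B}]$ with the typing judgment $\bar{x}^\alpha:\omega \vdash_\Omega \mathsf{P} :: (y^\beta:\mathtt{B})$ and assigns the position variable $\mathbf{y}^\beta$ to each predicate $[y^\beta:\mathtt{t}]$. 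The point is that $\mathcal{R}$ transports the generation relations in $\Omega$ faithfully, so an infinite branch on the process side is a left $\mu$-trace (respectively a right $\nu$-trace) exactly when the matching branch of $\mathcal{D}$ is. Since $\mathcal{C}$ is built from guarded processes, Definition~\ref{def:valproc} guarantees that every infinite branch on the process side is one of these two kinds, hence so is every infinite branch of $\mathcal{D}$, which is precisely validity.

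For soundness I would read the strong-progress behavior directly off the valid cut-free proof, the crucial ingredient being the priority assignment. The external fixed points in $[y^\beta:\mathtt{B}]$ (Figure~\ref{fig:[]}) carry the top priorities $1,\dots,\mathbf{n}$, the coinductive process-variable unfolding of $\mathsf{Cfg}$ (Figure~\ref{fig:comp}, row~15) sits at $\mathbf{n}+1$, and all remaining bookkeeping at $\mathbf{n}+2$, so in the lexicographic snapshot order the observable channel behavior dominates every internal step. The right rules fired on $[y^\beta:\mathtt{B}]$ are exactly the external observations: a positive type forces the provider to spawn a $\mathsf{Msg}$ and move to $[y^{\beta+1}:\cdot]$, a negative type blocks on a $\mathsf{Msg}$ before continuing, and $1$ emits $\mathsf{Msg}(y^\beta.\mbox{closed})$ and halts, while the left rules on $\mathsf{Cfg}$ are internal computation. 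I would then rule out a hypothetical infinite branch carrying only internal steps: such a branch unfolds the $\mathsf{Cfg}$ process-variable predicate infinitely often, so its dominant infinitely-changing priority is the $\nu$ at $\mathbf{n}+1$ on the \emph{left}, which can be neither a left $\mu$-trace (it breaks the snapshot order at that priority rather than decreasing it) nor a right $\nu$-trace (it occurs on the left), contradicting validity. Consequently only finitely many internal steps separate consecutive external observations, the observations on $y^\beta$ follow the pattern prescribed by $[y^\beta:\mathtt{B}]$, and $\mathcal{C}$ therefore either terminates in an empty configuration or keeps reaching states blocked on an external receive, which is strong progress.

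The main obstacle I expect is this soundness step rather than the existence of the proof: bridging the purely logical content of a $\mathit{FIMALL}^{\infty}_{\mu,\nu}$ derivation and the operational notion of strong progress. In particular one must show rigorously that each right rule on $[y^\beta:\mathtt{B}]$ is backed by a genuine step of the asynchronous semantics encoded in $\mathsf{Cfg}$, and that the priority ordering together with cut-freeness really does confine divergence to productive interaction on external channels. The open-configuration case adds a second layer, since the antecedent now also carries $[x^\alpha:\mathtt{A}]$ predicates whose left $\mu$-traces must be matched against consumption of inductive input on the left external channel; treating both the left and the right external interfaces uniformly through the snapshot order is the technically delicate part.
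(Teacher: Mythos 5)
Your proposal is correct and follows essentially the same route as the paper: validity of the Lemma~\ref{lem:derivation} derivation via the bisimulation $\mathcal{R}$ with the guarded typing derivation (reducing to a single process by finiteness of configurations), then cut elimination with preservation of validity, and finally the observation that the coinductive ($\nu$, priority $\mathbf{n}{+}1$) definition of $\mathsf{Cfg}$ on the left can support neither a left $\mu$-trace nor a right $\nu$-trace, so every infinite branch must act on $[\bar{x}^\alpha:\omega]$ or $[y^\beta:\mathtt{B}]$, which by construction corresponds to external communication. The "technically delicate part" you flag is exactly what the paper's appendix discharges by tracking an invariant on the shape of the branching tapes throughout the cut-elimination run.
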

 \begin{proof}
It is enough to show that the valid proof we built for {\small$\star {\,\overline{[\bar{x}^\alpha:\omega]}, \mathsf{Cfg}_{\bar{x}^\alpha:\omega, y^\beta:\mathtt{B}}(\mathcal{C})\vdash  [y^\beta:\mathtt{B}]}$} in {\small $\mathit{FIMALL}^{\infty}_{\mu,\nu}$} ensures the strong progress property of configuration $\mathcal{C}$. We run the cut elimination algorithm on the valid proof built above. The output is a cut-free valid proof. If there are no infinite branches in the proof then by linearity of the calculus we know that a rule is applied on $\overline{[\bar{x}^\alpha:\omega]}$ and $[y^\beta:\mathtt{B}]$. In the infinite case, recall that the predicate $\mathsf{Cfg}$ for a recursive process is defined coinductively; no subformula of it in the antecedents can be a part of an infinite $\mu$-trace. Thus a rule has to be applied on either $[\bar{x}^\alpha:\omega]$ or $[y^\beta:\mathtt{B}]$. By the structure of the $\star$ proof, a rule is applied on $[\bar{x}^\alpha:\omega]$ or $[y^\beta:\mathtt{B}]$ only if there is a process communicating along $\bar{x}^\alpha$ or $y^\beta$, respectively, in the configuration. In Appendix~\ref{app:derivation} we further show that $\mathcal{C}$ eventually waits to receive a message from one of its external channels.
\end{proof}

\section{Conclusion}
In this paper we introduced an infinitary sequent calculus for first order intuitionistic multiplicative additive linear logic with fixed points. Inspired by the work of Fortier and Santocanale \cite{Fortier13csl} we provide an algorithm to identify valid proofs among all infinite derivations. 

Our main motivation for introducing this calculus is  to give a direct proof for the strong progress property of  guarded binary session typed processes. The importance of a direct proof other than its elegance is that it can be  adapted for a more general guard condition on processes, e.g. a condition that takes into account the relation between a resource and the service that is offered by a process, without the need to prove cut elimination productivity for their underlying derivations. 

One recent approach in type theory integrates
induction and coinduction by pattern and copattern matching and explicit
well-founded induction on ordinals\cite{abel2016well}, following a number of earlier representations of induction and coinduction in type theory \cite{abel2013wellfounded}. The connection to this type theoretic approach is an interesting item for future research. A first step in this general direction was taken by Sprenger and Dam~\cite{sprenger2003structure}
who justify cyclic inductive proofs using inflationary iteration.

The results presented in this paper are proved for an asynchronous semantics, while the prior work~\cite{derakhshan2019circular} is based on a synchronous semantics. The strong progress proof in the prior work is based on the Curry-Howard correspondence (propositions as types) we built between recursive session types and infinitary linear logic, in which strong progress corresponds to the productivity of the cut-elimination algorithm introduced in~\cite{Fortier13csl}. In this work, we use the processes-as-formulas approach. The strong progress property is formalized as a formula, and the proof is given for it in a metalogic without appealing to a Curry-Howard correspondence.

\bibliography{ref}
\newpage
\appendix
\section{Appendix}\label{appendix}

\subsection{$\mathit{FIMALL}^{\infty}_{\mu,\nu}$}\label{app:logic}
A \emph{circular derivation} is the finite representation of an infinite one in which we can identify each open subgoal with an identical interior judgment. In the first order context we may need to use a substitution rule right before a circular edge to make the subgoal and interior judgment exactly identical \cite{brotherston2005cyclic}:

{\small\[ \infer[\mathtt{subst}_{\theta}]{\Gamma[\theta] \vdash B[\theta]}{\Gamma\vdash B}\]}%
We can transform a circular derivation to its underlying infinite derivation in a productive way by deleting the $\mathtt{subst}_{\theta}$ rule and the circular edge. We need to instantiate the derivation to which the circular edge pointed with substitution $\theta$. This instantiation exists and does not change the structure of the derivation. We prove the next lemma for the annotated system of Figure~\ref{fig:rules-2}. It is straightforward to see that it also holds for the system in Figure~\ref{fig:rules-1}.
\begin{lemma}[Substitution]\label{lem:subst}
For a valid derivation 
\[\infer{\Delta \vdash \mathbf{w}^\alpha:A}{\Pi}\]
in the infinite system and substitution $\theta$, there is a valid derivation for \[\infer{\Delta[\theta] \vdash \mathbf{w}^\alpha:A[\theta]}{\Pi[\theta]}\]
where $\Pi[\theta]$ is the whole derivation $\Pi$ or a prefix of it instantiated by $\theta$.
\end{lemma}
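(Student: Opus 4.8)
The plan is to construct $\Pi[\theta]$ \emph{corecursively}, processing $\Pi$ one rule at a time from the root downward and committing to a matching rule in the output at each step; since each step emits exactly one inference, the construction is productive and defines a well-formed (possibly infinite) derivation. The observation that makes validity easy to maintain is that $\theta$ acts only on the term structure of formulas: it never touches the position variables $\mathbf{x}^\alpha$, their generations, the priorities of fixed points, or the ordering set $\Omega$. Consequently the rule applied in $\Pi[\theta]$ at each node will be the \emph{same} rule as in $\Pi$, with the same principal connective, the same priority annotations, and the same updates to $\Omega$ --- only the terms inside formulas are replaced by their $\theta$-instances.

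First I would treat the routine rules. For $\msc{Id}$, $\msc{Cut}$, $1$, $\otimes$, $\multimap$, $\oplus$, and $\&$, substitution commutes with the rule: I apply $\theta$ to each premise and recurse, copying the fresh succedent variables $\mathbf{w}^\eta$ and their $\Omega$-relations verbatim. For the quantifier rules I push the witness through $\exists R$ and $\forall L$ (replacing $t$ by $t[\theta]$), and for $\exists L$ and $\forall R$ I first $\alpha$-rename the eigenvariable so that it lies outside $\mathrm{dom}(\theta)\cup\mathrm{FV}(\theta)$, legitimate since it is fresh, after which substitution again commutes and the freshness side condition is preserved. The fixed-point rules $\mu L,\mu R,\nu L,\nu R$ unfold $T(\overline t)$ to $[\overline t/\overline x]A$; choosing $\overline x$ disjoint from $\theta$ gives $T(\overline t)[\theta]=T(\overline{t[\theta]})$ and $([\overline t/\overline x]A)[\theta]=[\overline{t[\theta]}/\overline x]A$, so the rule, its priority $j$, and its $\Omega$-update are reproduced unchanged.

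The delicate case --- and the step I expect to be the main obstacle --- is the ${=}L$ rule, where substitution interacts with the most general unifier. Suppose the root of $\Pi$ is ${=}L$ on $s=t$ with $\mathtt{mgu}(t,s)=\{\sigma\}$ and premise subderivation $\Pi'$ of $\Gamma[\sigma]\vdash B[\sigma]$. In $\Pi[\theta]$ I must apply ${=}L$ to $s[\theta]=t[\theta]$ and so compute $\mathtt{mgu}(t[\theta],s[\theta])$. If this set is empty, the ${=}L$ node has no premises: the branch closes and $\Pi'$ is discarded, which is exactly how a pruned prefix of $\Pi$ arises. If instead $\mathtt{mgu}(t[\theta],s[\theta])=\{\sigma'\}$, then $\theta\sigma'$ unifies $s,t$, so by the universal property of mgu's it factors through $\sigma$:
\[
s[\theta\sigma']=t[\theta\sigma'] \ \Longrightarrow\ \theta\sigma'=\sigma\rho \text{ for some }\rho .
\]
The premise demanded by the substituted ${=}L$ is $\Gamma[\theta][\sigma']\vdash B[\theta][\sigma']$, and
\[
\Gamma[\theta][\sigma']=\Gamma[\theta\sigma']=\Gamma[\sigma\rho]=(\Gamma[\sigma])[\rho] ,
\]
and likewise for $B$. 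Hence it suffices to apply $\rho$ to $\Pi'$, which the corecursion supplies by invoking the construction on the immediate subderivation with the \emph{updated} substitution $\rho$ in place of $\theta$. This is precisely why the statement quantifies over an arbitrary $\theta$: the substitution threaded through the tree is allowed to change at each ${=}L$ node.

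Finally I would verify the validity condition for the output. Every infinite branch of $\Pi[\theta]$ traverses only ${=}L$ nodes whose unification succeeded (the failing ones terminate their branches), so it carries the identical sequence of rule names, generations, snapshots, and $\Omega$-relations as a corresponding infinite branch of $\Pi$. Since left $\mu$-traces and right $\nu$-traces are defined purely in terms of snapshots and the order $<_\Omega$ --- all of which $\theta$ leaves intact --- each infinite branch of $\Pi[\theta]$ inherits the trace property of its counterpart in $\Pi$, so $\Pi[\theta]$ is again a valid proof. The same argument, after erasing annotations, yields the corresponding statement for the system of Figure~\ref{fig:rules-1}.
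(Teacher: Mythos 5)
Your proposal is correct and takes essentially the same approach as the paper: both proceed by coinduction on the structure of the derivation, treat ${=}L$ as the only interesting case, prune the branch (yielding a prefix) when unification of the substituted terms fails, and use the mgu factorization (your $\theta\sigma'=\sigma\rho$, the paper's $\theta\eta=\theta'\lambda$) to thread the residual substitution coinductively into the premise subderivation. Your explicit check that validity is preserved --- because $\theta$ never touches position variables, generations, priorities, or $\Omega$ --- is a detail the paper leaves implicit but is entirely consistent with its argument.
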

\begin{proof}
The proof is by coinduction on the structure of 
\[\Delta \vdash \mathbf{w}^\alpha:A.\]
The only interesting case is where we get to the $= L$ rule. \[\infer[=L]{\Gamma, s=t \vdash B}{ \deduce{\Gamma[\theta']\vdash B[\theta']}{\Pi'}& \forall \theta' \in \mathsf{mgu}(s,t)}\]
If the set {$\mathsf{mgu}(t[\theta],s[\theta])$}is empty then so is {$\Pi[\theta]$}. Otherwise if $\eta$ is the single element of {$ \mathsf{mgu}(t[\theta],s[\theta])$}, then for some substitution $\lambda$ we have {$\theta\eta=\theta'\lambda,$} and we can form the rest of derivation  for substitution $\lambda$ as {$\Pi'[\lambda]$} coinductively. 
\end{proof}

\subsection{Derivations for the correctness of Tower of Hanoi solution} \label{app:hanoi}

A circular derivation for correctness of Algorithm~\ref{alg:tower} is given below. The derivation for ${\color{blue}\star}$ is straightforward and {\bf finite}. The purple thread highlights the descendants of predicate $\mathsf{move}$ in $\dagger$. 

 {\footnotesize 
\begin{equation*}
   \hspace{-2.9cm}  \inferLineSkip=2pt
  \infer[\color{Brown}{\mu L}]{\dagger \,\mathsf{p}(s,I\,L_{s}),\mathsf{p}(t,L_t), \mathsf{p}(a,L_a),\mathsf{c}(m), {\color{Plum}\mathsf{m}(s,t,a,I,m)}\vdash \mathsf{p}(s,L_{s})\otimes\mathsf{p}(t,I\,L_t)\otimes \mathsf{p}(a,L_a)\otimes \mathsf{c}(2^{|I|}+m-1)}{\infer[\oplus L]{\mathsf{p}(s,I\,L_{s}),\mathsf{p}(t,L_t), \mathsf{p}(a,L_a),\mathsf{c}(m), {\color{Plum}\oplus\{\mathit{next}: \exists k, I'.\,(I= I'\,k)  \otimes \cdots, \mathit{done}: I=\epsilon\}}\vdash \mathsf{p}(s,L_{s})\otimes\mathsf{p}(t,I\,L_t)\otimes \mathsf{p}(a,L_a) \otimes \mathsf{c}(2^{|I|}+m-1)}{\infer[\exists L]{\mathsf{p}(s,I\,L_{s}),\mathsf{p}(t,L_t), \mathsf{p}(a,L_a), \mathsf{c}(m), {\color{Plum}\exists k,I'.\, (I=I'\,k)  \otimes \cdots} \vdash \mathsf{p}(s,L_{s})\otimes\mathsf{p}(t,I\,L_t)\otimes \mathsf{p}(a,L_a) \otimes \mathsf{c}(2^{|I|}+m-1)}{\infer[\otimes L]{\mathsf{p}(s,I\,L_{s}),\mathsf{p}(t,L_t), \mathsf{p}(a,L_a),\mathsf{c}(m), {\color{Plum} (I=I'\,k)  \otimes \mathsf{m}(s,a,t,I') \otimes \cdots} \vdash \mathsf{p}(s,L_{s})\otimes\mathsf{p}(t,I\,L_t)\otimes \mathsf{p}(a,L_a)\otimes \mathsf{c}(2^{|I|}+m-1)}{\infer[=L]{\mathsf{p}(s,I\,L_{s}),\mathsf{p}(t,L_t), \mathsf{p}(a,L_a), \mathsf{c}(m), {\color{Plum} (I=I'\,k), \mathsf{m}(s,a,t,I') \otimes \cdots} \vdash \mathsf{p}(s,L_{s})\otimes\mathsf{p}(t,I\,L_t)\otimes \mathsf{p}(a,L_a)\otimes \mathsf{c}(2^{|I|}+m-1)}{\infer[\otimes L]{\mathsf{p}(s,I'\,k\,L_{s}),\mathsf{p}(t,L_t), \mathsf{p}(a,L_a), \mathsf{c}(m),   {\color{Plum}\mathsf{m}(s,a,t,I')\otimes\mathsf{p\_p}(s,t,k, I', m)\otimes \mathsf{m}(a,t,s,I')} \vdash \mathsf{p}(s,L_{s})\otimes\mathsf{p}(t,I'\,k\,L_t)\otimes \mathsf{p}(a,L_a)\otimes \mathsf{c}(2^{|I'k|}+m-1)}{\infer[\msc{Cut} \textit{(x2)}]{\mathsf{p}(s,I'\,k\,L_{s}),\mathsf{p}(t,L_t), \mathsf{p}(a,L_a), \mathsf{c}(m), {\color{Plum}\mathsf{m}(s,a,t,I',m), \mathsf{p\_p}(s,t,k,I',m), \mathsf{m}(a,t,s,I',2^{|I'|}{+}m)} \vdash \mathsf{p}(s,L_{s})\otimes\mathsf{p}(t,I'\,k\,L_t)\otimes \mathsf{p}(a,L_a) \otimes \mathsf{c}(2^{|I'k|}{+}m{-}1)}{\deduce{{\color{OliveGreen}\dagger_2}\, \mathsf{p}(s,L_{s})\otimes\mathsf{p}(t,k\,L_t)\otimes \mathsf{p}(a,I'\,L_a)\otimes \mathsf{c}(2^{|I'|}{+}m),   {\color{Plum}\mathsf{m}(a,t,s,I',m)} \vdash  \mathsf{p}(s,L_{s})\otimes\mathsf{p}(t,I'\,k\,L_t)\otimes \mathsf{p}(a,L_a)\otimes \mathsf{c}(2^{|I'k|}{+}m{-}1)}{   \deduce{{\color{blue}\star}\,\mathsf{p}(,k\,L_{s}) \otimes \mathsf{p}(t,L_t)\otimes \mathsf{p}(a,I'\,L_a) \otimes \mathsf{c}(2^{|I'|}{+}m{-}1),   {\color{Plum}\mathsf{p\_p}(s,t,k, I',m)} \vdash  \mathsf{p}(s,L_{s})\otimes\mathsf{p}(t,k\,L_t)\otimes \mathsf{p}(a,I'\,L_a)\otimes \mathsf{c}(2^{|I'|}{+}m)}{{\color{red}\dagger_1} \,\mathsf{p}(s,I'\,k\,L_{s}),\mathsf{p}(t,L_t), \mathsf{p}(a,L_a), \mathsf{c}(m),   {\color{Plum}\mathsf{m}(s,a,t,I',m)} \vdash  \mathsf{p}(s,k\,L_{s})\otimes\mathsf{p}(t,L_t)\otimes \mathsf{p}(a,I'\,L_a) \otimes \mathsf{c}(2^{|I'|}{+}m{-}1) } }}}}}} &\hspace{-80pt} \infer*{\cdots (I=\epsilon) \vdash \cdots}{}}}
\end{equation*}}

An invalid circular derivation for correctness of infinitary Tower of Hanoi solution is given below. The derivation for ${\dagger_5}$ is straightforward and {\bf finite}. The purple thread highlights the descendants of predicate $\mathsf{move}$ in $\dagger$. 

 {\footnotesize   \[
 \hspace{-2.5cm}
    \infer[\mu L]{\dagger}{\infer*{}{\infer[\msc{Cut}]{}{{\color{red}\dagger_1} & {\color{blue}\star} &  {\color{OliveGreen}\dagger_2} &\hspace{-1.8cm} \infer[\nu L]{{\color{Orange}\dagger_3}\, \mathsf{p}(s,L_{s})\otimes\mathsf{p}(t,I'\,k\,L_t)\otimes \mathsf{p}(a, L_a)\otimes \mathsf{c}(2^{|I_s|}{+}m{-}1),   {\color{Plum}\mathsf{s}(t,s,a,I,2^{|I'k|}{+}m{-}1)} \vdash  \mathsf{p}(s,L_{s})\otimes\mathsf{p}(t,I'\,k\,L_t)\otimes \mathsf{p}(a,L_a)\otimes \mathsf{c}(2^{|I'k|}{+}m{-}1)}{\infer[\oplus L]{\mathsf{p}(s,L_{s})\otimes\mathsf{p}(t,I'\,k\,L_t)\otimes \mathsf{p}(a, L_a)\otimes \mathsf{c}(2^{|I'k|}{+}m{-}1),   {\color{Plum}\oplus \{\mathit{restart}{:} I'\,k= I_s \otimes \cdots, \mathit{term}{:}1}\} \vdash  \mathsf{p}(s,L_{s})\otimes\mathsf{p}(t,I'\,k\,L_t)\otimes \mathsf{p}(a,L_a)\otimes \mathsf{c}(2^{|I'k|}{+}m{-}1)}{\infer[\otimes L, = L]{\mathsf{p}(s,L_{s})\otimes\mathsf{p}(t,I'\,k\,L_t)\otimes \mathsf{p}(a, L_a)\otimes \mathsf{c}(2^{|I'k|}{+}m{-}1),   {\color{Plum} I'\,k= I_s \otimes \cdots} \vdash  \mathsf{p}(s,L_{s})\otimes\mathsf{p}(t,I'\,k\,L_t)\otimes \mathsf{p}(a,L_a)\otimes \mathsf{c}(2^{|I'k|}{+}m{-}1)}{\infer[\forall L]{\mathsf{p}(s,L_{s})\otimes\mathsf{p}(t,I_s\,L_t)\otimes \mathsf{p}(a, L_a)\otimes \mathsf{c}(2^{|I_s|}{+}m{-}1),   {\color{Plum} \forall L,L'. (\cdots) \multimap (\cdots)} \vdash  \mathsf{p}(s,L_{s})\otimes\mathsf{p}(t,I_s\,L_t)\otimes \mathsf{p}(a,L_a)\otimes \mathsf{c}(2^{|I_s|}{+}m{-}1)}{\infer[\multimap L]{\mathsf{p}(s,L_{s})\otimes\mathsf{p}(t,I_s\,L_t)\otimes \mathsf{p}(a, L_a)\otimes \mathsf{c}(2^{|I_s|}{+}m{-}1),   {\color{Plum} (\mathsf{p}(s,L) \otimes \cdots) \multimap (\mathsf{p}(s,I_s\, L) \otimes \cdots)} \vdash  \mathsf{p}(s,L_{s})\otimes\mathsf{p}(t,I_s\,L_t)\otimes \mathsf{p}(a,L_a)\otimes \mathsf{c}(2^{|I_s|}{+}m{-}1)}{\deduce{\dagger_4\,\mathsf{p}(a, L_a),{\color{Plum} \mathsf{p}(s,I_s\, L), \mathsf{p}(t,L'), \mathsf{c}(0), \mathsf{m}(s,t,a,I_s,0)} \vdash  \mathsf{p}(s,L_{s})\otimes\mathsf{p}(t,I_s\,L_t)\otimes \mathsf{p}(a,L_a)\otimes \mathsf{c}(2^{|I_s|}{+}m{-}1)} {\dagger_5\, {\mathsf{p}(s,L_{s}), \mathsf{p}(t,I_s\,L_t),  \mathsf{c}(2^{|I_s|}{+}m{-}1) \vdash \mathsf{p}(s,L_{s})\otimes\mathsf{p}(t,I_s\,L_t)\otimes \mathsf{c}(2^{|I_s|}{+}m{-}1)}}}}} &\hspace{-1cm} \infer*{\cdots, {\color{Plum} 1} \vdash \cdots}{}}}}}}
    \]}

\subsection{Cut elimination}\label{app:cut}


We adapt Fortier and Santocanale's \cite{Fortier13csl} cut elimination algorithm to {\small $\mathit{FIMALL}^{\infty}_{\mu,\nu}$} and prove its productivity for valid derivations. The algorithm receives an infinite proof as an input and outputs a cut-free infinite proof. 
Since we are dealing with infinite derivations, to make the algorithm productive\footnote{An algorithm is productive if every piece of its output is generated in a finite number of steps.} we need to push every cut away from the root with a lazy strategy. With this strategy we may need to permute two consecutive cuts which results into a loop. To overcome this problem, similarly to Fortier and Santocanale and also Baelde et al. \cite{baelde2016infinitary} we generalize binary cuts to $n$-ary cuts using the notion of a \emph{branching tape}. 

\begin{definition}\label{def:tape}
A \emph{branching tape} $\mathcal{C}$ is a finite list of sequents\footnote{For brevity we elide the set $\Omega$ in the judgments.} $\Delta \vdash \mathbf{w}^\beta: A$, such that
\begin{itemize}
    \item Every two judgments $\Delta \vdash \mathbf{w}^\beta: A$ and $\Delta' \vdash \mathbf{w}'^{\beta'}: A'$ on the tape share at most one position variable $\mathbf{z}^\alpha:B$. If they share such a position variable, we call them connected. Moreover, assuming that $\Delta \vdash \mathbf{w}^\beta: A$ appears before $\Delta' \vdash \mathbf{w}'^{\beta'}: A'$ on the list, we have  $\mathbf{z}^\alpha:B\in \Delta'$ and $\mathbf{z}^\alpha:B= \mathbf{w}^\beta:A$.
    \item Each position variable $\mathbf{z}^\beta$ appears at most twice in a tape and if it appears more than once it  connects two judgments.
    \item Every tape is connected.
\end{itemize}
The \emph{conclusion} $\mathsf{conc}_{\mathcal{M}}$ of a branching tape $\mathcal{M}$ is a sequent $\Delta \vdash \mathbf{x}^\alpha:A$ such that 
\begin{itemize}
    \item there is a sequent $\Delta' \vdash \mathbf{x}^\alpha:A$ in the tape that $\mathbf{x}^\alpha:A$ does not connect it to any other sequent in the tape. 
    \item For every $\mathbf{y}^\beta:B\in \Delta$ there is a sequent $\Delta', \mathbf{y}^\beta:B \vdash \mathbf{z}^\gamma:C$ on the tape such that $\mathbf{y}^\beta:B$ does not connect it to any other sequent in the tape. 
\end{itemize}
We call $\Delta$ the set of \emph{leftmost formulas} of $\mathcal{M}$: $\mathsf{lft}(\mathcal{M})$. And $x^\alpha:A$ is the \emph{rightmost formula} of tape $\mathcal{M}$: $\mathsf{rgt}(\mathcal{M})$. By definition a branching tape is connected and acyclic. Therefore its conclusion always exists and is unique. An $n$-ary cut is a rule formed from a tape $\mathcal{M}$ and its conclusion $\mathsf{conc}_{\mathcal{M}}$:\qquad
$\infer[nCut]{\mathsf{conc}_{\mathcal{M}}}{\mathcal{M}}$
\end{definition}     

We generalize  Fortier and Santocanale's set of  primitive operations to account for {\small $\mathit{FIMALL}^{\infty}_{\mu,\nu}$}. They closely resemble the reduction rules given by Doumane \cite{doumane2017infinitary}. Figure~\ref{fig:cred} depicts a few interesting internal ($\mathsf{PRd}$) and external ($\mathsf{Flip}$) reductions, identity elimination, and merging a cut.  It is straightforward to adapt the rest of reduction steps from the previous works \cite{doumane2017infinitary,Fortier13csl}.

{\small \begin{figure*}
    \centering

\[\hspace{-2cm}{\small
\begin{array}{lcl}
    \infer[\mathit{nCut}]{\Delta \vdash \mathbf{v}:C}{\mathcal{C}_1 & \infer[\exists R]{ \Delta' \vdash \mathbf{z}^{\beta}:\exists x. P(x)}{ \Delta'\vdash \mathbf{z}^{\beta}:P(t) }& \mathcal{C}_2 &\infer[\exists L]{\Delta'', \mathbf{z}^{\beta}:\exists x. P(x) \vdash \mathbf{w}^\alpha:B}{\deduce{\Delta'', \mathbf{z}^{\beta}: P(x) \vdash \mathbf{w}^\alpha:B}{\Pi'} } & \mathcal{C}_3 }&
     \xRightarrow{\mathsf{PRd}} &
     \hspace*{14em} \\
     \multicolumn{3}{r}{
     \infer[\mathit{nCut}]{\Delta \vdash \mathbf{v}:C}{\mathcal{C}_1 & \Delta'\vdash \mathbf{z}^{\beta}:P(t) &\mathcal{C}_2 &\deduce{\Delta'', \mathbf{z}^{\beta}: P(t) \vdash \mathbf{w}^\alpha:B}{\Pi'[t/x]}&  \mathcal{C}_3 }}\\
    \infer[\mathit{nCut}]{\Delta \vdash \mathbf{v}:C}{\mathcal{C}_1 & \infer[=R]{ \cdot \vdash \mathbf{z}^{\beta}:s=s}{}&\mathcal{C}_2 &\infer[=L]{\Delta'', \mathbf{z}^{\beta}:s=s \vdash \mathbf{w}^\alpha:B }{\Delta'' \vdash \mathbf{w}^\alpha:B} & \mathcal{C}_3 }& \xRightarrow{\mathsf{PRd}} &
     \infer[\mathit{nCut}]{\Delta \vdash \mathbf{v}:C}{\mathcal{C}_1 & \mathcal{C}_2 & \Delta'' \vdash \mathbf{w}^\alpha:B  & \mathcal{C}_3 }\\
\end{array}
}\]    
    {
   \[ \hspace{-2cm}
\begin{array}{ccc}
     \infer[\mathit{nCut}]{\Delta \vdash \mathbf{v}:C}{\mathcal{C}_1& \infer[\otimes R]{ \Delta' \vdash \mathbf{z}^{\beta}:A_1 \otimes A_2}{\Delta'_1\vdash \mathbf{u}^{\eta}:A_1 & \Delta'_2\vdash \mathbf{z}^{\beta}:A_2 }& \mathcal{C}_2 & \infer[\otimes L]{\Delta'', z^{\beta}:A_1 \otimes A_2 \vdash  \mathbf{w}^\alpha:B}{\Delta'', \mathbf{u}^{\eta}:A_1, \mathbf{z}^{\beta}: A_2 \vdash \mathbf{w}^\alpha:B } & \mathcal{C}_3 }  &   \xRightarrow{\mathsf{PRd}}&
     \hspace*{6em} \\
     \multicolumn{3}{r}{
     \infer[\mathit{nCut}]{\Delta \vdash \mathbf{v}:C}{\mathcal{C}_1& \Delta'_1\vdash \mathbf{u}^{\eta}:A_1 & \Delta'_2\vdash \mathbf{z}^{\beta}:A_2 & \mathcal{C}_2 & \Delta'', \mathbf{u}^{\eta}:A_1, \mathbf{z}^{\beta}: A_2 \vdash  \mathbf{w}^\alpha:B  & \mathcal{C}_3 }}\\
     \end{array}\]}%
         {
   \[\hspace{-2cm}
\begin{array}{ccc}
     \infer[\mathit{nCut}]{\Delta \vdash \mathbf{v}:C}{\mathcal{C}_1& \infer[\multimap R]{ \Delta' \vdash \mathbf{z}^{\beta}:A_1 \multimap A_2}{\Delta', \mathbf{u}^\eta: A_1 \vdash \mathbf{z}^{\beta}:A_2}& \mathcal{C}_2 & \infer[\multimap L]{\Delta'', z^{\beta}:A_1 \multimap A_2 \vdash  \mathbf{w}^\alpha:B}{\Delta_1''\vdash  \mathbf{u}^{\eta}:A_1 & \Delta_2'', \mathbf{z}^{\beta}: A_2 \vdash \mathbf{w}^\alpha:B } & \mathcal{C}_3 }  &   \xRightarrow{\mathsf{PRd}}&
     \hspace*{4em} \\
     \multicolumn{3}{r}{
     \infer[\mathit{nCut}]{\Delta \vdash \mathbf{v}:C}{\mathcal{C}_1& \mathcal{C}_2 & \Delta_1''\vdash  \mathbf{u}^{\eta}:A_1 & \Delta', \mathbf{u}^\eta: A_1 \vdash \mathbf{z}^{\beta}:A_2 & \Delta_2'', \mathbf{z}^{\beta}: A_2 \vdash \mathbf{w}^\alpha:B   & \mathcal{C}_3 }}\\
     \end{array}\]}%
         \[\hspace{-2cm}{\small
\begin{array}{lcl}
    \infer[\mathit{nCut}]{\Delta \vdash \mathbf{v}:C}{\mathcal{C}_1 & \infer[\mu R]{ \Delta' \vdash \mathbf{z}^{\beta}:T(\overline{t})}{ \Delta'\vdash \mathbf{z}^{\beta+1}:[\overline{t}/\overline{x}]A & T(\overline{x})=_{\mu}A }& \mathcal{C}_2 &\infer[\mu L]{\Delta'', \mathbf{z}^{\beta}:T(\overline{t}) \vdash \mathbf{w}^\alpha:B}{\Delta'', \mathbf{z}^{\beta+1}: [\overline{t}/\overline{x}]A  \vdash \mathbf{w}^\alpha:B & T(\overline{x})=_{\mu}A} & \mathcal{C}_3 }&
     \xRightarrow{\mathsf{PRd}}&
     \hspace*{4em} \\
     \multicolumn{3}{r}{
     \infer[\mathit{nCut}]{\Delta \vdash \mathbf{v}:C}{\mathcal{C}_1 & \Delta'\vdash \mathbf{z}^{\beta+1}:[\overline{t}/\overline{x}]A  &\mathcal{C}_2 &\Delta'', \mathbf{z}^{\beta+1}: [\overline{t}/\overline{x}]A  \vdash \mathbf{w}^\alpha:B&  \mathcal{C}_3 }}
\end{array}
}\]
\[\hspace{-2cm}{
\begin{array}{lcl}
     \infer[\mathit{nCut}]{\Delta_1,\Delta_2 \vdash \mathbf{z}^\beta:A_1 \otimes A_2}{\mathcal{C}& \infer[\otimes R]{ \Delta_1', \Delta'_2 \vdash \mathbf{z}^{\beta}:A_1 \otimes A_2}{\Delta'_1\vdash \mathbf{u}^{\eta}:A_1 & \Delta'_2\vdash \mathbf{z}^{\beta}:A_2}  }&  \xRightarrow{\mathsf{RFLIP}} & \infer[\otimes R]{\Delta_1, \Delta_2 \vdash \mathbf{z}^\beta:A_1 \otimes A_2}{\infer[nCut]{\Delta_1\vdash \mathbf{u}^{\eta}:A_1}{ \mathcal{C}_{\Delta'_1} &  \Delta'_1\vdash \mathbf{u}^{\eta}:A_1} & \infer[\mathit{nCut}]{\Delta_2\vdash \mathbf{z}^{\beta}:A_2}{\mathcal{C}_{\Delta'_2} & \Delta'_2\vdash \mathbf{z}^{\beta}:A_2 } }\\
          \end{array}}\]
         \[\hspace{-3cm}\mbox{
$\mathcal{C}_{\Delta_1'}$ in the above reduction is a subset of the tape $\mathcal{C}$ connected to $\Delta'_1$.} \mbox{By definition of tape, two sets  $\mathcal{C}_{\Delta_1'}$ and  $\mathcal{C}_{\Delta_2'}$ partition $\mathcal{C}$.
}\]
     \[\hspace{-2cm} {\begin{array}{lcl}
     \infer[\mathit{nCut}]{\Delta, \mathbf{z}^\beta: A_1 \otimes A_2  \vdash \mathbf{v}:C}{\mathcal{C}_1 & \infer[\otimes L]{ \Delta',\mathbf{z}^\beta: A_1 \otimes A_2 \vdash \mathbf{w}^{\alpha}:B}{\Delta', \mathbf{u}^\eta: A_1, \mathbf{z}^\beta: A_2 \vdash \mathbf{w}^{\alpha}:B} & \mathcal{C}_2 }&  \xRightarrow{\mathsf{LFLIP}} & \infer[\otimes L]{\Delta, \mathbf{z}^\beta: A_1 \otimes A_2  \vdash \mathbf{v}:C}{ \infer[\mathit{nCut}]{\Delta, \mathbf{u}^\eta: A_1 , \mathbf{z}^\beta:A_2 \vdash \mathbf{v}:C}{ \mathcal{C}_1 &  \Delta', \mathbf{u}^\eta: A_1, \mathbf{z}^\beta: A_2 \vdash \mathbf{w}^{\alpha}:B & \mathcal{C}_2}}\\\\
     \end{array}
}\]
\[\hspace{-2cm}{
\begin{array}{lcl}
     \infer[\mathit{nCut}]{\Delta, \mathbf{z}^\beta:s=t \vdash  \mathbf{w}^\alpha:B}{ \mathcal{C}_1 & \infer[=L]{\Delta',\mathbf{z}^\beta:s=t \vdash \mathbf{w}^\alpha:B}{\Delta'[\theta] \vdash \mathbf{w}^\alpha:B'[\theta] & \forall \theta \in \mathtt{mgu}(t,s)}& \mathcal{C}_2}  & \xRightarrow{\mathsf{LFLIP}} & \hspace*{20em} \\
     \multicolumn{3}{r}{
     \infer[=L]{\Delta, \mathbf{z}^\beta:s=t \vdash \mathbf{w}^\alpha:B}{ \infer[\mathit{nCut}]{\Delta[\theta] \vdash \mathbf{w}^\alpha:B[\theta] }{\mathcal{C}_1[\theta] &  \Delta'[\theta] \vdash \mathbf{w}^\alpha:B'[\theta] & \mathcal{C}_2[\theta]  } & \forall \theta \in \mathtt{mgu}(t,s) }}
     \end{array}
}\]
\vspace{-10pt}%
\[\hspace{-2.2cm}{
\begin{array}{lcl}
     \infer[\mathit{nCut}]{\Delta \vdash \mathbf{z}^\beta:C }{\mathcal{C}_1 & \infer[ID]{\mathbf{x}^\alpha:A \vdash \mathbf{w}^\gamma:A }{} & \mathcal{C}_2 }  & \xRightarrow{\mathsf{ID-Elim}} & \infer[\mathit{nCut}]{\Delta \vdash \mathbf{z}^\beta:C}{\mathcal{C}_1 & \mathcal{C}_2[\mathbf{x}^\alpha/\mathbf{w}^\gamma] }\\
      \infer[\mathit{nCut}]{\Delta\vdash \mathbf{v}:C}{  {\mathcal{C}_1} &  \infer[\mathsf{Cut}]{ \Delta',\Delta''\vdash \mathbf{w}^\alpha:B }{  \Delta' \vdash \mathbf{z}^\beta: A & \Delta'', \mathbf{z}^\beta: A  \vdash \mathbf{w}^\alpha:B } & {\mathcal{C}_2} }  & \xRightarrow{\mathsf{Merge}} &
   \infer[\mathit{nCut}]{ \Delta \vdash \mathbf{v}:C}{    {\mathcal{C}_1} &   \Delta' \vdash \mathbf{z}^\beta: A & \Delta'', \mathbf{z}^\beta: A \vdash  \mathbf{w}^\alpha:B   & {\mathcal{C}_2}}\\\\
      \end{array} }\]
    \caption{Primitive operations.}
    \label{fig:cred}
\end{figure*}}

\begin{algorithm}[]
\SetAlgoLined
 Initialization: $\Lambda \leftarrow \emptyset; Q \leftarrow [(\epsilon, [v])]$; $v$ is the root sequent. $\rho(s)$ is the rule applied on formula annotated with  position variable $s$, it can either be an $\mathsf{ID}$, $\mathsf{Cut}$, a $L$ rule, or a $R$ rule. $\mathsf{lft}(M)$ and $\mathsf{rgt}(M)$ are defined in Definition \ref{def:tape}. The $\mathsf{Flip}$ rules will return a rule that they permuted down, the sequent corresponding to that, and a list $\mathit{List}$ of one or two tapes.  The output of the algorithm is a tree labelled by $\{0,1\}$. For each node $w \in \{0,1\}^*$ of the tree it also identifies the corresponding sequent, $\mathit{s}(w)$, and the rule applied on the node, $\mathit{r}(w)$. \\ 
 \While{$Q \neq \emptyset$}{
 $(w,M)\leftarrow \mathit{pull}(Q)$\;
 $\Lambda \leftarrow \Lambda \cup \{w\}$\;
 $M \leftarrow \mathit{Treat}(M)$\;
  \eIf{$|M| = 1\, \mathsf{and}\, \rho(\mathsf{lft}(M))=\mathsf{ID}$ }   {
  $(\mathit{r}(w), \mathit{s}(w),\mathit{List})\leftarrow \mathsf{IdOut}(M)$\;
   }{
   \eIf{$\exists s\in \mathsf{lft}(M). \rho(s)\in L$}{
   $(\mathit{r}(w), \mathit{s}(w),\mathit{List}) \leftarrow \mathsf{LFlip}(M)$\;
   }{ \If{$\exists s \in \mathsf{rgt}(M). \rho(s)\in R$}{
   $(\mathit{r}(w), \mathit{s}(w),\mathit{List}) \leftarrow \mathsf{RFlip}(M)$\;
   }}
   \eIf{$\mathit{List}= [M']$}{
   $\mathit{push}((w0,M'),Q)$\;}{
   \If{$\mathit{List}= [M'0,M'1]$}{$\mathit{push}((w0,M'0),Q)$\;
   $\mathit{push}((w1,M'1),Q$\;
   }}
  }
 }
 \caption{Cut elimination algorithm} \label{algorithm}
\end{algorithm}

  {\small
 \begin{algorithm}[]
\SetAlgoLined
Initialization: $M$ is a branching tape. $i$ and $j$ in $\mathsf{PRd}(M,i,j)$ are the index of the two sequents in tape on which the reduction rules are applied. Similarly $i$ in $\mathsf{Merge}(M,i)$ and $\mathsf{idElim}$ is the  index of the sequent in the tape on which the corresponding rule is applied. $\rho'(i)$ is the rule applied on the $i$-th sequent of the tape, it can either be an $\mathsf{ID}$, $\mathsf{Cut}$, a $L$ rule, or a $R$ rule.\\
 \While{$\rho(\mathsf{lft}(M))\not  \in L\, \mathsf{and}\, \rho(rgt(M))\not  \in R  $}{
 \eIf{$|M|>1 \, \mathsf{and}\, \exists i\in M:\rho'(i) =\mathsf{ID} $ }   {
  $M\leftarrow \mathsf{IdElim}(M,i)$\;
   }{
   \eIf{$\exists i\in M:\rho'(i) =\mathsf{Cut}$}{
   $M\leftarrow \mathsf{Merge}(M,i)$\;
   }{ \If{$\exists i.\exists j.\exists\, \circ\in\{1,\oplus, \&, \otimes, \multimap\}.\rho'(i)=\circ R$ and $\rho'(i)=\circ L$ } {
   $M\leftarrow \mathsf{PRd}(M,i,j)$\;
   }}
  }
 }
 \caption{Treat Function}\label{alg:treat}
\end{algorithm}
}

Our cut elimination algorithm is given as Algorithms \ref{algorithm} and \ref{alg:treat}. We define a function \emph{Treat} that reduces the sequence in a branching tape with principal reductions ($\mathsf{PRd}$) until either a left rule is applied on one of its leftmost formulas or a right rule is applied on its rightmost formula. While this condition holds, the algorithm applies a \emph{flip} rule on a leftmost/rightmost formula of the tape. The flipping step is always productive since it pushes a cut one step up. It suffices to show that the principal reductions are terminating to prove productivity of the algorithm. We prove termination of the principal reductions in Lemma \ref{thm:main}. 


\begin{lemma}\label{thm:main} For every input tape $M$, computation of $\mathit{Treat}(M)$ halts.

\end{lemma}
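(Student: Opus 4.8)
The plan is to prove the lemma by contradiction. Suppose that for some tape $M$ the while loop of $\mathit{Treat}(M)$ runs forever, producing an infinite sequence of primitive operations. I would first argue that this forces infinitely many principal reductions $\mathsf{PRd}$: the steps $\mathsf{IdElim}$ and $\mathsf{Merge}$ only consume $\mathsf{Id}$- and $\mathsf{Cut}$-headed sequents, and a chain built solely from $\mathsf{Cut}$ and $\mathsf{Id}$ rules contains no $\mu L$ or $\nu R$ rule, so it can be neither a left $\mu$-trace nor a right $\nu$-trace. By validity of the input derivation no infinite branch has this form, hence only finitely many $\mathsf{Merge}$ and $\mathsf{IdElim}$ steps can intervene before a $\mathsf{PRd}$ fires. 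Thus an infinite run contains infinitely many $\mathsf{PRd}$ steps.

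The second step is to read off an infinite branch of the input derivation. Every sequent on the tape is the root of a valid subderivation, and adjacent sequents are glued along a shared cut variable $\mathbf{z}^\beta$ that is the succedent of one and an antecedent of the next. A $\mathsf{PRd}$ step acts on such a shared formula and moves simultaneously up the two subderivations meeting on it, so following the principal position variables of successive reductions defines a thread. Since each subderivation is finitely branching and there are infinitely many reductions, K\"onig's lemma yields an infinite thread, which projects to an infinite branch $B$ of the input derivation. By the validity condition, $B$ is a left $\mu$-trace or a right $\nu$-trace; note that on the side of a fixed-point reduction that \emph{consumes} the cut formula a $\mu L$ (resp.\ $\nu L$) is applied, while on the side that \emph{produces} it the matching $\mu R$ (resp.\ $\nu R$) fires, and by the annotation rules only one of these two directions actually decreases a snapshot.

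With the valid trace in hand I would, following Fortier and Santocanale, attach to it a well-founded rank and show that each $\mathsf{PRd}$ strictly decreases this rank, so that infinitely many reductions are impossible. Their argument for singleton logic hinges on the fact that a $\nu$-trace in a valid derivation meets only finitely many right branches created by $\mathsf{Cut}$, which makes the collection of reachable succedent position variables a complete lattice carrying such a rank.

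The hard part, and the one genuine departure from the singleton case, is that in our calculus $\otimes R$ and $\multimap L$ also create right branches. In the $\otimes R$ rule the fresh succedent $\mathbf{w}^\eta$ keeps the relation $\mathbf{w}^\eta = \mathbf{z}^\beta$ to its parent, so a right $\nu$-trace can traverse infinitely many $\otimes R$ branches and the ``finitely many right branches'' property fails outright. I would repair this by separating branchings that introduce a genuinely fresh, incomparable succedent --- those of $\mathsf{Cut}$ and of $\multimap L$, whose $\mathbf{w}^\eta$ switches polarity and can never equal $\mathbf{z}^\beta$ --- from $\otimes R$ branchings, which preserve the succedent relation. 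Under this distinction the succedent position variables reachable along the trace no longer form a complete lattice but only a chain-complete partial order, and the remaining work is to check that chain-completeness already provides the suprema needed to define the rank (taking limits along chains at limit stages) and to verify that $\mathsf{PRd}$ strictly decreases it. This yields the desired contradiction and establishes that $\mathit{Treat}(M)$ halts.
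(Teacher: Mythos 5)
Your setup matches the paper's: proof by contradiction, a trace tree built from the run of $\mathit{Treat}$, and---crucially---you correctly identify the paper's key structural move, namely separating the branchings of $\msc{Cut}$ and $\multimap L$ (genuinely fresh, incomparable succedent) from those of $\otimes R$ (succedent kept related to its parent), so that the branches of the trace form a chain-complete partial order rather than Fortier--Santocanale's complete lattice. Your preliminary observations (infinitely many $\mathsf{PRd}$ steps, K\"onig's lemma giving an infinite branch, validity of that branch) are also sound in spirit.

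The genuine gap is in how the contradiction is actually derived. You propose to ``attach a well-founded rank to the valid trace and show that each $\mathsf{PRd}$ strictly decreases it,'' but no such local rank argument appears in Fortier--Santocanale or in the paper, and it is doubtful one exists: validity is an infinitary condition on branches, not a well-foundedness condition on a measure that individual reductions decrease, and a single valid branch $B$ obtained by K\"onig's lemma carries no contradiction by itself. The paper's argument is instead a \emph{global} order-theoretic one over the whole set of branches of the trace tree $\Psi$, resting on three jointly contradictory statements: (i) every maximal infinite branch of $\Psi$ is a $\mu$-branch (here K\"onig's lemma is used to extract the \emph{greatest} branch under a temporary ordering of the $\otimes$-labels, and a $\nu$-branch is excluded because it would need infinitely many $1/2$-branches or a larger branch on its right); (ii) the meet of a descending chain of $\mu$-branches below a maximal branch exists---this is exactly where chain-completeness is used---and is again a $\mu$-branch; and (iii) every $\mu$-branch admits a strictly smaller $\mu$-branch. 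Statement (iii) is the engine of the contradiction and is entirely absent from your sketch: it relies on the fact that two branches can share a position variable only if they are comparable ($\otimes R$-branches have disjoint sequents), so the infinitely many $\mu L/\nu L$ steps witnessing a left $\mu$-trace must reduce against matching $\mu R/\nu R$ steps on a connected, strictly smaller branch, forcing that branch to be a $\mu$-branch as well. Your proposal stops precisely where this core combinatorial work begins, so as written it does not constitute a proof.
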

\begin{proof}
We show that $\mathit{Treat}(M)$ does not have an infinite computation tree. Assume for the sake of contradiction that $\mathit{Treat}(M)$ has an infinite computation tree and gets into an infinite loop.
Put  $M_i$ for $i \ge 1$ to be the branching tape in memory before the $i$-th turn of the loop, with $M_1=M$. We build the full trace $T$ of the algorithm. $T$ is a tree which its root is $(0,0)$ and the branches are built using the following transition rules. We use essentially the same transition rules as in  Fortier and Santocanale\cite{Fortier13csl}. 
\begin{itemize}
    \item For $1 \le i \le |M_1|$,  $(0,0) \rightarrow^{i} (1,i)$.\\
    \item If $M_{n+1}= \mathsf{ID-Elim}(M_{n},i)$ then \begin{itemize}
        \item $(n,k)\rightarrow^\bot (n+1,k)$ for $k<i$,
        \item $(n,k)\rightarrow^{0} (n+1,k-1)$ for $k>i$.\\
    \end{itemize}
    \item If $M_{n+1}= \mathsf{Merge}(M_{n},i)$ then \begin{itemize}
        \item $(n,k)\rightarrow^\bot (n+1,k)$ for $k<i$,
        \item $(n,i) \rightarrow^1 (n+1, i)$,
        \item $(n,i) \rightarrow^2 (n+1, i+1)$,
        \item $(n,k)\rightarrow^{0} (n+1,k+1)$ for $k>i$.
    \end{itemize}
\end{itemize}
One difference between our algorithm, and Fortier and Santocanale's is that in our algorithm the sequents subject to reduction may not be next to each other. Thus, the $\mathsf{PRd}$ function needs to receive two indices in the tape to find the sequents for reduction. All reductions except those corresponding to $\otimes$ and $\multimap$ are non-branching($\mathit{nb}$) and their transition rules are quite similar to the one introduced by Fortier and Santocanale\cite{Fortier13csl}.
\begin{itemize}
    \item If $M_{n+1}= \mathsf{PRd}_{\mathit{nb}}(M_{n},i,j)$ then \begin{itemize}
        \item $(n,k)\rightarrow^\bot (n+1,k)$ for $k\not\in\{i,j\}$,
        \item $(n,i) \rightarrow^0 (n+1, i)$,
        \item  $(n,j)\rightarrow^{0} (n+1,j)$.
    \end{itemize}
\end{itemize}
The reductions corresponding to $\otimes$ and $\multimap$, however, produce a branch and need to be defined separately:
\begin{itemize}
    \item If $M_{n+1}= \mathsf{PRd}_{\otimes}(M_{n},i,j)$ then \begin{itemize}
        \item $(n,k)\rightarrow^\bot (n+1,k)$ for $k<i$,
        \item $(n,i) \rightarrow^{1_a} (n+1, i)$ and $(n,i) \rightarrow^{1_b} (n+1, i+1)$,
        \item  $(n,j)\rightarrow^{0} (n+1,j+1)$,
        \item $(n,k)\rightarrow^{\bot} (n+1,k+1)$ for $i<k<j$ or $k>j$. 
    \end{itemize}.  
     \item If $M_{n+1}= \mathsf{PRd}_{\multimap}(M_{n},i,j)$ then 
     \begin{itemize}
         \item $(n,k)\rightarrow^\bot (n+1,k)$ for $k<i$,
         \item $(n,i) \rightarrow^0 (n+1, j)$,
          \item $(n,k)\rightarrow^\bot (n+1,k-1)$ for $i<k<j$,
         \item $(n,j) \rightarrow^1 (n+1, j-1)$ and $(n,j) \rightarrow^2 (n+1, j+1)$, 
         \item $(n,k)\rightarrow^{\bot} (n+1,k+1)$ for $k>j$.
     \end{itemize}
\end{itemize}

A node $(n,k)$ in $T$ is read as  the $k$-th element of branching tape $M_n$. Transitions labelled by $\bot$ mean that the sequent has not evolved by a reduction rule, while other labels show that the sequent is evolved into one or two (in the case of branching rules) new sequents in the next tape. We get the real trace $\Psi$ by collapsing the transitions labelled by $\bot$.
In addition to the labels $0,1,2$ used by Fortier and Santocanale, we use two distinct labels $1_a$ and $1_b$ in the transitions of $\mathsf{PRd}_{\otimes}$. The order $<$ is defined over the labels as $0<1<1_a<2$ and $0<1<1_b<2$. $\Psi$ is an infinite, finitely branching labelled tree with prefix order $\sqsubseteq$ and lexicographical order $<$. A branch in $\Psi$ is a maximal path. The set of all branches of $\Psi$ ordered lexicographically forms a chain complete partially ordered set. 

The labels $1_a, 1_b$ are to distinguish between two types of branching in $\Psi$: (i) the $1/2$ branching that occurs in  $\mathsf{Merge}$ and $\mathsf{PRd}_{\multimap}$ rules, and (ii) the $1_a/1_b$ branching in the $\mathsf{PRd}_{\otimes}$. In the first case the left branch is lexicographically less than the right branch, while in the second case the branches are incomparable. It is important to observe that two branches may share a position variable $x^\alpha$ only if one is less than the other one: the sequents in the branches created by $\mathsf{PRd}_{\otimes}$ are disjoint. 

An infinite branch is a $\mu$- branch (resp. $\nu$-branch) if its corresponding derivation is a $\mu$- trace (resp. $\nu$-trace). By our validity condition $\Psi$ satisfies the property that a $\nu$-branch $\alpha$ can only admit finitely many  branches $\beta>\alpha$ on its right side. This restriction corresponds to the $1/2$ branching by cuts and $\multimap$ reductions.

We prove the following three contradictory statements:
 \begin{enumerate}
     \item[(i)] An infinite branch of $\Psi$ which is not less than any other infinite branches (a maximal infinite branch) is a $\mu$-branch:
     
     Assume that we add $1_a<1_b$ to the ordering, then the set of all branches of $\Psi$ forms a complete lattice, and by Konig's lemma it has a greatest infinite branch $\gamma$. This branch is maximal if we dismiss the relation $1_a<1_b$ from the ordering. The branch $\gamma$ is either a $\mu$- or a $\nu$- branch. Assume it is a $\nu$- branch.  Then either it forms infinitely many branches $\{\beta_i\}_{i\in I}$ on its right such that $\gamma<\beta_i$ or there is an infinite branch $\beta$ greater than it. In both cases we can form a contradiction.
     
    Note that if $\gamma$ is a $\nu$-branch, then there is an infinite chain of inequalities for  position variables $\mathbf{x1}^{\alpha_1}, \mathbf{x2}^{\alpha_2}, \cdots$ on the succedents of $\gamma$: 
{\[
\mathsf{snap}(\mathbf{x1}^{\alpha_1})>_{\Omega_1^\gamma}\mathsf{snap}(\mathbf{x2}^{\alpha_2})>_{\Omega_2^\gamma}\cdots.
\]}%
   These position variables can connect sequents in $\gamma$ to the sequents in one or more branches \{$\beta_i\}_{i \in I}$ only if they also occur as antecedents in $\{\beta_i\}_{i \in I}$, i.e. $\gamma < \{\beta_i\}_{i \in I}$.

     \item[(ii)] Let $\gamma$ be a maximal infinite branch, and form a decreasing chain of $\mu$-branches in $\Psi$ starting from $\gamma$: $\cdots < \beta_2 < \beta_1 < \gamma$.  Put $E$ to be the elements of this chain.  Then $\eta=\bigwedge E$ exists since $\Psi$ is complete chain and it is a $\mu$-branch:
      If $\eta \in E$ then it is trivially true. Otherwise, by the way we constructed $\Psi$, it means that $\eta$ has infinitely many branches $\{\beta_i\}_{i \in I}$ on its right such that $\eta < \beta_i$ and thus cannot be a $\nu$ branch.
     
     \item[(iii)] If $\beta$ is a $\mu$-branch, then there exists another $\mu$-branch $\beta' < \beta$:
     
      $\beta$ is a $\mu$-branch so for infinitely many position variables  $\mathbf{x1}^{\alpha_1}, \mathbf{x2}^{\alpha_2}, \cdots$ on the antecedents of $\beta$ we can form an infinite chain of inequalities 
{\[
\mathsf{snap}(\mathbf{x1}^{\alpha_1})>_{\Omega_1^\beta}\mathsf{snap}(\mathbf{x2}^{\alpha_2})>_{\Omega_2^\beta}\cdots.
\]}%
      There are two possibilities here:
      \begin{enumerate}
          \item[(a)] There is an infinite branch $\beta'<\beta$ with infinitely many position variables $\mathbf{xi}^{\alpha_{i}}, \mathbf{x\{i+1\}}^{\alpha_{i+1}}, \cdots$ as its succedents. Note that these position variables connect sequents in $\beta$ to the sequents in $\beta'$ infinitely many times. So every $\mu/\nu L$ rule in $\beta$ reduces with a $\mu/\nu R$ rule in $\beta'$. This means that a $\mu R$ rule with priority $i$ is applied on the succedent of $\beta'$ infinitely often but no priority $j<i$ has an infinitely many $\nu R$ rule in $\beta'$. %
      \item[(b)] There is an infinite branch $\beta'<\beta$ with infinitely many ($1/2$) branches $\{\delta_i\}_{i \in I}>\beta$ on its right. Note that if we only have infinitely $1a/1b$ right branches corresponding to $\otimes$- reductions, the chain of inequalities $\mathsf{snap}(\mathbf{x1}^{\alpha_1})>_{\Omega_1^\beta}\mathsf{snap}(\mathbf{x2}^{\alpha_2})>_{\Omega_2^\beta}\cdots$ does not hold.
      \end{enumerate}
     In both cases $\beta'$ cannot be a $\nu$-branch and thus is a $\mu$-branch.
 \end{enumerate} 
 
 Items (i)-(iii) form a contradiction. We can form the nonempty collection $E$ of all $\mu$- branches in $\Psi$ that from a maximal decreasing chain starting from $\gamma$ by (i). By (ii) we get $(\eta= \bigwedge E)\in E$ is the minimum of this chain. This forms a contradiction with (iii) and maximality of $E$.
 
 With a similar reasoning, we can prove that the output of the cut elimination algorithm is also a valid derivation. Since the reasoning of the proof is similar to the above, we only provide a high level description here. Consider a branch $b$ in the output derivation of Algorithm \ref{algorithm}. Using a similar set of transition rules in the above proof we can build a trace for the full algorithm (including the treating part). For the flip (external reduction) rules that create two branches we continue with the branch corresponding to $b$ and dismiss the other one. The transition rules for non-branching external reductions are similar to the non-branching principal ones. Consider $\Psi_b$ to be the real trace built based on branch $b$ in the derivation. If $b$ is finite or a $\nu$-trace, we are done. Assume that branch $b$ is not finite and is not a $\nu$-trace.  We know that a branch $s$ containing the succedents of $b$  exists in trace $\Psi_b$ and is a $\mu$-trace. By a similar reasoning to item (iii) above, we get that either there is a $\mu$-trace $t<s$ or the antecedents that make $s$ a $\mu$-trace are the antecedents  of branch $b$ in the output derivation. In the second case, we are done. In the first case, we can form a contradiction by proving a similar statement to item (ii) above.
\end{proof}

\begin{customthm}{1}
A valid (infinite) derivation enjoys the cut elimination property. 
\end{customthm}
\begin{proof}
 We annotate a given derivation in the system of Figure \ref{fig:rules-1} to get a derivation in the system of Figure \ref{fig:rules-2} productively (as described in Section \ref{sec:validitycondition}).  As a corollary to Lemma~\ref{thm:main} the cut elimination algorithm (Algorithm \ref{algorithm}) produces a potentially infinite valid cut free proof for the annotated derivation. By simply ignoring the annotations of the output, we get a cut free proof in the calculus of Figure \ref{fig:rules-1}.
\end{proof}

\subsection{Guard condition for processes}\label{app:guard}

\begin{definition}
An infinite branch of a derivation is a \emph{left $\mu$-trace} if for infinitely many channels $x1^{\alpha_1}, x2^{\alpha_2}, \cdots$ appearing as antecedents of judgments in the branch as
$${\deduce{\vdots}{\infer{{x1}^{\alpha_1}: \mathtt{A}_1 \vdash_{\Omega} \mathtt{Q}_1 :: ({w}^{\beta}:\mathtt{C}_1)}{ \infer{\vdots}{\deduce{{{x2}^{\alpha_2}: \mathtt{A}_2 \vdash_{\Omega_1} \mathtt{Q}_2 :: (y^{\delta}:\mathtt{C}_2)}}{\vdots}}}  }}$$
we can form an infinite chain of inequalities 
{\small $
\mathsf{snap}(x1^{\alpha_1})>_{\Omega_1}\mathsf{snap}(x2^{\alpha_2})>_{\Omega_2}\cdots.
$}%

 Dually, an infinite branch of a derivation is 
a \emph{right $\nu$-trace} if for infinitely many channels $y1^{\beta_1}, y2^{\beta_2}, \cdots$ appearing as the succedents of judgments in the branch as
$${\deduce{\vdots}{\infer{ \bar{x}^\alpha: \omega_1\vdash_{\Omega} \mathtt{Q}_1 :: ({y1}^{\beta_1}:\mathtt{C}_1)}{ \infer{\vdots}{\deduce{{\bar{w}^\delta: \omega_2 \vdash_{\Omega_1} \mathtt{Q}_2:: ({y2}^{\beta_2} :\mathtt{C}_2)}}{\vdots}}}  }}$$
we can form an infinite chain of inequalities 
{\small$
\mathsf{snap}(y1^{\beta_1})>_{\Omega_1}\mathsf{snap}(y2^{\beta_2})>_{\Omega_2}\cdots.
$}

\end{definition}

\subsection{Asynchronous dynamics}\label{app:async}
Here we provide the more detailed definition for predicate  $\mathsf{Cfg}$. To ensure that the messages sent along channel $x$ are received in the correct order, the generation $\alpha$ of $x$ is incremented to $\alpha+1$ with each communication.
\begin{figure}[h]
\newcommand{\semi}{\mathrel{;}}
\small  \centering
 {\begin{equation*}
\hspace{-3cm} 
\begin{array}{llcll}
1. &\mathsf{Cfg}_{x^\alpha:\mathtt{A},x^\alpha:\mathtt{A}} (\mathsf{\cdot}) & =^{\mathbf{n}+2}_{\mu}& 1 & \msc{Emp}\\
2. &\mathsf{Cfg}_{\bar{x}^\alpha:\omega,y^\beta:\mathtt{B}} (\mathcal{C}_1|_{z:\mathtt{C}}\mathcal{C}_2) & =^{\mathbf{n}+2}_{\mu}& \exists z. \exists \eta.  \mathsf{Cfg}_{\bar{x}^\alpha:\omega,z^\eta:\mathtt{C}}(\mathcal{C}_1) \otimes  \mathsf{Cfg}_{z^\eta:\mathtt{C},y^\beta:\mathtt{B}}(\mathcal{C}_2) & \msc{Comp}\\
3.&\mathsf{Cfg}_{x^\alpha:\mathtt{A},y^\beta:\mathtt{A}} (y\leftarrow x) & =^{\mathbf{n}+2}_{\mu}& (x^\alpha=y^\beta)& \msc{Id}\\
4.&\mathsf{Cfg}_{\bar{x}^\alpha:\omega,y^\beta:\mathtt{B}}( (z:\mathtt{C}) \leftarrow \mathsf{Q}_1 \semi \mathsf{Q}_2) & =^{\mathbf{n}+2}_{\mu} &  \exists z. \exists \eta.  \mathsf{Cfg}_{\bar{x}^\alpha:A,z^\eta:\mathtt{C}}(\mathsf{Q}_1) \otimes  \mathsf{Cfg}_{z^\eta:\mathtt{C},y^\beta:\mathtt{B}}(\mathsf{Q}_2) & \msc{Cut}\\
5.&\mathsf{Cfg}_{\cdot,y^\beta:1}(\mathbf{close} Ry)& =^{\mathbf{n}+2}_{\mu}&  \mathsf{Msg}(y^{\beta}.\mbox{closed}) \otimes 1 & 1\\
6.&\mathsf{Cfg}_{x^\alpha:1,y^\beta:\mathtt{A}}(\mathbf{wait} Lx; \mathsf{Q})& =^{\mathbf{n}+2}_{\mu} &   \mathsf{Msg}(x^{\alpha}.\mbox{closed}) \multimap \mathsf{Cfg}_{ \cdot, y^\beta:\mathtt{A}}(\mathsf{Q}) & 1\\
7.&\mathsf{Cfg}_{\bar{x}^\alpha:\omega, y^\beta:\&\{\ell:\mathtt{B}_\ell\}_{\ell \in L}}(\mathbf{case} Ry(\ell \Rightarrow \mathsf{Q}_\ell)_{\ell\in L})& =^{\mathbf{n}+2}_{\mu}& \&\{\ell:  \mathsf{Msg}(y^\beta.\ell) \multimap \mathsf{Cfg}_{\bar{x}^\alpha:\omega, y^{\beta+1}:\mathtt{B}_\ell}(\mathsf{Q}_\ell)\}_{\ell \in L} & \&\\
8.&\mathsf{Cfg}_{x^\alpha:\&\{\ell:\mathtt{A}_\ell\}_{\ell \in L}, y^\beta:\mathtt{B}}({Lx}.k;\mathsf{Q}) & =^{\mathbf{n}+2}_{\mu} &  \mathsf{Msg}(x^\alpha.k) \otimes \mathsf{Cfg}_{x^{\alpha+1}:\mathtt{A}_k, y^\beta:\mathtt{B}}(\mathsf{Q})& \& \\
9.&\mathsf{Cfg}_{ \bar{x}^\alpha:\omega, y^{\beta}: \oplus \{\ell:\mathtt{B}_\ell\}_{\ell \in L}}({Rw}.k;\mathsf{Q}) & =^{\mathbf{n}+2}_{\mu} &  \mathsf{Msg}(y^\beta.k) \otimes \mathsf{Cfg}_{ \bar{x}^\alpha:\omega, y^{\beta+1}:\mathtt{B}_k}(\mathsf{Q})& \oplus\\
10.&\mathsf{Cfg}_{x^\alpha:\oplus\{\ell:\mathtt{A}_\ell\}_{\ell \in L}, y^\beta:\mathtt{B}}(\mathbf{case} Lx(\ell \Rightarrow \mathsf{Q}_\ell)_{\ell\in L})& =^{\mathbf{n}+2}_{\mu}& \&\{\ell:  \mathsf{Msg}(x^\alpha.\ell) \multimap \mathsf{Cfg}_{x^{\alpha+1}:\mathtt{A}_\ell, y^\beta:\mathtt{B}}(\mathsf{Q}_\ell)\}_{\ell\in L} & \oplus \\
11.&\mathsf{Cfg}_{\bar{x}^\alpha:\omega, y^\beta: \mathtt{t}}(\mathbf{case} Ry(\nu_\mathtt{t} \Rightarrow \mathsf{Q}))& =^{\mathbf{n}+2}_{\mu}&   \mathsf{Msg}(y^\beta.\nu_t) \multimap \mathsf{Cfg}_{\bar{x}^\alpha:\omega, y^{\beta+1}:\mathtt{C}}(\mathsf{Q})& \mathtt{t}=^i_{\nu} \mathtt{C} \in \mathbf{\Sigma}\\ 
12.&\mathsf{Cfg}_{ x^\alpha: \mathtt{t}, y^\beta:\mathtt{B}}({Lx}.\nu_\mathtt{t};\mathsf{Q}) & =^{\mathbf{n}+2}_{\mu} &  \mathsf{Msg}(x^\alpha.\nu_t) \otimes \mathsf{Cfg}_{x^{\alpha+1}:\mathtt{C}, y^\beta:B}(\mathsf{Q}) & \mathtt{t}=^i_{\nu} \mathtt{C} \in \mathbf{\Sigma}\\
13.&\mathsf{Cfg}_{ \bar{x}^\alpha:\omega, y^\beta:\mathtt{t}}({Ry}.\mu_\mathtt{t};\mathsf{Q}) & =^{\mathbf{n}+2}_{\mu} &  \mathsf{Msg}(y^\beta.\mu_t) \otimes \mathsf{Cfg}_{\bar{x}^{\alpha}:\omega, y^{\beta+1}:\mathtt{C}}(\mathsf{Q}) & \mathtt{t}=^i_{\mu} \mathtt{C} \in \mathbf{\Sigma}\\
14.&\mathsf{Cfg}_{x^\alpha:\mathtt{t}, y^\beta:\mathtt{B}}(\mathbf{case} Lx(\mu_\mathtt{t} \Rightarrow \mathsf{Q}))& =^{\mathbf{n}+2}_{\mu}&    \mathsf{Msg}(x^\alpha.\mu_t) \multimap \mathsf{Cfg}_{x^{\alpha+1}:\mathtt{C},y^\beta:\mathtt{B}}(\mathsf{Q})& \mathtt{t}=^i_{\mu} \mathtt{C} \in \mathbf{\Sigma}\\
15.&\mathsf{Cfg}_{\bar{x}^\alpha:\omega,y^\beta:\mathtt{B}}(\mathsf{Y}) & =^{n+1}_{\nu}& \mathsf{Cfg}_{\bar{x}^\alpha:\omega, y^\beta:\mathtt{B}}(\mathsf{Q} [y/w, \overline{x}/\bar{u}] ) & \bar{u}:\omega \vdash \mathsf{Y}=\mathsf{\mathsf{Q}}:: (w:\mathtt{B}) \in V \end{array}
\end{equation*}
}        \caption{Definition of predicate $\mathsf{Cfg}$.}
    \label{fig:comp-2}
 \end{figure}

Here, we show a part of the equivalent formula corresponding to the first two lines of Figure \ref{fig:comp-2},{ where $c_i$ and $d_i$ stand for variables in {\small $\mathit{FIMALL}^{\infty}_{\mu,\nu}$}}:
{\small\[
\begin{array}{cl}
   \mathsf{Cfg}_{\bar{x}^\alpha:d_1,y^\beta:d_2}(d_3)=& (\exists c_1,c_2,z,c.\, (d_3=c_1|_{z:c}c_2)\,\otimes\, \exists z.\exists \eta. \mathsf{Cfg}_{\bar{x}^\alpha:d_1,z^\eta:c}(c_1) \otimes  \mathsf{Cfg}_{z^\eta:c,y^\beta:d_2}(c_2))  \\
     &  \oplus ((d_3= \cdot) \otimes 1) \oplus \cdots
\end{array}\]}
We use an abbreviation for the last case (row 15) where the predicate is a greatest fixed point. We can unfold its definition using finitely many intermediate predicates $\mathsf{Call}_Y$ (for each $\mathsf{Y} \in V$) as: {\small\[\begin{array}{lcl}\mathsf{Cfg}_{\bar{x}^\alpha:\omega,y^\beta:\mathtt{B}}(\mathsf{Y})& =_\mu^{\mathbf{n}+2} &
\mathsf{Call}_Y(\bar{x}^\alpha:\omega,y^\beta:\mathtt{B})\\
\mathsf{Call}_Y(\bar{x}^\alpha:\omega,y^\beta:\mathtt{B})& =^{n+1}_{\nu}& \mathsf{Cfg}_{\bar{x}^\alpha:\omega, y^\beta:\mathtt{B}}(\mathsf{Q} [y/w, \overline{x}/\bar{u}] ) \quad \bar{u}:\omega \vdash \mathsf{Y}=\mathsf{\mathsf{Q}}:: (w:\mathtt{B}) \in V. 
\end{array}
\]}

\subsection{Strong progress in session types}\label{app:derivation}

\subsubsection{Converting definition of Figure~\ref{fig:[]} to a formula in $\mathit{FIMALL}^{\infty}_{\mu,\nu}$}

For $\mathtt{t}=^i_{\nu}\mathtt{A}\in \mathbf{\Sigma}$, the definition $[y^\beta:\mathtt{t}]$ is an abbreviation. We can unfold the abbreviation, using finitely many intermediate predicates $\mathsf{Unfold}_\mathtt{s}$ (for each $s=_{\nu}\mathtt{A} \in \mathbf{\Sigma}$):
{\small\[\begin{array}{lcll}
 {[y^\beta: \mathtt{t}]} & =^{n+2}_{\mu} &  \mathsf{Unfold}_\mathtt{t}(y^{\beta})\\
 \mathsf{Unfold}_\mathtt{t}(y^{\beta}) & =^i_{\nu} & ( \mathsf{Msg}(y^\beta.\nu_t) \multimap[y^{\beta+1}:\mathtt{A}]) & \qquad \mathtt{t}=^i_{\nu}\mathtt{A}\in \mathbf{\Sigma}

\end{array}\]}
It means that we do not have a single closed encoding of session-typed programs and strong progress, but we have a different encoding for every signature $\mathbf{\Sigma}$. 
Having this abbreviation helps us in the proof of the strong progress theorem: we can assign a matching position variable $\mathbf{y}^\beta$ to every predicate $[y^\beta:\mathtt{t}]$ occurring in the derivation. The structure of the derivation or its validity won't be affected by the abbreviation, but it ensures that the generation of position variables steps at the same pace as the generation of channels (See Appendix \ref{app:derivation}).

\subsubsection{Proof of strong progress}
\begin{customlem}{1.}\label{lem:derivation}
For a configuration of processes $\bar{x}^\alpha:\omega \vdash \mathcal{C} ::(y^\beta:\mathtt{B})$, there is a (possibly infinite) derivation for $ \mathcal{C} \in \llbracket \bar{x}^\alpha:\omega \vdash y^\beta:\mathtt{B} \rrbracket$.
\end{customlem}

\begin{proof}
 Here, we build a derivation for $\mathcal{C} \in \llbracket \bar{x}^\alpha:\omega \vdash y^\beta:\mathtt{B} \rrbracket$. Recall that this is an abbreviation (abb.) for
 \[\forall \mathrm{C}.\, (\forall z. \forall \eta. \mathrm{C} \in \llbracket z^\eta:\omega \rrbracket \multimap \mathrm{C}\mid_{\bar{x}:\mathtt{A}}\mathcal{C} \in \llbracket  y^\beta:\mathtt{B} \rrbracket).\]
And $\mathcal{C} \in \llbracket  y^\beta:\mathtt{B} \rrbracket$ is an abbreviation for {\small$\mathsf{Cfg}_{\cdot,y^\beta:B}(\mathcal{C}) \multimap  [y^\beta:\mathtt{B}].$}

\[\infer[\textit{abb.}]{\cdot \vdash \mathcal{C} \in \llbracket \bar{x}^\alpha:\omega  \vdash y^\beta:\mathtt{B} \rrbracket}{\infer[\forall R]{\cdot \vdash\forall \mathrm{C}.\, (\forall z. \forall \eta. \mathrm{C} \in \llbracket z^\eta:\omega \rrbracket \multimap \mathrm{C}\mid_{\bar{x}^\alpha:\omega}\mathcal{C} \in \llbracket  y^\beta:\mathtt{B} \rrbracket)}{\infer[\multimap R]{\cdot \vdash \forall z. \forall \eta. \mathrm{C} \in \llbracket z^\eta:\omega \rrbracket \multimap \mathrm{C}\mid_{\bar{x}^\alpha:\omega}\mathcal{C} \in \llbracket  y^\beta:\mathtt{B} \rrbracket}{\infer[\textit{abb.}]{\forall z. \forall \eta. \mathrm{C} \in \llbracket z^\eta:\omega \rrbracket \vdash \mathrm{C}\mid_{\bar{x}^\alpha:\omega}\mathcal{C} \in \llbracket  y^\beta:\mathtt{B} \rrbracket}{\infer[\multimap R]{\forall z. \forall \eta. \mathrm{C} \in \llbracket z^\eta:\omega \rrbracket \vdash \mathsf{Cfg}_{\cdot,y^\beta:\mathtt{B}}(\mathrm{C}\mid_{\bar{x}^\alpha:A}\mathcal{C}) \multimap  [y^\beta:\mathtt{B}]}{\infer[\mu L]{\forall z. \forall \eta. \mathrm{C} \in \llbracket z^\eta:\omega \rrbracket, \mathsf{Cfg}_{\cdot,y^\beta:\mathtt{B}}(\mathrm{C}\mid_{\bar{x}^\alpha:A}\mathcal{C}) \vdash  [y^\beta:\mathtt{B}]}{\infer[\exists L]{\forall z. \forall \eta. \mathrm{C} \in \llbracket z^\eta:\omega \rrbracket, \exists z. \exists \eta.   \mathsf{Cfg}_{\cdot,z^\eta:\omega}(\mathrm{C}) \otimes  \mathsf{Cfg}_{z^\eta:\omega,y^\beta:\mathtt{B}}(\mathcal{C}) \vdash  [y^\beta:\mathtt{B}]}{\infer[\forall L]{\forall z. \forall \eta. \mathrm{C} \in \llbracket z^\eta:\omega \rrbracket,   \mathsf{Cfg}_{\cdot,z^\eta:\omega}(\mathrm{C}) \otimes  \mathsf{Cfg}_{z^\eta:\omega,y^\beta:\mathtt{B}}(\mathcal{C}) \vdash  [y^\beta:\mathtt{B}]}{{\infer[\otimes L]{\mathrm{C} \in \llbracket z^\eta:\omega \rrbracket,   \mathsf{Cfg}_{\cdot,z^\eta:\omega}(\mathrm{C}) \otimes  \mathsf{Cfg}_{z^\eta:\omega,y^\beta:\mathtt{B}}(\mathcal{C}) \vdash  [y^\beta:\mathtt{B}]}{\infer[\textit{abb.}]{\mathrm{C} \in \llbracket z^\eta:\omega \rrbracket,   \mathsf{Cfg}_{\cdot,z^\eta:\omega}(\mathrm{C}), \mathsf{Cfg}_{z^\eta:\omega,y^\beta:\mathtt{B}}(\mathcal{C}) \vdash  [y^\beta:\mathtt{B}]}{\infer[\multimap L]{\mathsf{Cfg}_{\cdot,z^\eta:\omega}(\mathrm{C}) \multimap  [z^\eta:\omega],  \mathsf{Cfg}_{\cdot,z^\eta:\omega}(\mathrm{C}), \mathsf{Cfg}_{z^\eta:\omega,y^\beta:\mathtt{B}}(\mathcal{C}) \vdash  [y^\beta:\mathtt{B}]}{\infer[\msc{Id}]{\mathsf{Cfg}_{\cdot,z^\eta:\omega}(\mathrm{C}) \vdash \mathsf{Cfg}_{\cdot,z^\eta:\omega}(\mathrm{C})}{}  &\deduce{[z^\eta:\omega],  \mathsf{Cfg}_{z^\eta:\omega,y^\beta:\mathtt{B}}(\mathcal{C}) \vdash  [y^\beta:\mathtt{B}]}{\star}}}}}}}}}}}}}\]

 To complete the derivation,  it is enough to build a circular derivation  $\star\, [x^\alpha:\mathtt{A}],  \mathsf{Cfg}_{x^\alpha:\mathtt{A},y^\beta:\mathtt{B}}(\mathcal{C}) \vdash  [y^\beta:\mathtt{B}]$ for an open configuration $x^\alpha:\mathtt{A} \vdash \mathcal{C}:: y^\beta{:}\mathtt{B}$. if the configuration is closed, it is enough to build a derivation $ \star \;\mathsf{Cfg}_{\cdot,y^\beta:\mathtt{B}}(\mathcal{C}) \vdash  [y^\beta{:}\mathtt{B}]$ for a closed configuration $\cdot \vdash \mathcal{C}:: y^\beta:\mathtt{B}$. For the sake of brevity we write $\star\,  \overline{[\bar{x}^\alpha:\omega]},  \mathsf{Cfg}_{\bar{x}^\alpha:\omega,y^\beta:\mathtt{B}}(\mathcal{C}) \vdash  [y^\beta:\mathtt{B}]$ as a generalization for both open and closed configurations, where $\overline{[\bar{x}^\alpha:\omega]}$ is either empty or $[\bar{x}^\alpha:\omega]$, and it is empty if and only if $\bar{x}:\omega$ is empty. We provide a circular derivation for each possible pattern of $\mathcal{C}$. Here are the circular derivations when $\mathcal{C}$ is an empty configuration and a composition of configurations, respectively:

{\small \[ \hspace*{-1cm}\begin{array}{lr}
\infer[\mu L]{\star \; {[x^\alpha:\mathtt{A}]}, \mathsf{Cfg}_{x^\alpha:\mathtt{A},x^\alpha: \mathtt{A}}(\cdot) \vdash {[x^\beta:\mathtt{A}]}}{\infer[1 L]{{[x^\alpha:\mathtt{A}]}, 1 \vdash {[x^\alpha:\mathtt{A}]}}{\infer[\msc{ID}]{[x^\alpha:\mathtt{A}] \vdash [x^\alpha:\mathtt{A}]}{}}} & \infer[\mu L]{\star \; \overline{[\bar{x}^\alpha:\omega]}, \mathsf{Cfg}_{\bar{x}^\alpha:\omega,y^\beta:\mathtt{B}}(\mathcal{C}_1 \mid_{z:\mathtt{C}} \mathcal{C}_2) \vdash {[y^\beta:\mathtt{B}]}}{\infer[\exists L]{\overline{[\bar{x}^\alpha:\omega]}, \exists z. \exists \zeta. (\mathsf{Cfg}_{\bar{x}^\alpha:\omega, z^\zeta:\mathtt{C}}(\mathcal{C}_1) \otimes \mathsf{Cfg}_{z^\zeta:\mathtt{C}, y^\beta:\mathtt{B}}(\mathcal{C}_2)) \vdash {[y^\beta:\mathtt{B}]}}{\infer[\otimes L]{\overline{[\bar{x}^\alpha:\omega]}, \mathsf{Cfg}_{\bar{x}^\alpha:\omega,z^\zeta:\mathtt{C}}(\mathcal{C}_1)\otimes \mathsf{Cfg}_{z^\zeta:\mathtt{C}, y^\beta:\mathtt{B}}(\mathcal{C}_2)) \vdash{[y^\beta:\mathtt{B}]} }{\infer[\msc{Cut}]{\overline{[\bar{x}^\alpha:\omega]},  \mathsf{Cfg}_{\bar{x}^\alpha:\omega,z^\zeta:\mathtt{C}}(\mathcal{C}_1),\mathsf{Cfg}_{z^\zeta:\mathtt{C}, y^\beta:\mathtt{B}}(\mathcal{C}_2)) \vdash{[y^\beta:\mathtt{B}]} }{\deduce{\overline{[\bar{x}^\alpha:\omega]}, \mathsf{Cfg}_{\bar{x}^\alpha:\omega,z^\zeta:\mathtt{C}}(\mathcal{C}_1) \vdash {[z^\zeta:\mathtt{C}]}}{\star} & \deduce{{[z^\zeta:\mathtt{C}]}, \mathsf{Cfg}_{z^\zeta:\mathtt{C}, y^\beta:\mathtt{B}}(\mathcal{C}_2) \vdash{[y^\beta:\mathtt{B}]}}{\star} }}}}
 \end{array}\]}
  
It should be clear how to unfold the derivation given based on patterns to a circular definition in the system of Figure \ref{fig:rules-2}. Consider the first two lines in expanded definition of predicate $\mathsf{Cfg}$:
{\small\[
\begin{array}{cl}
   \mathsf{Cfg}_{\bar{x}^\alpha:d_1,y^\beta:d_2}(d_3)=& (\exists c_1,c_2,z,c.\, (d_3=c_1|_{z:c}c_2)\,\otimes\, \exists z.\exists \eta. \mathsf{Cfg}_{\bar{x}^\alpha:d_1,z^\eta:c}(c_1) \otimes  \mathsf{Cfg}_{z^\eta:c,y^\beta:d_2}(c_2))  \\
     &  \oplus ((d_3= \cdot) \otimes 1) \oplus \cdots
\end{array}\]}
To prove the judgment $\star [x^\alpha:\mathtt{A}],  \mathsf{Cfg}_{x^\alpha:\mathtt{A},y^\beta:\mathtt{B}}(\mathcal{C}) \vdash  [y^\beta:\mathtt{B}]$ without pattern matching, we first need to unfold the definition of $\mathsf{Cfg}_{x^\alpha:\mathtt{A},y^\beta:\mathtt{B}}$ with a $\mu L$ rule. Next, we apply an $\oplus L$ rule on the resulting formula which is the right side of the above definition. After this step, we are in a quite similar situation to the pattern matching argument: we have to prove several branches each corresponding to a pattern of $\mathcal{C}$. In some cases, we may need to apply extra equality and $\oplus$ rules in the derivation without pattern matching. However, for all cases the fixed point rules applied in a cycle are the same in the derivations built with and without pattern matching.

For the cases in which the configuration consists of a single process we give an annotated derivation with position variables and track the relationship between their generations. Without loss of generality, we annotate predicates of the form $[z^\eta:\mathtt{C}]$ with a matching position variable $\mathbf{z}^\eta$ at the start (the bottom) of each cycle. We leave it to the reader to check that this assumption holds as an invariant at the end (the top) of each cycle in the derivation.

\begin{description}
\item {\bf Case 1.} $(y \leftarrow x)$ \[\infer[\mu L]{\star \; {\mathbf{x^\alpha}:[x^\alpha:\mathtt{A}]}, \mathbf{c}^\eta:\mathsf{Cfg}_{x^\alpha:\mathtt{A},y^\beta:\mathtt{A}}(y \leftarrow x) \vdash_{\Lambda} \mathbf{y}^\beta:{[y^\beta:\mathtt{A}]}}{\infer[=L]{\mathbf{x}^\alpha:{[x^\alpha:\mathtt{A}]},\mathbf{c}^{\eta+1}:(x^\alpha=y^\beta) \vdash_{\Lambda_1} \mathbf{y}^\beta:{[y^\beta:\mathtt{A}]}}{\infer[\msc{ID}]{\mathbf{x}^\alpha: {[y^\beta:\mathtt{A}]}\vdash_{\Lambda_1} \mathbf{y}^\beta: {[y^\beta:\mathtt{A}]}}{}}}\]
 $ \Lambda_1=\Lambda \cup\{\mathbf{c}^{\eta+1}_{n+2}<\mathbf{c}^{\eta}_{n+2}\} \cup \{\mathbf{c}^{\eta+1}_{i}=\mathbf{c}^{\eta}_{i} \mid i \neq n+2\}.$\\
 

 \item {\bf Case 2.} $(\mathtt{close}\,Ry)$
 
 \[\infer[\mu L]{\star \; \mathbf{c}^\eta: \mathsf{Cfg}_{\cdot, y^\beta:1}(\mathtt{close}\, Ry) \vdash_{\Lambda} \mathbf{y}^\beta: {[y^\beta:1]}}{\infer[\mu R]{\mathbf{c}^{\eta+1}:\mathsf{Msg}(y^\beta.\mbox{closed})  \otimes 1 \vdash_{\Lambda_1} \mathbf{y}^\beta: {[y^\beta:1]}}{\infer[\otimes L]{\mathbf{c}^{\eta+1}:\mathsf{Msg}(y^\beta.\mbox{closed})  \otimes 1 \vdash_{\Lambda_2} \mathbf{y}^{\beta+1}: \mathsf{Msg}(y^\beta.\mbox{closed}) \otimes [\cdot:\cdot]}{{\infer[\otimes R]{\mathbf{w}^\delta:\mathsf{Msg}(y^\beta.\mbox{closed}) , \mathbf{c}^{\eta+1}:1 \vdash_{\Lambda_3} \mathbf{y}^{\beta+1}: \mathsf{Msg}(y^\beta.\mbox{closed})  \otimes [\cdot:\cdot]}{\infer[\msc{ID}]{\mathbf{w}^\delta:\mathsf{Msg}(y^\beta.\mbox{closed})  \vdash_{\Lambda_4} \mathbf{z}^\gamma:\mathsf{Msg}(y^\beta.\mbox{closed}) }{} & \infer[\mu R]{\mathbf{c}^{\eta+1}:1 \vdash_{\Lambda_3} \mathbf{y}^{\beta+1}:[\cdot:\cdot]}{\infer[1L]{\mathbf{c}^{\eta+1}:1 \vdash_{\Lambda_5} \mathbf{y}^{\beta+2}:1}{
\infer[1R]{\cdot \vdash_{\Lambda_5} \mathbf{y}^{\beta+2}:1}{}}}}}}}}\]
$\Lambda_1=\Lambda \cup\{\mathbf{c}^{\eta+1}_{n+2}<\mathbf{c}^{\eta}_{n+2}\} \cup \{\mathbf{c}^{\eta+1}_{i}=\mathbf{c}^{\eta}_{i} \mid i \neq n+2\}, \quad $
$ \Lambda_2=\Lambda_1 \cup\{\mathbf{y}^{\beta+1}_{n+2}<\mathbf{y}^{\beta}_{n+2}\} \cup \{\mathbf{y}^{\beta+1}_{i}=\mathbf{y}^{\beta}_{i} \mid i \neq n+2\}\, \quad$
$\Lambda_3=\Lambda_2 \cup\{\mathbf{c}^{\eta+1}=\mathbf{w}^\delta\}, $ $\Lambda_4=\Lambda_3 \cup\{\mathbf{y}^{\beta+1}=\mathbf{z}^\gamma\}, \Lambda_5=\Lambda_3 \cup\{\mathbf{y}^{\beta+2}_{n+2}<\mathbf{y}^{\beta+1}_{n+2}\} \cup \{\mathbf{y}^{\beta+2}_{i}=\mathbf{y}^{\beta+1}_{i} \mid i \neq n+2\}\, \quad$\\

\item {\bf{Case 3.}} $(\mathtt{wait}\, Lx;\mathsf{Q})$

{\small \begin{equation}
    \hspace*{-2cm}\infer[\mu L]{\star \; \mathbf{x}^\alpha:{[x^\alpha:1]}, \mathbf{c}^\eta:\mathsf{Cfg}_{x^\alpha:1, y^\beta:\mathtt{B}}(\mathtt{wait}\, Lx;\mathsf{Q}) \vdash_{\Lambda} \mathbf{y}^\beta:{[y^\beta:\mathtt{B}]}}{\infer[\mu L]{\mathbf{x}^\alpha:{[x^\alpha:1]}, \mathbf{c}^{\eta+1}: \mathsf{Msg}(x^\alpha.\mbox{closed}) \multimap \mathsf{Cfg}_{\cdot, y^\beta:\mathtt{B}}(\mathsf{Q}) \vdash_{\Lambda_1} \mathbf{y}^\beta:{[y^\beta:\mathtt{B}]}}{\infer[\otimes L]{\mathbf{x}^{\alpha+1}:\mathsf{Msg}(x^\alpha.\mbox{closed}) \otimes [\cdot:\cdot], \mathbf{c}^{\eta+1}: \mathsf{Msg}(x^\alpha.\mbox{closed}) \multimap \mathsf{Cfg}_{\cdot, y^\beta:\mathtt{B}}(\mathsf{Q}) \vdash_{\Lambda_2} \mathbf{y}^\beta:{[y^\beta:\mathtt{B}]}}{\infer[\multimap L]{\mathtt{w}^\omega:\mathsf{Msg}(x^\alpha.\mbox{closed}), \mathbf{x}^{\alpha+1}: [\cdot:\cdot], \mathbf{c}^{\eta+1}: \mathsf{Msg}(x^\alpha.\mbox{closed}) \multimap \mathsf{Cfg}_{\cdot, y^\beta:\mathtt{B}}(\mathsf{Q}) \vdash_{\Lambda_3} \mathbf{y}^\beta:{[y^\beta:\mathtt{B}]}}{\infer[\msc{ID}]{\mathtt{w}^\omega:\mathsf{Msg}(x^\alpha.\mbox{closed}) \vdash_{\Lambda_3} \mathbf{z}^\kappa:\mathsf{Msg}(x^\alpha.\mbox{closed})}{} & \infer[\mu L]{\mathbf{x}^{\alpha+1}: [\cdot:\cdot], \mathbf{c}^{\eta+1}: \mathsf{Cfg}_{\cdot, y^\beta:\mathtt{B}}(\mathsf{Q}) \vdash_{\Lambda_3} \mathbf{y}^\beta:{[y^\beta:\mathtt{B}]}}{\infer[1 L]{\mathbf{x}^{\alpha+2}: 1, \mathbf{c}^{\eta+1}: \mathsf{Cfg}_{\cdot, y^\beta:\mathtt{B}}(\mathsf{Q}) \vdash_{\Lambda_4} \mathbf{y}^\beta:{[y^\beta:\mathtt{B}]}}{\deduce{\mathbf{c}^{\eta+1}: \mathsf{Cfg}_{\cdot, y^\beta:\mathtt{B}}(\mathsf{Q}) \vdash_{\Lambda_4} \mathbf{y}^\beta:{[y^\beta:\mathtt{B}]}}{\star}}}}}}}\notag\end{equation}}

{\small
$\Lambda_1=\Lambda \cup\{\mathbf{c}^{\eta+1}_{n+2}<\mathbf{c}^{\eta}_{n+2}\} \cup \{\mathbf{c}^{\eta+1}_{i}=\mathbf{c}^{\eta}_{i} \mid i \neq n+2\},$
$ \Lambda_2=\Lambda_1 \cup\{\mathbf{x}^{\alpha+1}_{n+2}<\mathbf{x}^{\alpha}_{n+2}\} \cup \{\mathbf{x}^{\alpha+1}_{i}=\mathbf{x}^{\alpha}_{i} \mid i \neq n+2\}\,,$
$\Lambda_3=\Lambda_2 \cup\{\mathbf{x}^{\alpha+1}=\mathbf{w}^\omega\}, $ $ \Lambda_4=\Lambda_3 \cup\{\mathbf{x}^{\alpha+2}_{n+2}<\mathbf{x}^{\alpha+1}_{n+2}\} \cup \{\mathbf{x}^{\alpha+2}_{i}=\mathbf{x}^{\alpha+1}_{i} \mid i \neq n+2\}\, \quad$}\\

\item {\bf Case 4.} $((z:\mathtt{C}) \leftarrow \mathsf{Q}_1;\mathsf{Q}_2)$

{\small
\begin{equation}
    \hspace*{-2cm}\infer[\mu L]{\star \; \overline{\mathbf{x}^\alpha:{[\bar{x}^\alpha:\omega]}}, \mathbf{c}^\eta:\mathsf{Cfg}_{\bar{x}^\alpha:\omega,y^\beta:\mathtt{B}}((z:\mathtt{C}) \leftarrow \mathsf{Q}_1;\mathsf{Q}_2) \vdash_{\Lambda} \mathbf{y}^\beta:{[y^\beta:\mathtt{B}]}}{\infer[\exists L]{\overline{\mathbf{x}^\alpha:{[\bar{x}^\alpha:\omega]}}, \mathbf{c}^{\eta+1}:\exists z. \exists \zeta. (\mathsf{Cfg}_{\bar{x}^\alpha:\omega, z^\zeta:\mathtt{C}}(\mathsf{Q}_1) \otimes \mathsf{Cfg}_{z^\zeta:\mathtt{C}, y^\beta:\mathtt{B}}(\mathsf{Q}_2)) \vdash_{\Lambda_1} \mathbf{y}^\beta:{[y^\beta:\mathtt{B}]}}{\infer[\otimes L]{\overline{\mathbf{x}^\alpha:{[\bar{x}^\alpha:\omega]}}, \mathbf{c}^{\eta+1}:\mathsf{Cfg}_{\bar{x}^\alpha:\omega,z^\zeta:\mathtt{C}}(\mathsf{Q}_1)\otimes \mathsf{Cfg}_{z^\zeta:\mathtt{C}, y^\beta:\mathtt{B}}(\mathsf{Q}_2)) \vdash_{\Lambda_1} \mathbf{y}^\beta:{[y^\beta:\mathtt{B}]} }{\infer[\msc{Cut}]{\overline{\mathbf{x}^\alpha:{[\bar{x}^\alpha:\omega]}},  \mathbf{w}^{\omega}:\mathsf{Cfg}_{\bar{x}^\alpha:\omega,z^\zeta:\mathtt{C}}(\mathsf{Q}_1),\mathbf{c}^{\eta+1}:\mathsf{Cfg}_{z^\zeta:\mathtt{C}, y^\beta:\mathtt{B}}(\mathsf{Q}_2)) \vdash_{\Lambda_2} \mathbf{y}^\beta:{[y^\beta:\mathtt{B}]} }{\deduce{\overline{\mathbf{x}^\alpha:{[\bar{x}^\alpha:\omega]}}, \mathbf{w}^{\omega}: \mathsf{Cfg}_{\bar{x}^\alpha:\omega,z^\zeta:\mathtt{C}}(\mathsf{Q}_1) \vdash_{\Lambda_2} \mathbf{z}^\zeta:{[z^\zeta:\mathtt{C}]}}{\star} & \deduce{\mathbf{z}^\zeta:{[z^\zeta:\mathtt{C}]}, \mathbf{c}^{\eta+1}:\mathsf{Cfg}_{z^\zeta:\mathtt{C}, y^\beta:\mathtt{B}}(\mathsf{Q}_2) \vdash_{\Lambda_2} \mathbf{y}^\beta:{[y^\beta:\mathtt{B}]}}{\star} }}}}\notag\end{equation}}

{\small$ \Lambda_1=\Lambda \cup\{\mathbf{c}^{\eta+1}_{n+2}<\mathbf{c}^{\eta}_{n+2}\} \cup \{\mathbf{c}^{\eta+1}_{i}=\mathbf{c}^{\eta}_{i} \mid i \neq n+2\}\, \mathit{and}$
$\Lambda_2=\Lambda_1 \cup \{\mathbf{w}^\omega=\mathbf{c}^{\eta+1}\}.$}\\

\item {\bf Case 5.}$(Lx.k;\mathsf{Q})$
{\small
\begin{equation}
    \hspace*{-2cm}\infer[\mu L]{\star \; \mathbf{x}^\alpha:{[x^\alpha:\&\{\ell:\mathtt{A}_\ell\}_{\ell \in L}]}, \mathbf{c}^\eta:\mathsf{Cfg}_{x^\alpha:\&\{\ell:\mathtt{A}_\ell\}_{\ell \in L},y^\beta:\mathtt{B}}(Lx.k;\mathsf{Q}) \vdash_{\Lambda} \mathbf{y}^\beta:{[y^\beta:\mathtt{B}]}}{\infer[\mu L]{\mathbf{x}^\alpha:{[x^\alpha:\&\{\ell:\mathtt{A}_\ell\}_{\ell \in L}]}, \mathbf{c}^{\eta+1}:\mathsf{Msg}(x^\alpha.k) \otimes \mathsf{Cfg}_{x^{\alpha+1}:\mathtt{A}_k,y^\beta:\mathtt{B}}(\mathsf{Q}) \vdash_{\Lambda_1} \mathbf{y}^\beta:{[y^\beta:\mathtt{B}]}}{\infer[\&L]{\mathbf{x}^{\alpha+1}:\&\{ \ell:{\mathsf{Msg}(\ell, x^{\alpha+1}:\mathtt{A}_\ell) \multimap[x^{\alpha+1}:A_\ell]}\}_{\ell \in L}, \mathbf{c}^{\eta+1}:\mathsf{Msg}(x^\alpha.k) \otimes \mathsf{Cfg}_{x^{\alpha+1}:\mathtt{A}_k,y^\beta:\mathtt{B}}(\mathsf{Q})\vdash_{\Lambda_2}  \mathbf{y}^\beta:{[y^\beta:\mathtt{B}]}}{\infer[\otimes L]{\mathbf{x}^{\alpha+1}:\mathsf{Msg}(x^\alpha.k) \multimap {[x^{\alpha+1}:\mathtt{A}_k]}\}, \mathbf{c}^{\eta+1}:\mathsf{Msg}(x^\alpha.k) \otimes \mathsf{Cfg}_{x^{\alpha+1}:\mathtt{A}_k,y^\beta:\mathtt{B}}(\mathsf{Q}) \vdash_{\Lambda_2} \mathbf{y}^\beta:{[y^\beta:\mathtt{B}]}}{\infer[\multimap L]{\mathbf{x}^{\alpha+1}:\mathsf{Msg}(x^\alpha.k) \multimap {[x^{\alpha+1}:\mathtt{A}_k]}, \mathbf{w}^\omega:\mathsf{Msg}(x^\alpha.k), \mathbf{c}^{\eta+1}:\mathsf{Cfg}_{x^{\alpha+1}:\mathtt{A}_k,y^\beta:\mathtt{B}}(\mathsf{Q}) \vdash_{\Lambda_3} \mathbf{y}^\beta:{[y^\beta:\mathtt{B}]}}{\infer[\msc{ID}]{\mathbf{w}^\omega:\mathsf{Msg}(x^\alpha.k)\vdash_{\Lambda_3} \mathbf{z}^{\kappa}:\mathsf{Msg}(x^\alpha.k)}{} & \deduce{\mathbf{x}^{\alpha+1}:{[x^{\alpha+1}:\mathtt{A}_k]},  \mathbf{c}^{\eta+1}:\mathsf{Cfg}_{x^{\alpha+1}:\mathtt{A}_k,y^\beta:\mathtt{B}}(\mathsf{Q}) \vdash_{\Lambda_3} \mathbf{y}^\beta:{[y^\beta:\mathtt{B}]}}{\star}}}}}}
    \notag\end{equation}
}

{\small
$\Lambda_1=\Lambda \cup\{\mathbf{c}^{\eta+1}_{n+2}<\mathbf{c}^{\eta}_{n+2}\} \cup \{\mathbf{c}^{\eta+1}_{i}=\mathbf{c}^{\eta}_{i} \mid i \neq n+2\},$ 
$ \Lambda_2=\Lambda_1 \cup\{\mathbf{x}^{\alpha+1}_{n+2}<\mathbf{x}^{\alpha}_{n+2}\} \cup \{\mathbf{x}^{\alpha+1}_{i}=\mathbf{x}^{\alpha}_{i} \mid i \neq n+2\}\,,$
$\Lambda_3=\Lambda_2 \cup\{\mathbf{c}^{\eta+1}=\mathbf{w}^\omega\}.$ }

\item {\bf Case 6.} $(Ry.k;\mathsf{Q})$ Dual to {\bf Case 5}.
\item {\bf Case 7.} $(\mathbf{case}Lx(\ell \Rightarrow \mathsf{Q}_\ell)_{\ell \in L})$

{\footnotesize \begin{equation}
    \hspace*{-2cm}\infer[\mu L]{\star \;  \mathbf{x}^\alpha: {[x^\alpha:\oplus\{\ell:\mathtt{A}_\ell\}_{\ell \in L}]}, \mathbf{c}^\eta:\mathsf{Cfg}_{x^\alpha:\oplus\{\ell:A_\ell\}_{\ell \in L}, y^\beta:\mathtt{B}}(\mathbf{case}Lx(\ell \Rightarrow \mathsf{Q}_\ell)_{\ell \in L}) \vdash_{\Lambda} {\mathbf{y}^\beta:[y^\beta:\mathtt{B}]}}{\infer[\mu L]{\mathbf{x}^\alpha: {[x^\alpha:\oplus\{\ell:\mathtt{A}_\ell\}_{\ell \in L}]}, \mathbf{c}^{\eta+1}:\&\{\ell: \mathsf{Msg}(y^\beta.\ell) \multimap \mathsf{Cfg}_{x^{\alpha+1}:\mathtt{A}_\ell, y^\beta:\mathtt{B}}(\mathsf{Q}_\ell)\}_{\ell \in L} \vdash_{\Lambda_1} {\mathbf{y}^\beta:[y^\beta:\mathtt{B}]}}{\infer[\oplus L]{\mathbf{x}^{\alpha+1}:\oplus\{ \ell:\mathsf{Msg}(y^\beta.\ell) \otimes {[x^{\alpha+1}:A_\ell]}\}_{\ell \in L}, \mathbf{c}^{\eta+1}:\&\{\ell: \mathsf{Msg}(y^\beta.\ell) \multimap \mathsf{Cfg}_{x^{\alpha+1}:\mathtt{A}_\ell, y^\beta:\mathtt{B}}(\mathsf{Q}_\ell)\}_{\ell \in L} \vdash_{\Lambda_2} {\mathbf{y}^\beta:[y^\beta:\mathtt{B}]}}{\forall k \in L &  \infer[\& L]{{\mathbf{x}^{\alpha+1}:\mathsf{Msg}(y^\beta.k) \otimes [x^{\alpha+1}:\mathtt{A}_k]}, \mathbf{c}^{\eta+1}:\&\{\ell:\mathsf{Msg}(y^\beta.\ell) \multimap \mathsf{Cfg}_{x^{\alpha+1}:\mathtt{A}_\ell, y^\beta:\mathtt{B}}(\mathsf{Q}_\ell)\}_{\ell \in L} \vdash_{\Lambda_2} {\mathbf{y}^\beta:[y^\beta:\mathtt{B}]}}{\infer[\otimes L]{{\mathbf{x}^{\alpha+1}:\mathsf{Msg}(y^\beta.k)\otimes[x^{\alpha+1}:\mathtt{A}_k]}, \mathbf{c}^{\eta+1}:\mathsf{Msg}(y^\beta.k) \multimap\mathsf{Cfg}_{x^{\alpha+1}:\mathtt{A}_k, y^\beta:\mathtt{B}}(\mathsf{Q}_k) \vdash_{\Lambda_2} {\mathbf{y}^\beta:[y^\beta:\mathtt{B}]}}{\infer[\multimap L]{\mathbf{w}^{\omega}:\mathsf{Msg}(y^\beta.k), \mathbf{x}^{\alpha+1}:[x^{\alpha+1}:\mathtt{A}_k], \mathbf{c}^{\eta+1}:\mathsf{Msg}(y^\beta.k) \multimap\mathsf{Cfg}_{x^{\alpha+1}:\mathtt{A}_k, y^\beta:\mathtt{B}}(\mathsf{Q}_k) \vdash_{\Lambda_3} {\mathbf{y}^\beta:[y^\beta:\mathtt{B}]}}{\infer[\msc{Id}]{\mathbf{w}^{\omega}:\mathsf{Msg}(y^\beta.k)\vdash_{\Lambda_3} \mathbf{z}^\zeta:\mathsf{Msg}(y^\beta.k)}{} & \hspace{-0.3cm}\deduce{ \mathbf{x}^{\alpha+1}:[x^{\alpha+1}:\mathtt{A}_k], \mathbf{c}^{\eta+1}:\mathsf{Cfg}_{x^{\alpha+1}:\mathtt{A}_k, y^\beta:\mathtt{B}}(\mathsf{Q}_k) \vdash_{\Lambda_3} {\mathbf{y}^\beta:[y^\beta:\mathtt{B}]}}{\star}}}}}}}\notag\end{equation}
}
{\small
$\Lambda_1=\Lambda \cup\{\mathbf{c}^{\eta+1}_{n+2}<\mathbf{c}^{\eta}_{n+2}\} \cup \{\mathbf{c}^{\eta+1}_{i}=\mathbf{c}^{\eta}_{i} \mid i \neq n+2\},$
$ \Lambda_2=\Lambda_1 \cup\{\mathbf{x}^{\alpha+1}_{n+2}<\mathbf{x}^{\alpha}_{n+2}\} \cup \{\mathbf{x}^{\alpha+1}_{i}=\mathbf{x}^{\alpha}_{i} \mid i \neq n+2\}\,,$
$\Lambda_3=\Lambda_2 \cup\{\mathbf{x}^{\alpha+1}=\mathbf{w}^\omega\}. $}\\

\item{\bf Case 8.} $(\mathbf{case}Ry(\ell \Rightarrow \mathsf{Q}_\ell)_{\ell \in L})$ Dual to {\bf Case 7}.
 
\item {\bf Case 9.} $(Lx.\nu_\mathtt{t};\mathsf{Q})$ where $\mathtt{t}=^k_{\nu}\mathtt{C}$

\begin{equation}
    \hspace*{-3cm}\infer[\mu L]{\star \;  \mathbf{x}^\alpha:{[x^\alpha:\mathtt{t}]}, \mathbf{c}^\eta:\mathsf{Cfg}_{x^\alpha:\mathtt{t},y^\beta:B}(Lx.\nu_\mathtt{t};\mathsf{Q}) \vdash_{\Lambda} \mathbf{y}^\beta: {[y^\beta:\mathtt{B}]}}{\infer[\nu L]{\mathbf{x}^\alpha:{[x^\alpha:\mathtt{t}]}, \mathbf{c}^{\eta+1}:\mathsf{Msg}(x^\alpha.\nu_t)\otimes \mathsf{Cfg}_{x^{\alpha+1}:\mathtt{C},y^\beta:\mathtt{B}}(\mathsf{Q}) \vdash_{\Lambda_1}  \mathbf{y}^\beta:{[y^\beta:\mathtt{B}]}}{\infer[\otimes L]{\mathbf{x}^{\alpha+1}:\mathsf{Msg}(x^\alpha.\nu_t) \multimap {[x^{\alpha+1}:\mathtt{C}]}, \mathbf{c}^{\eta+1}:\mathsf{Msg}(x^\alpha.\nu_t)\otimes \mathsf{Cfg}_{x^{\alpha+1}:\mathtt{C},y^\beta:\mathtt{B}}(\mathsf{Q}) \vdash_{\Lambda_2}  \mathbf{y}^\beta:{[y^\beta:B]} }{\infer[\multimap L]{\mathbf{x}^{\alpha+1}:\mathsf{Msg}(x^\alpha.\nu_t)\multimap{[x^{\alpha+1}:\mathtt{C}]}, \mathbf{w}^{\omega}:\mathsf{Msg}(x^\alpha.\nu_t), \mathbf{c}^{\eta+1}: \mathsf{Cfg}_{x^{\alpha+1}:\mathtt{C},y^\beta:\mathtt{B}}(\mathsf{Q}) \vdash_{\Lambda_3}  \mathbf{y}^\beta:{[y^\beta:\mathtt{B}]}}{\infer[\msc{ID}]{\mathbf{w}^{\omega}:\mathsf{Msg}(x^\alpha.\nu_t)\vdash_{\Lambda_3} \mathbf{z}^\zeta:\mathsf{Msg}(x^\alpha.\nu_t)}{} & \deduce{\mathbf{x}^{\alpha+1}: {[x^{\alpha+1}:\mathtt{C}]}, \mathbf{c}^{\eta+1}: \mathsf{Cfg}_{x^{\alpha+1}:\mathtt{C},y^\beta:\mathtt{B}}(\mathsf{Q}) \vdash_{\Lambda_3}  \mathbf{y}^\beta:{[y^\beta:\mathtt{B}]}}{\star}}}}}\notag\end{equation}

$\Lambda_2=\Lambda_1 \cup\{\mathbf{c}^{\eta+1}_{n+2}<\mathbf{c}^{\eta}_{n+2}\} \cup \{\mathbf{c}^{\eta+1}_{i}=\mathbf{c}^{\eta}_{i} \mid i \neq n+2\},$
$ \Lambda_1=\Lambda \cup \{\mathbf{x}^{\alpha+1}_{i}=\mathbf{x}^{\alpha}_{i} \mid i \neq k\}\,,$
$\Lambda_3=\Lambda_2 \cup\{\mathbf{c}^{\eta+1}=\mathbf{w}^\omega\}.$ 

In the proof of this case, we use the abbreviated definition of $[x^\alpha:\mathtt{t}]$. If we use the following expanded definition instead  \[\begin{array}{lcll}
 {[y^\beta: \mathtt{t}]} & =^{n+2}_{\mu} &  \mathsf{Unfold}_\mathtt{t}(y^{\beta})\\
 \mathsf{Unfold}_\mathtt{t}(y^{\beta}) & =^i_{\nu} & ( \mathsf{Msg}(y^\beta.\nu_t) \multimap[y^{\beta+1}:\mathtt{A}]),

 \end{array}\]
 we need to apply an extra $\mu L$ rule with priority $n+2$ on the predicate $[x^\alpha:\mathtt{t}]$. Immediately after this $\mu L$ rule, we apply the $\nu L$ rule with priority $k<n+1$. This implies that the extra $\mu L$ rule does not play a role in validity of the derivation. As a result, the validity of the derivation given here implies validity of the derivation using the non-abbreviated definition. Furthermore, if we use the non-abbreviated definition we will have $\mathbf{x}^{\alpha+2}: [x^{\alpha+1}:\mathtt{C}]$ at the end (the top) of the cycle. Thus, we decided to use the abbreviated definition to make sure that the generation of position variables steps at the same pace as the generation of channels. This decision will help us to describe the bisimulation between the derivation built here and the typing derivation of processes in a more elegant way.

\item {\bf Case 10.} $(Ry.\mu_\mathtt{t};\mathsf{Q})$ Dual to {\bf Case 9.}
\item {\bf Case 11.} $(\mathbf{case}Lx(\mu_\mathtt{t} \Rightarrow \mathsf{Q}))$ where $\mathtt{t}=^k_{\mu}\mathtt{C}$
\begin{equation}
    \hspace*{-2cm}\infer[\mu L]{\star \;  {\mathbf{x}^\alpha:[x^\alpha:\mathtt{t}]}, \mathbf{c}^\eta:\mathsf{Cfg}_{x^\alpha:\mathtt{t},y^\beta:\mathtt{B}}(\mathbf{case}Lx(\mu_\mathtt{t} \Rightarrow \mathsf{Q})) \vdash_{\Lambda} \mathbf{y}^\beta:{[y^\beta:\mathtt{B}]}}{\infer[\mu L]{\mathbf{x}^\alpha:[x^\alpha:\mathtt{t}], \mathbf{c}^{\eta+1}:\mathsf{Msg}(x^\alpha.\mu_t)\multimap\mathsf{Cfg}_{x^{\alpha+1}:\mathtt{C},y^\beta:\mathtt{B}}(\mathsf{Q}) \vdash_{\Lambda_1} \mathbf{y}^\beta:{[y^\beta:\mathtt{B}]}}{\infer[\otimes L]{\mathbf{x}^{\alpha+1}:\mathsf{Msg}(x^\alpha.\mu_t)\otimes {[x^{\alpha+1}:\mathtt{C}]}, \mathbf{c}^{\eta+1}:\mathsf{Msg}(x^\alpha.\mu_t)\multimap\mathsf{Cfg}_{x^{\alpha+1}:\mathtt{C},y^\beta:\mathtt{B}}(\mathsf{Q}) \vdash_{\Lambda_2} \mathbf{y}^\beta:{[y^\beta:\mathtt{B}]} }{\infer[\multimap L]{\mathbf{w}^{\omega}:\mathsf{Msg}(x^\alpha.\mu_t), \mathbf{x}^{\alpha+1}:{[x^{\alpha+1}:\mathtt{C}]}, \mathbf{c}^{\eta+1}:\mathsf{Msg}(x^\alpha.\mu_t)\multimap\mathsf{Cfg}_{x^{\alpha+1}:\mathtt{C},y^\beta:\mathtt{B}}(\mathsf{Q}) \vdash_{\Lambda_3} \mathbf{y}^\beta:{[y^\beta:\mathtt{B}]}}{\infer[\msc{ID}]{\mathbf{w}^{\omega}:\mathsf{Msg}(x^\alpha.\mu_t)\vdash_{\Lambda_3} \mathbf{z}^{\zeta}:\mathsf{Msg}(x^\alpha.\mu_t)}{}& \deduce{\mathbf{x}^{\alpha+1}:{[x^{\alpha+1}:\mathtt{C}]}, \mathbf{c}^{\eta+1}:\mathsf{Cfg}_{x^{\alpha+1}:\mathtt{C},y^\beta:\mathtt{B}}(\mathsf{Q}) \vdash_{\Lambda_3} \mathbf{y}^\beta:{[y^\beta:\mathtt{B}]}}{\star}}}}}\notag \end{equation}

$\Lambda_1=\Lambda \cup\{\mathbf{c}^{\eta+1}_{n+2}<\mathbf{c}^{\eta}_{n+2}\} \cup \{\mathbf{c}^{\eta+1}_{i}=\mathbf{c}^{\eta}_{i} \mid i \neq n+2\},$ 
$ \Lambda_2=\Lambda_1 \cup\{\mathbf{x}^{\alpha+1}_{k}<\mathbf{x}^{\alpha}_{k}\} \cup \{\mathbf{x}^{\alpha+1}_{i}=\mathbf{x}^{\alpha}_{i} \mid i \neq k\}\,,$
$\Lambda_3=\Lambda_2 \cup\{\mathbf{x}^{\alpha+1}=\mathbf{w}^\omega\}.$

\item {\bf Case 12} $(\mathbf{case}Ry(\nu_\mathtt{t} \Rightarrow \mathsf{Q}))$ Dual to {\bf Case 11}.
\item {\bf Case 13.} $(y \leftarrow \mathsf{Y}\leftarrow x)$
\[\infer[\nu L]{{\star \;  \mathbf{x}^\alpha:[x^\alpha:\mathtt{A}]}, \mathbf{c}^\eta:\mathsf{Cfg}_{x^\alpha:\mathtt{A},y^\beta:\mathtt{B}}(y \leftarrow \mathsf{Y}\leftarrow x) \vdash_{\Lambda} {\mathbf{y}^\beta: [y^\beta:\mathtt{B}]}}{ \deduce{{\mathbf{x}^\alpha: [x^{\alpha}:\mathtt{A}]}, \mathbf{c}^{\eta+1}:\mathsf{Cfg}_{x^{\alpha}:\mathtt{A},y^\beta:\mathtt{B}}(Q) \vdash_{\Lambda \cup\{c^{\eta+1}_{i}=c^{\eta}_{i}\mid i \neq n+1\} } {\mathbf{y}^\beta:[y^{\beta}:\mathtt{B}]}}{\star}}\]
\end{description}

In cases 1-13 a predicate annotated with a position variable $\mathbf{c}$ in a branch can be interpreted as the (potential) computational continuation of the predicate $\mathsf{Cfg}()$ in the conclusion (at the bottom) of the block. Also, the only predicates that occur as a cut formula are of the form $[x^\alpha:\mathtt{A}]$.

Furthermore, in all the cases a rule is applied on $[x^\alpha:\mathtt{A}]$ only if there is a process in the antecedents willing to receive or send a message along channels $x$ or $y$ via a $\mathsf{Msg}$ predicate: there is a predicate  in the antecedents of the form $\mathsf{Msg}(x^\alpha.b) \circ \mathsf{Cfg}_{x^{\alpha+1}:p,\_ }(P)$ (or $\&\{\mathsf{Msg}(x^\alpha.b_\ell) \circ \mathsf{Cfg}_{\_, x^{\alpha+1}:p_{\ell}}(P_{\ell})\}_{\ell \in L}$ in cases 7 and 8 where $\circ \in \{\otimes, \multimap\}$.

Observe that in each circular branch a position variable $\mathbf{v}^\delta$ annotating a predicate $[v^\delta:\mathtt{D}]$ is only related by $<$ with its own other generations $\mathbf{v}^\gamma$. These observations are important in particular in the proof of Theorem~\ref{thm:strongprogress}.
\end{proof}

We need to show that when defined over a guarded program the derivation introduced above is a valid proof. Since a configuration is always finite, it is enough to prove validity of the annotated derivation built using Cases 1-13 for a single process $\mathsf{P}$.

 We use a validity-preserving bisimulation between the annotated derivation and the typing derivation of process $\mathsf{P}$. The notation $\Omega\vDash a \le b$ stands for ``the relation $a \le b$ can be deduced from the reflexive transitive closure of set $\Omega$".
\begin{definition}\label{def:R}
Define relation $\mathcal{R}$ between process typing judgments \[y^\beta:B \vdash_{\Omega} \mathtt{P} ::(x^\alpha:A)\] defined over $\mathbf{\Sigma}$ and annotated sequents in {\small $\mathit{FIMALL}^{\infty}_{\mu,\nu}$}:
\[\begin{array}{lcl}
x^\alpha:\mathtt{A} \vdash_{\Omega} \mathsf{P} :: (y^\beta:\mathtt{B}) & \mathcal{R} & \mathbf{x}^\alpha:{[x^\alpha:\mathtt{A}]}, \mathbf{c}^{\eta}:\mathsf{Cfg}_{x^\alpha:\mathtt{A},y^\beta:\mathtt{B}}(\mathsf{P}) \vdash_{\Omega'} \mathbf{y}^\beta:{[y^\beta:\mathtt{B}]},\\
\cdot \vdash_{\Omega} \mathsf{P} :: (y^\beta:\mathtt{B}) & \mathcal{R} & \mathbf{c}^{\eta}:\mathsf{Cfg}_{\cdot,y^\beta:\mathtt{B}}(\mathsf{P}) \vdash_{\Omega'} \mathbf{y}^\beta:{[y^\beta:\mathtt{B}]},
\end{array}\]
where\footnote{
$n$ is the maximum priority of fixed points in $\mathbf{\Sigma}$.} for $i\le n$, 
\begin{itemize}
    \item $\Omega\vDash x^\alpha_i \le w^\zeta_i$  iff $\Omega' \vDash \mathbf{x}^\alpha_i \le \mathbf{w}^\zeta_i$, and
    \item $\Omega\vDash y^\beta_i \le w^\zeta_i$  iff $\Omega' \vDash \mathbf{y}^\beta_i \le \mathbf{w}^\zeta_i$.

\end{itemize}
\end{definition}

 We define two stepping rules over the typing derivation of a process and the derivation built in the proof of Lemma \ref{lem:derivation}.
\begin{definition}
We define two stepping rules $\hookrightarrow$ and $\Rightarrow$ over processes and annotated sequents, respectively: 
\begin{itemize}
    \item $\bar{x}^\alpha:\omega \vdash_{\Omega} \mathsf{P} ::(y^\beta:\mathtt{B}) \hookrightarrow \bar{z}^\delta:\omega' \vdash_{\Lambda} \mathsf{Q} ::(w^\gamma:\mathtt{D})$ iff there is a rule in the infinitary system of Figure \ref{fig:stp-order} of the form
\[\infer[]{\bar{x}^\alpha:\omega  \vdash_{\Omega} \mathsf{P} ::(y^\beta:\mathtt{B})}{\cdots & \bar{z}^\delta:\omega'\vdash_{\Lambda} \mathsf{Q} ::(w^\gamma:\mathtt{D}) & \cdots}\]
\item {\footnotesize${(i) \; \overline{\mathbf{x}^{\alpha}:{[\bar{x}^\alpha:\omega]}}, \mathbf{c}^{\eta_1}:\mathsf{Cfg}_{\bar{x}^\alpha:\omega,y^\beta:\mathtt{B}}(\mathsf{P}) \vdash_{\Omega} \mathbf{y}^{\beta}:{[y^\beta:\mathtt{B}]} \Rightarrow (ii) \; \overline{\mathbf{z}^{\delta}:{[\bar{z}^\delta:\omega']}}, \mathbf{d}^{\eta_2}:\mathsf{Cfg}_{\bar{z}^\delta:\omega',w^\gamma:\mathtt{D}}(\mathsf{Q}) \vdash_{\Lambda} \mathbf{w}^{\gamma}:{[w^\gamma:\mathtt{D}]}}$}
iff (i) is the conclusion of one of the blocks 1-13 in the proof of Lemma \ref{lem:derivation} and (ii) is a $\star$ assumption of it.
\end{itemize}
\end{definition}

Next we prove that relation $\mathcal{R}$ is a validity preserving bisimulation with regard to the steppings $\hookrightarrow$ and $\Rightarrow$.
\begin{lemma}\label{thm:bisim}
$\mathcal{R}$ forms a bisimulation between the derivation given for {\[\star \, \overline{\mathbf{x}^\alpha:[\bar{x}^\alpha:\omega]}, \mathbf{c}^\eta:\mathsf{Cfg}_{\bar{x}^\alpha:\omega, y^\beta:\mathtt{B}}(\mathsf{P})\vdash_{\Omega'}  \mathbf{y}^\beta:[y^\beta:\mathtt{B}]\]}  and the typing derivation of process \[\bar{x}^\alpha:\omega \vdash_{\Omega} \mathsf{P} :: (y^\beta:\mathtt{B}).\]

\end{lemma}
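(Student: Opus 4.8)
The plan is to prove the bisimulation by a case analysis on the structure of the process $\mathsf{P}$ (equivalently, on the last typing rule applied in Figure \ref{fig:stp-order}), exploiting the exact one-to-one correspondence between the thirteen typing rules and the thirteen blocks constructed in the proof of Lemma \ref{lem:derivation}. For each case I would first establish the two simulation directions. Given a pair related by $\mathcal{R}$, a single process step $\bar{x}^\alpha:\omega \vdash_\Omega \mathsf{P} ::(y^\beta:\mathtt{B}) \hookrightarrow \bar{z}^\delta:\omega' \vdash_\Lambda \mathsf{Q} ::(w^\gamma:\mathtt{D})$ corresponds to the unique block whose conclusion matches the $\mathcal{R}$-image of $\mathsf{P}$; its $\star$-assumptions are precisely the $\mathcal{R}$-images of the typing premises, giving a matching step $\Rightarrow$, and conversely. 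For the branching rules I would check the correspondences explicitly: for $\msc{Cut}$ (Case~4) the provider premise offering along $w^\eta$ matches the left $\star$-assumption offering $\mathbf{z}^\zeta:[z^\zeta:\mathtt{C}]$ and the client premise matches the right one; for the label-receiving rules $\oplus L$ and $\&R$ (Cases~7 and~8) the premises indexed by $\ell\in L$ line up, in order, with the per-label $\star$-assumptions produced by the $\oplus L$/$\&L$ step on the unfolded $[\cdot]$ predicate.

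Second, I would verify that each matched pair of continuations again lies in $\mathcal{R}$. Every block assigns to the continuation configuration a predicate $\mathsf{Cfg}$ whose parameters are exactly the channels of the continuation process $\mathsf{Q}$, and annotates each $[z^\eta:\mathtt{C}]$ with its matching position variable $\mathbf{z}^\eta$, so the syntactic shape demanded by Definition \ref{def:R} is preserved by construction. The substantive obligation is the ordering condition: that for every priority $i\le n$, $\Omega \vDash x^\alpha_i \le w^\zeta_i$ iff $\Omega' \vDash \mathbf{x}^\alpha_i \le \mathbf{w}^\zeta_i$, and dually for the succedent. I would discharge this by comparing, case by case, the update to $\Omega$ prescribed by the typing rule with the update to $\Omega'=\Lambda_j$ prescribed by the block. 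The crucial cases are the four fixed-point rules: the process $\mu L$ rule adds $x^{\alpha+1}_i < x^\alpha_i$ at the priority $i$ of $\mathtt{t}=^i_{\mu}\mathtt{C}$, and Case~11 adds precisely $\mathbf{x}^{\alpha+1}_k < \mathbf{x}^\alpha_k$ at the same priority $k$ on the matching variable; $\nu R$ and Case~12 add the matching succedent decrease; while $\nu L$ (Case~9) and $\mu R$ (Case~10) break the relation at the relevant priority in both systems without a decrease. All auxiliary rules inside a block — the outer $\mu L$ unfolding of $\mathsf{Cfg}$ at priority $n+2$, together with the $\otimes L$, $\multimap L$, $1L$ and $\msc{Id}$ steps acting on $\mathsf{Msg}$ formulas — touch only the auxiliary variables $\mathbf{c},\mathbf{w}$ and never the matching variables, so they leave the $\mathcal{R}$-conditions intact.

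Finally, I would argue validity preservation. Because $\mathcal{R}$ is a bisimulation, the infinite branches of the two derivations are in bijection, and along corresponding branches the snapshots of matching position variables are ordered identically to the snapshots of the channels they track, by the ordering condition just established. It then remains to see that a validating trace in the metalogic branch must live on the matching variables rather than the auxiliary ones: the $\mathsf{Msg}$ and identity formulas occur only at block leaves and cannot sustain an infinite trace, and $\mathsf{Cfg}$ is unfolded coinductively at the highest priority $n+1$ at every recursive call (Case~13), so — as already observed in the proof of Theorem \ref{thm:strongprogress} — no subformula of $\mathsf{Cfg}$ in the antecedent can belong to an infinite $\mu$-trace, while the $\nu L$ at priority $n+1$ resets any lower-priority relation on $\mathbf{c}$. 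Hence an infinite metalogic branch is a left $\mu$-trace (resp.\ right $\nu$-trace) exactly when its matching-variable chain is, which by the correspondence holds exactly when the bisimilar process branch is a left $\mu$-trace (resp.\ right $\nu$-trace). I expect the ordering bookkeeping across the fixed-point cases, together with ruling out spurious traces on the auxiliary $\mathsf{Cfg}$ variable, to be the main obstacle; the purely structural matching of rules is routine given the rule-by-block correspondence.
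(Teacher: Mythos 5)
Your proposal follows essentially the same route as the paper's own proof: a case analysis keyed to the one-to-one correspondence between typing rules and the blocks of Lemma~\ref{lem:derivation}, both simulation directions established by matching premises to $\star$-assumptions, and the substantive work being the priority-wise comparison of the process-side $\Omega$/$\Lambda$ updates against the metalogic-side $\Omega'$/$\Lambda'$ updates (with the fixed-point rules as the crucial cases). The only divergence is that your closing validity-preservation paragraph is not part of this lemma in the paper — it is the content of the subsequent Lemma~\ref{thm:validmain} (and the $\mathsf{Cfg}$-cannot-sustain-a-$\mu$-trace observation belongs to Theorem~\ref{thm:strongprogress}) — so including it here is harmless extra work rather than a gap.
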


\begin{proof}
The proof of this bisimulation is straightforward. It follows from the way we built proof blocks for each case in Lemma~\ref{lem:derivation}, and the typing rules for annotated processes in Figure~\ref{fig:stp-order}:
\begin{description}
\item {\bf $\rightarrow$:} 
\[
\begin{tikzcd}
 \bar{x}^\alpha:\omega \vdash_{\Omega} \mathsf{P} :: (y^\beta:\mathtt{B}) \arrow[r, dash, "\mathcal{R}"] \arrow[d, hookrightarrow] &  \overline{\mathbf{x}^\alpha:[\bar{x}^\alpha:\omega]}, \mathbf{c}^\eta:\mathsf{Cfg}_{\bar{x}^\alpha:\omega, y^\beta:\mathtt{B}}(\mathsf{P})\vdash_{\Omega'}  \mathbf{y}^\beta:[y^\beta:\mathtt{B}] \arrow[d, Rightarrow, dashed] \\
  \bar{z}^\gamma:\omega' \vdash_{\Lambda} \mathsf{Q} :: (w^\delta:\mathtt{D}) \arrow[r, dash, dashed, "\mathcal{R}"'] & \overline{\mathbf{z}^\gamma:{[\bar{z}^\gamma:\omega']}}, \mathbf{d}^\theta:\mathsf{Cfg}_{\bar{z}^\gamma:\omega', w^\delta:\mathtt{D}}(\mathsf{Q})\vdash_{\Lambda'} \mathbf{w}^{\delta}: {[w^\delta:\mathtt{D}]}
\end{tikzcd}
\]
The proof is by considering different cases for $\hookrightarrow$:

\begin{description}
\item {\bf Case 1.} ($\mathbf{wait}Lx; \mathsf{Q}$)
\[\begin{array}{lr}
    1.\mbox{Premise} &  x^\alpha:1 \vdash_{\Omega} \mathbf{wait}Lx;\mathsf{Q}::y^\beta:\mathtt{B}  \hookrightarrow \cdot \vdash_{\Omega}\mathsf{Q}:: y^\beta:\mathtt{B}\\ 
2. \mbox{Block 3}  & \mathbf{x}^\alpha:{[x^\alpha:1]}, \mathbf{c}^\eta:\mathsf{Cfg}_{x^\alpha:1, y^\beta:B}(\mathbf{wait}Lx;\mathsf{Q}) \vdash_{\Omega'} \mathbf{y}^\beta:{[y^\beta:\mathtt{B}]} \Rightarrow\\ & \mathbf{c}^{\eta+1}: \mathsf{Cfg}_{\cdot, y^\beta:\mathtt{B}}(\mathsf{Q}) \vdash_{\Lambda'} \mathbf{y}^\beta:{[y^\beta:\mathtt{B}]}\\
3.\mbox{Definition of $\Lambda'$}& \mbox{for}\, i \le n,  \Lambda'\vDash \mathbf{y}^\beta_i \le \mathbf{z}_i^\gamma\; \mbox{iff}\; \Omega'\vDash \mathbf{y}^\beta_i \le \mathbf{z}_i^\gamma.\\
4. \mbox{By assumption and line 3}& \mbox{for}\, i \le n,  \Lambda'\vDash \mathbf{y}^\beta_i \le \mathbf{z}_i^\gamma\; \mbox{iff}\; \Omega\vDash y^\beta_i \le z_i^\gamma \\
5. \mbox{By line 4} & \cdot \vdash_{\Omega}\mathsf{Q}:: y^\beta:\mathtt{B}\; \mathcal{R}\;\mathbf{c}^{\eta+1}: \mathsf{Cfg}_{\cdot, y^\beta:\mathtt{B}}(\mathsf{Q}) \vdash_{\Lambda'} \mathbf{y}^\beta:{[y^\beta:\mathtt{B}]}
\end{array}\]

\item {\bf Case 2.} ($Lx. \nu_\mathtt{t}; \mathsf{Q}$)
{\footnotesize\[\begin{array}{lr}
1. \mbox{Premise} &  x^\alpha:\mathtt{t} \vdash_{\Omega} \mathtt{Lx.\nu_\mathtt{t};\mathsf{Q}}::y^\beta:\mathtt{B}  \hookrightarrow x^{\alpha+1}:\mathtt{C} \vdash_{\Lambda}Q:: y^\beta:\mathtt{B}\\ 
2. \mbox{Block 9}  & \mathbf{x}^\alpha:{[x^\alpha:\mathtt{t}]}, \mathbf{c}^\eta:\mathsf{Cfg}_{x^\alpha:\mathtt{t}, y^\beta:\mathtt{B}}(Lx.\nu_\mathtt{t};\mathsf{Q}) \vdash_{\Omega'} \mathbf{y}^\beta:{[y^\beta:\mathtt{B}]} \Rightarrow\\ & \mathbf{x}^{\alpha+1}:{[x^{\alpha+1}:\mathtt{C}]}, \mathbf{c}^{\eta+1}: \mathsf{Cfg}_{x^{\alpha+1}:\mathtt{C}, y^\beta:\mathtt{B}}(\mathsf{Q}) \vdash_{\Lambda'} \mathbf{y}^\beta:{[y^\beta:\mathtt{B}]}\\
3.\mbox{Definition of $\Lambda'$}& \mbox{for}\, i \le n,  \Lambda'\vDash \mathbf{y}^\beta_i \le \mathbf{z}_i^\gamma\; \mbox{iff}\; \Omega'\vDash \mathbf{y}^\beta_i \le \mathbf{z}_i^\gamma.\\
4.\mbox{Definition of $\Lambda$}& \mbox{for}\, i \le n,  \Lambda\vDash y^\beta_i \le z_i^\gamma\; \mbox{iff}\; \Omega\vDash y^\beta_i \le z_i^\gamma.\\
5.\mbox{By assumption}& \mbox{for}\, i \le n,  \Lambda' \vDash \mathbf{y}^\beta_i \le \mathbf{z}_i^\gamma\; \mbox{iff}\; \Lambda \vDash y^\beta_i \le z_i^\gamma \\
6.\mbox{Definition of $\Lambda'$}& \mbox{for}\, i \le n \;\mbox{and}\; z^\gamma \neq x^{\alpha+1},  \Lambda'\vDash \mathbf{x}^{\alpha+1}_i \le \mathbf{z}_i^\gamma\; \mbox{iff}\; \Omega'\vDash \mathbf{x}^\alpha_i \le \mathbf{z}_i^\gamma.\\
7.\mbox{Definition of $\Lambda$}& \mbox{for}\, i \le n, \;\mbox{and}\; z^\gamma \neq x^{\alpha+1},   \Lambda\vDash x^{\alpha+1}_i \le z_i^\gamma\; \mbox{iff}\; \Omega\vDash x^{\alpha}_i \le z_i^\gamma.\\
8. \mbox{By assumption}& \mbox{for}\, i \le n,  \Lambda' \vDash \mathbf{x}^{\alpha+1}_i \le \mathbf{z}_i^\gamma\; \mbox{iff}\; \Lambda \vDash x^{\alpha+1}_i \le z_i^\gamma \\
9. \mbox{By lines 5 and 9} &  x^{\alpha+1}:\mathtt{C} \vdash_{\Lambda}Q:: y^\beta:\mathtt{B}\; \mathcal{R}\;\mathbf{x}^{\alpha+1}:{[x^{\alpha+1}:\mathtt{C}]}, \mathbf{c}^{\eta+1}: \mathsf{Cfg}_{x^{\alpha+1}:\mathtt{C}, y^\beta:\mathtt{B}}(Q) \vdash_{\Lambda'} \mathbf{y}^\beta:{[y^\beta:\mathtt{B}]}
\end{array}\]}
\item {\bf Cases.} The proof of other cases is similar to the previous ones. 
\end{description}
\item {\bf $\leftarrow$.}
\[
\begin{tikzcd}
  \bar{x}^\alpha:\omega \vdash_{\Omega} \mathsf{P} :: (y^\beta:\mathtt{B}) \arrow[r, dash, "\mathcal{R}"] \arrow[d, hookrightarrow, dashed] &  \overline{\mathbf{x}^\alpha:[\bar{x}^\alpha:\omega]}, \mathbf{c}^\eta:\mathsf{Cfg}_{\bar{x}^\alpha:\omega, y^\beta:\mathtt{B}}(\mathsf{P})\vdash_{\Omega'}  \mathbf{y}^\beta:[y^\beta:\mathtt{B}] \arrow[d, Rightarrow] \\
  \bar{z}^\gamma:\omega' \vdash_{\Lambda} \mathsf{\mathsf{Q}} :: (w^\delta:\mathtt{D}) \arrow[r, dash, dashed, "\mathcal{R}"'] & \overline{\mathbf{z}^\gamma:{[\bar{z}^\gamma:\omega'}]}, \mathbf{d}^\theta:\mathsf{Cfg}_{\bar{z}^\gamma:\omega', w^\delta:\mathtt{D}}(\mathsf{Q})\vdash_{\Lambda'} \mathbf{w}^{\delta}: {[w^\delta:D]}
\end{tikzcd} \]
The proof is by considering different cases for $\Rightarrow$.

\begin{description}
\item {\bf Case 1.} ($\mathbf{wait}Lx; \mathsf{Q}$)
\[\begin{array}{lr}
1. \mbox{Premise}  & \mathbf{x}^\alpha:{[x^\alpha:1]}, \mathbf{c}^\eta:\mathsf{Cfg}_{x^\alpha:1, y^\beta:B}(\mathbf{wait}Lx;\mathsf{Q}) \vdash_{\Omega'} \mathbf{y}^\beta:{[y^\beta:\mathtt{B}]} \Rightarrow\\ & \mathbf{c}^{\eta+1}: \mathsf{Cfg}_{\cdot, y^\beta:\mathtt{B}}(\mathsf{Q}) \vdash_{\Lambda'} \mathbf{y}^\beta:{[y^\beta:\mathtt{B}]}\\
2.\mbox{By $1L$ typing rule} &  x^\alpha:1 \vdash_{\Omega} \mathbf{wait}Lx;\mathsf{Q}::y^\beta:\mathtt{B}  \hookrightarrow \cdot \vdash_{\Omega}\mathsf{Q}:: y^\beta:\mathtt{B}\\ 
3.\mbox{Definition of $\Lambda'$}& \mbox{for}\, i \le n,  \Lambda'\vDash \mathbf{y}^\beta_i \le \mathbf{z}_i^\gamma\; \mbox{iff}\; \Omega'\vDash \mathbf{y}^\beta_i \le \mathbf{z}_i^\gamma.\\
4. \mbox{By assumption and line 3}& \mbox{for}\, i \le n,  \Lambda'\vDash \mathbf{y}^\beta_i \le \mathbf{z}_i^\gamma\; \mbox{iff}\; \Omega\vDash y^\beta_i \le z_i^\gamma \\
5. \mbox{By line 4} & \cdot \vdash_{\Omega}\mathsf{Q}:: y^\beta:\mathtt{B}\; \mathcal{R}\;\mathbf{c}^{\eta+1}: \mathsf{Cfg}_{\cdot, y^\beta:\mathtt{B}}(\mathsf{Q}) \vdash_{\Lambda'} \mathbf{y}^\beta:{[y^\beta:\mathtt{B}]}
\end{array}\]
\item {\bf Case 2.} ($Lx. \nu_t; \mathsf{Q}$)
{\footnotesize\[\begin{array}{lr}
    1. \mbox{Premise}  & \mathbf{x}^\alpha:{[x^\alpha:\mathtt{t}]}, \mathbf{c}^\eta:\mathsf{Cfg}_{x^\alpha:\mathtt{t}, y^\beta:\mathtt{B}}(Lx.\nu_\mathtt{t};Q) \vdash_{\Omega'} \mathbf{y}^\beta:{[y^\beta:\mathtt{B}]} \Rightarrow\\ & \mathbf{x}^{\alpha+1}:{[x^{\alpha+1}:\mathtt{C}]}, \mathbf{c}^{\eta+1}: \mathsf{Cfg}_{x^{\alpha+1}:\mathtt{C}, y^\beta:\mathtt{B}}(Q) \vdash_{\Lambda'} \mathbf{y}^\beta:{[y^\beta:\mathtt{B}]}\\
    2. \mbox{By $\nu L$ typing rule} &  x^\alpha:\mathtt{t} \vdash_{\Omega} \mathtt{Lx.\nu_t;Q}::y^\beta:\mathtt{B}  \hookrightarrow x^{\alpha+1}:\mathtt{C} \vdash_{\Lambda}Q:: y^\beta:\mathtt{B}\\ 
3.\mbox{Definition of $\Lambda'$}& \mbox{for}\, i \le n,  \Lambda'\vDash \mathbf{y}^\beta_i \le \mathbf{z}_i^\gamma\; \mbox{iff}\; \Omega'\vDash \mathbf{y}^\beta_i \le \mathbf{z}_i^\gamma.\\
4.\mbox{Definition of $\Lambda$}& \mbox{for}\, i \le n,  \Lambda\vDash y^\beta_i \le z_i^\gamma\; \mbox{iff}\; \Omega\vDash y^\beta_i \le z_i^\gamma.\\
5. \mbox{By assumption}& \mbox{for}\, i \le n,  \Lambda' \vDash \mathbf{y}^\beta_i \le \mathbf{z}_i^\gamma\; \mbox{iff}\; \Lambda \vDash y^\beta_i \le z_i^\gamma \\
6.\mbox{Definition of $\Lambda'$}& \mbox{for}\, i \le n \;\mbox{and}\; z^\gamma \neq x^{\alpha+1},  \Lambda'\vDash \mathbf{x}^{\alpha+1}_i \le \mathbf{z}_i^\gamma\; \mbox{iff}\; \Omega'\vDash \mathbf{x}^\alpha_i \le \mathbf{z}_i^\gamma.\\
7.\mbox{Definition of $\Lambda$}& \mbox{for}\, i \le n, \;\mbox{and}\; z^\gamma \neq x^{\alpha+1},   \Lambda\vDash x^{\alpha+1}_i \le z_i^\gamma\; \mbox{iff}\; \Omega\vDash x^{\alpha}_i \le z_i^\gamma.\\
8. \mbox{By assumption}& \mbox{for}\, i \le n,  \Lambda' \vDash \mathbf{x}^{\alpha+1}_i \le \mathbf{z}_i^\gamma\; \mbox{iff}\; \Lambda \vDash x^{\alpha+1}_i \le z_i^\gamma \\
9. \mbox{By lines 5 and 9} &  x^{\alpha+1}:\mathtt{C} \vdash_{\Lambda}Q:: y^\beta:\mathtt{B}\; \mathcal{R}\;\mathbf{x}^{\alpha+1}:{[x^{\alpha+1}:\mathtt{C}]}, \mathbf{c}^{\eta+1}: \mathsf{Cfg}_{x^{\alpha+1}:\mathtt{C}, y^\beta:\mathtt{B}}(Q) \vdash_{\Lambda'} \mathbf{y}^\beta:{[y^\beta:\mathtt{B}]}
\end{array}\]}
\item {\bf Cases.} Similar to the previous cases.
\end{description}
\end{description}
\end{proof}

\begin{lemma}\label{thm:validmain} If \[\bar{x}^\alpha:\omega \vdash_{\emptyset} \mathsf{P}:: (y^\beta:B)\] is a guarded process, then a derivation built in Lemma~\ref{lem:derivation} for 
\[\star \, \overline{\mathbf{x^\alpha}:[\bar{x}^\alpha:\omega]}, \mathbf{c}^\eta:\mathsf{Cfg}_{ \bar{x}^\alpha:\omega,y^\beta:\mathtt{B}}(\mathsf{P})\vdash_{\emptyset} \mathbf{y}^\beta:[y^\beta:\mathtt{B}] \] is valid.
\end{lemma}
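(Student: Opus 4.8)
The plan is to reduce validity of the metalogic derivation to the guard condition on the single process $\mathsf{P}$ by exploiting the bisimulation of Lemma~\ref{thm:bisim}. Fix an arbitrary infinite branch $b$ of the $\star$ derivation. Since each of the proof blocks~1--13 of Lemma~\ref{lem:derivation} is finite, $b$ must cross block boundaries infinitely often, and at each crossing it realizes a $\Rightarrow$ step. Using the $\mathcal{R}$-bisimulation I would project $b$ to an infinite branch $b'$ of the typing derivation of $\mathsf{P}$: each $\Rightarrow$ step along $b$ is matched by a $\hookrightarrow$ step, and concatenating them gives $b'$ with block-boundary judgments $\mathcal{R}$-related to those of $b$. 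The branchings introduced inside a block are harmless: the $\msc{Id}$ leaves closing $\mathsf{Msg}$ premises are immediately closed, the finitely many $\oplus L/\&L$ alternatives are finite, and $b$ follows exactly the $\star$ continuation that $\mathcal{R}$ tracks (the $\msc{Cut}$ block of Case~4, where $b$ may descend into either spawned subprocess, is covered by the same bisimulation).

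First I would invoke the guard condition (Definition~\ref{def:valproc}): since $\mathsf{P}$ is guarded, $b'$ is either a left $\mu$-trace or a right $\nu$-trace. Consider the left $\mu$-trace case (the other is dual). By Definition~\ref{def:mu} there are antecedent channels $x1^{\alpha_1}, x2^{\alpha_2}, \dots$ along $b'$ forming an infinite descending chain $\mathsf{snap}(x1^{\alpha_1}) >_{\Omega_1} \mathsf{snap}(x2^{\alpha_2}) >_{\Omega_2} \cdots$ in the lexicographic order on priorities $1,\dots,\mathbf{n}$. The key point is that each of these strict inequalities is determined, coordinate by coordinate, by $\le$-relations at priorities $i \le \mathbf{n}$, and Definition~\ref{def:R} makes each such relation hold on the process side iff it holds, for the matching position variable $\mathbf{x}j^{\alpha_j}$ of $[xj^{\alpha_j}:\mathtt{A}_j]$, on the metalogic side. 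Since equality is $\le$ in both directions and strictness is $\le$ together with the failure of the reverse $\le$, both equalities and strict decreases transfer, yielding the corresponding chain $\mathsf{snap}(\mathbf{x}1^{\alpha_1}) >_{\Omega_1'} \mathsf{snap}(\mathbf{x}2^{\alpha_2}) >_{\Omega_2'} \cdots$ on position variables that genuinely appear as antecedents along $b$.

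The remaining concern is that the metalogic snapshots carry two extra coordinates, priority $\mathbf{n}+1$ (used by the coinductive $\mathsf{Cfg}$-unfolding of recursive calls) and priority $\mathbf{n}+2$ (used by the finitary $\mathsf{Cfg}$- and $[\cdot]$-rules), absent from the process snapshots. I would argue that these cannot destroy the trace: each consecutive strict inequality in the transferred chain is already witnessed at some priority $i \le \mathbf{n}$, with equality at all higher priorities (smaller indices), and since $i < \mathbf{n}+1 < \mathbf{n}+2$ the lexicographic comparison is decided before the extra coordinates are ever inspected, whatever the additional $\mathsf{Cfg}$-activity does to them. Hence $b$ is a left $\mu$-trace in the metalogic; dually, a right $\nu$-trace for $b'$ yields a right $\nu$-trace on the matching $[yj^{\beta_j}:\mathtt{B}_j]$ succedent variables of $b$. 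As $b$ was arbitrary, every infinite branch of the $\star$ derivation is a left $\mu$-trace or a right $\nu$-trace, so by the validity condition the derivation is valid.

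I expect the main obstacle to be precisely this priority-bookkeeping step: establishing that the transferred trace is genuinely witnessed at a single dominating priority $i \le \mathbf{n}$, and confirming that the auxiliary priorities cannot spoil it. In particular I would record that the $\mathsf{Cfg}$ position variable $\mathbf{c}$, which does accumulate priority-$(\mathbf{n}+2)$ decreases, never forces a competing trace: its priority-$(\mathbf{n}+1)$ coordinate is broken at every recursive call, so its snapshots become incomparable across calls, exactly as observed in the proof of Theorem~\ref{thm:strongprogress}. Since $\mathbf{c}$ is always an antecedent it could at most contribute a (harmless, valid) left trace, and the coordinate argument shows it contributes none; thus the only traces that matter are the ones inherited from $\mathsf{P}$ via $\mathcal{R}$, which is what makes the reduction to the guard condition go through.
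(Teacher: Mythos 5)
Your proposal is correct and follows essentially the same route as the paper's proof: project each infinite branch of the $\star$ derivation onto the process typing derivation via the bisimulation of Lemma~\ref{thm:bisim}, invoke the guard condition there, and transfer the resulting snapshot chain back through the iff-clauses of Definition~\ref{def:R}. The paper's own argument is terser---it simply asserts that snapshot inequalities transfer along bisimilar paths---whereas you additionally spell out why the lexicographic comparison is decided at a priority $i \le \mathbf{n}$ and why the auxiliary priorities $\mathbf{n}+1$, $\mathbf{n}+2$ and the position variable $\mathbf{c}$ cannot interfere, points the paper leaves implicit.
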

\begin{proof}
By assumption there is an (infinitary) guarded typing derivation $\mathcal{D}_1$ for process $\bar{x}^\alpha:\omega \vdash_{\emptyset} \mathsf{P}:: (y^\beta:\mathtt{B})$. By Lemma \ref{thm:bisim}, we build a bisimilar (infinite) derivation for $\star \, \overline{\mathbf{x^\alpha}:[\bar{x}^\alpha:\omega]}, \mathbf{c}^\eta:\mathsf{Cfg}_{ \bar{x}^\alpha:\omega,y^\beta:\mathtt{B}}(\mathsf{P})\vdash_{\emptyset} \mathbf{y}^\beta:[y^\beta:\mathtt{B}] $ using Cases (1-13) in Lemma \ref{lem:derivation}. Consider an infinite path $\mathtt{p}_2$ in $\mathcal{D}_2$ and its bisimilar path $\mathtt{p}_1$ in $\mathcal{D}_1$:
\[
\begin{tikzcd}
  \cdots \arrow[d, hookrightarrow,"*"] & \cdots \arrow[d, Rightarrow, "*"] \\
  \bar{x}^\alpha:\omega \vdash_{\Omega} \mathsf{P} :: (y^\beta:\mathtt{B}) \arrow[r, dash, "\mathcal{R}"] \arrow[d, hookrightarrow, "k"] &  \overline{\mathbf{x}^\alpha:[\bar{x}^\alpha:\omega]}, \mathbf{c}^\eta:\mathsf{Cfg}_{\bar{x}^\alpha:\omega, y^\beta:\mathtt{B}}(\mathsf{P})\vdash_{\Omega'}  \mathbf{y}^\beta:[y^\beta:\mathtt{B}] \arrow[d, Rightarrow, "k"] \\
  \bar{z}^\gamma:\omega' \vdash_{\Lambda} \mathsf{Q} :: (w^\delta:\mathtt{D}) \arrow[r, dash, "\mathcal{R}"'] \arrow[d, hookrightarrow, "*"] & \overline{\mathbf{z}^\gamma:{[\bar{z}^\gamma:\omega']}}, \mathbf{d}^\theta:\mathsf{Cfg}_{\bar{z}^\gamma:\omega', w^\delta:\mathtt{D}}(\mathsf{Q})\vdash_{\Lambda'} \mathbf{w}^{\delta}: {[w^\delta:\mathtt{D}]}  \arrow[d, Rightarrow, "*"]\\
  \cdots & \cdots
\end{tikzcd} \]

By definition of $\mathcal{R}$,
\begin{itemize}
    \item if $\Lambda \vDash \mathsf{snap}(z^\gamma)<\mathsf{snap}(x^\alpha)$, then 
$\Lambda' \vDash \mathsf{snap}(\mathtt{z}^\gamma)<\mathtt{snap}(\mathsf{x}^\alpha)$, and
\item if $\Lambda \vDash \mathsf{snap}(w^\delta)<\mathsf{snap}(y^\beta)$, then 
$\Lambda' \vDash \mathsf{snap}(\mathtt{w}^\delta)<\mathtt{snap}(\mathsf{y}^\beta)$.
\end{itemize}
 By the above property if path $\mathtt{p}_1$ is guarded, then $\mathtt{p}_2$ is valid.
\end{proof}

\begin{customthm}{2.}
For the judgment $\bar{x}^\alpha:\omega\Vdash \mathcal{C} ::(y^\beta:\mathtt{B})$ where $\mathcal{C}$ is a configuration of guarded processes, there is a valid cut-free proof for $ \mathcal{C} \in \llbracket \bar{x}^\alpha:\omega \vdash y^\beta:\mathtt{B} \rrbracket$ in {\small $\mathit{FIMALL}^{\infty}_{\mu,\nu}$}. Validity of this proof ensures the strong progress property of $\mathcal{C}$.
\end{customthm}

 \begin{proof}
It is enough to show that the valid proof we built for \[\star {\,\overline{[\bar{x}^\alpha:\omega]}, \mathsf{Cfg}_{\bar{x}^\alpha:\omega, y^\beta:\mathtt{B}}(\mathcal{C})\vdash  [y^\beta:\mathtt{B}]}\] in {\small $\mathit{FIMALL}^{\infty}_{\mu,\nu}$} ensures the strong progress property of configuration $\mathcal{C}$. Here $\bar{x}^\alpha$ and $y^\beta$ are external channels of the configuration $\mathcal{C}$. 

We run the cut elimination algorithm on the valid proof we built in Lemma \ref{lem:derivation}. By the structure of the proof, we know that the first step in the cut elimination algorithm is to apply an external reduction (Flip rule) on $\mathsf{Cfg}_{x^\alpha:\omega, y^\beta:\mathtt{B}}(\mathcal{C})$ to unfold its definition. The tape transforms to \[ {\,\overline{[\bar{x}^\alpha:\omega]}, T \vdash  [y^\beta:\mathtt{B}]}\] where $T$ is the definition given based on the pattern of $\mathcal{C}$.

In fact, we can prove that throughout the cut elimination procedure, we repeatedly get a branching tape of the form\footnote{There may be other judgments of the form $T \vdash T$ on the tape that do not contribute to the computational meaning of the tape. Without loss of generality, we do not present them in the next line, but we will explain how to deal with them in the proof.}
\[{\overline{[\bar{x}^\alpha:\omega]}, T_1 \vdash  [z_1^{\eta_1}:\mathtt{D_1}]  ,[z_1^{\eta_1}:\mathtt{D_1}], T_2 \vdash [z_2^{\eta_2}:\mathtt{D_2}], \cdots [z_m^{\eta_m}:\mathtt{D_m}], T_{m+1} \vdash  [y^\beta:\mathtt{B}]} \] where $T_{i+1}$ is the definition of a predicate $\mathsf{Cfg}_{z_{i}^{\eta_i}:\mathtt{D}_i, z_{i+1}^{\eta_{i+1}}:\mathtt{D}_{i+1}}(\mathcal{C}_{i+1})$, and it explains the behavior of the computational continuation of $\mathcal{C}$ with regards to  channels $\bar{z}_i^{\eta_i}$ and $\bar{z}_{i+1}^{\eta_{i+1}}$.  Moreover, if an external reduction rule is applied on $[x^\alpha:\omega]$ or $[y^\beta:\mathtt{B}]$, there is a process willing to send or receive a message along $x^\alpha$ or $y^\beta$. 
\footnote{Put $z_0^{\eta_0}=x^\alpha$ and $z_{m+1}^{\eta_{m+1}}=y^{\beta}$. Note that channels $z_i^{\eta_i}$ for $1\le i \le m$ are the internal channels of the configuration, and  channels $x^\alpha$ and $y^\beta$ are the external channels.}

This property holds after the very first rule of cut elimination (an external reduction on $\mathsf{Cfg}(\mathcal{C})$ which leads to the tape ${\,\overline{[\bar{x}^\alpha:\omega]}, T \vdash  [y^\beta:\mathtt{B}]}$). We want to prove that the property explained in the previous paragraph holds as an invariant on the tapes being produced by the cut elimination algorithm if we apply the cut elimination algorithm in the following order \footnote{Our cut elimination algorithm is non-deterministic in the sense that there might be two applicable reductions at each time. However, we proved that no matter what rule we choose, the algorithm will terminate on valid proofs. Here, we single out one particular way of running the cut-elimination property.}:





 \begin{description}

\item {\bf Step 1.  Communication along an external channel:} an external reduction can be applied on predicates $[x^\alpha:\omega]$ or $[y^\beta:\mathtt{B}]$.\\
 In Figure~\ref{fig:extred} we provide the steps of our cut elimination algorithm when a rule can be applied on a predicate $[x^\alpha:\oplus\{\ell:\mathtt{A}_\ell\}_{\ell \in L}]$. The cases for other types are similar. First observe that by the structure of the proof, this is only the case if in the configuration a process communicates along a left external channel $x^\alpha$ of type $\oplus\{\ell:\mathtt{A}_\ell\}_{\ell \in L}$.  Moreover, in Line 4 of Figure~\ref{fig:extred} the algorithm creates multiple branches: the continuation of processes $\mathbf{case}Lx(\ell \Rightarrow \mathsf{Q}_\ell)_{\ell \in L}$ depends on the  potential label $k \in L$ that it receives along the external channel $x^\alpha$. The tape we have at the end of each branch corresponds to a potential configuration in the computation. On line (8) we unfold the definition of  $\mathsf{Cfg}_{x^{\alpha+1};\mathtt{A}_k,v^\delta:\mathtt{D}}(Q_k)$ predicate to get the invariant we are looking for. We start over the cut elimination procedure for the new tape.
 
 On line (7) we create an extra branch containing a single (green) judgment 
 \[{\mathsf{Msg}(x^\alpha.k) \vdash \mathsf{Msg}(x^\alpha.k)}.\] This tape can be closed by a single ID-elim rule.
 
 \begin{figure}
    \centering
 { \[\hspace*{-3cm}
 {\begin{array}{ll}
(1) \qquad   {[x^\alpha:\oplus\{\ell:\mathtt{A}_\ell\}_{\ell \in L}]}, {\color{red}\mathsf{Cfg}_{x^\alpha:\oplus\{\ell:A_\ell\}_{\ell \in L}, v^\delta:\mathtt{D}}(\mathbf{case}Lx(\ell \Rightarrow \mathsf{Q}_\ell)_{\ell \in L})} \vdash {[v^\delta:\mathtt{D}]} \\ 
 \hspace{9cm} \Downarrow \mbox{External reduction}(\mu) \notag\\
\hline\\
 (2) \qquad {\color{red}[x^\alpha:\oplus\{\ell:\mathtt{A}_\ell\}_{\ell \in L}]}, \&\{\ell:  \mathsf{Msg}(x^\alpha.\ell) \multimap \mathsf{Cfg}_{x^{\alpha+1}:\mathtt{A}_\ell, v^\delta:\mathtt{D}}(\mathsf{Q}_\ell)\}_{\ell\in L} \vdash {[v^\delta:\mathtt{D}]} \\
 \hspace{9cm}  \Downarrow \mbox{External reduction}(\mu) \notag\\
   \hline\\
  (3) \qquad{\color{red}\oplus\{\ell:(\mathsf{Msg}(x^\alpha.\ell) \otimes [x^{\alpha+1}:\mathtt{A}_\ell])\}_{\ell \in L}}, \&\{\ell:  \mathsf{Msg}(x^\alpha.\ell) \multimap \mathsf{Cfg}_{x^{\alpha+1}:\mathtt{A}_\ell, v^\delta:\mathtt{D}}(\mathsf{Q}_\ell)\}_{\ell\in L} \vdash {[v^\delta:\mathtt{D}]}  \\
   \hspace{9cm}\Downarrow \color{purple}{\mbox{External reduction}(\oplus)} \notag\\
  \hline\\ (4) \qquad\forall k \in L \qquad (\mathsf{Msg}(x^\alpha.k) \otimes [x^{\alpha+1}:\mathtt{A}_k]), {\color{red}\&\{\ell:  \mathsf{Msg}(x^\alpha.\ell) \multimap \mathsf{Cfg}_{x^{\alpha+1}:\mathtt{A}_\ell, v^\delta:\mathtt{D}}(\mathsf{Q}_\ell)\}_{\ell\in L}} \vdash {[v^\delta:\mathtt{D}]} \\
  \hspace{9cm} \Downarrow \mbox{External reduction}(\&) \notag\\
   \hline\\
(5) \qquad\forall k \in L\qquad {\color{red}(\mathsf{Msg}(x^\alpha.k) \otimes [x^{\alpha+1}:\mathtt{A}_k])},  (\mathsf{Msg}(x^\alpha.k) \multimap \mathsf{Cfg}_{x^{\alpha+1}:\mathtt{A}k, v^\delta:\mathtt{D}}(\mathsf{Q}_k) \vdash {[v^\delta:\mathtt{D}]} \\
       \hspace{9cm}  \Downarrow \mbox{External reduction}(\otimes) \notag\\\hline\\
 (6) \qquad   \forall k \in L\qquad  \mathsf{Msg}(x^\alpha.k), [x^{\alpha+1}:\mathtt{A}_k], {\color{red} \mathsf{Msg}(x^\alpha.k) \multimap \mathsf{Cfg}_{x^{\alpha+1}:\mathtt{A}k, v^\delta:\mathtt{D}}(\mathsf{Q}_k)} \vdash {[v^\delta:\mathtt{D}]} \\
  \hspace{9cm}  \Downarrow \mbox{External reduction}(\multimap) \notag\\\hline\\
(7) \qquad \forall k \in L\qquad  [x^{\alpha+1}:\mathtt{A}_k], {\color{red}\mathsf{Cfg}_{x^{\alpha+1}:\mathtt{A}k, v^\delta:\mathtt{D}}(\mathsf{Q}_k)} \vdash {[v^\delta:\mathtt{D}]} \quad
         \hspace{1.68cm}{\color{green}\mathsf{Msg}(x^\alpha.k) \vdash \mathsf{Msg}(x^\alpha.k)} 
 (\mbox{Identity elimination}) \notag\\ 
\hspace{6cm}    \Downarrow \mbox{External Reduction}(\mu) \notag\\  \hline\\
(8) \qquad    \forall k \in L\qquad  [x^{\alpha+1}:\mathtt{A}_k], T_k \vdash {[v^\delta:\mathtt{D}]} \\
\end{array}}
\]}
    \caption{A run of the cut elimination algorithm when there is a process communicating along an external channel.}
    \label{fig:extred}
\end{figure}

 \item {\bf Step 2. An external reduction rule can be applied  on a $T_i$}:
 \begin{itemize}
 \item {\bf 2.1. Identity:} Consider all judgments on the tape which are of the form
\[[u^\gamma:\mathtt{F}], u^\gamma=w^\delta \vdash [w^\delta:\mathtt{F}].\]
We first apply external reductions ($=L$) on the antecedents of the form $u^\gamma=w^\delta$. This rule renames channels $u^\gamma$ and $w^\delta$ with their most general unifier $z^\eta$. 

We ignore the remainder of the judgments of the form $[z^\eta:\mathtt{F}] \vdash [z^\eta:\mathtt{F}]$ until {\bf Step 2.4}.  The rest of the tape preserves the property of interest, since we only renamed computationally identical channels in it.

     \item {\bf 2.2. Spawn:} Consider all  the judgments on the tape that are of the form \[[\bar{u}^\gamma:\omega'], \exists v. \exists \zeta. (\mathsf{Cfg}_{\bar{u}^\gamma:\omega', v^\zeta:\mathtt{E}}(\mathcal{C}_1) \otimes \mathsf{Cfg}_{v^\zeta:\mathtt{E}, w^\delta:\mathtt{F}}(\mathcal{C}_2)) \vdash [w^\delta:\mathtt{F}].\]
We apply two external reduction rules ($\exists$ and $\otimes$) on them to get judgments of the form $[\bar{u}^\gamma:\omega'],  (\mathsf{Cfg}_{\bar{u}^\gamma:\omega', v^\zeta:\mathtt{E}}(\mathcal{C}_1), \mathsf{Cfg}_{v^\zeta:\mathtt{E}, w^\delta:\mathtt{F}}(\mathcal{C}_2))) \vdash [w^\delta:\mathtt{F}]$.

\item {\bf 2.3.} We apply identity elimination on all the remainder identity judgments that we have ignored (including the ones from {\bf Step 2.1}).
\item {\bf 2.4.} We apply a Merge rule on all  judgments remaining from {\bf Step 2.2} which are of the form 
\[[\bar{u}^\gamma:\omega'],  (\mathsf{Cfg}_{\bar{u}^\gamma:\omega', v^\zeta:\mathtt{E}}(\mathcal{C}_1), \mathsf{Cfg}_{v^\zeta:\mathtt{E}, w^\delta:\mathtt{F}}(\mathcal{C}_2))) \vdash [w^\delta:\mathtt{F}].\]
It replaces them with two judgments connected with a fresh internal channel $v^\zeta$:
\[[\bar{u}^\gamma:\omega'], (\mathsf{Cfg}_{\bar{u}^\gamma:\omega', v^\zeta:\mathtt{E}}(\mathcal{C}_1) \vdash  [v^\zeta:\mathtt{E}]  \qquad  [v^\zeta:\mathtt{E}],  \mathsf{Cfg}_{v^\zeta:\mathtt{E}, w^\delta:\mathtt{F}}(\mathcal{C}_2)) \vdash [w^\delta:\mathtt{F}].\]
$\mathcal{C}_1$ is the configuration (or a process) spawned and $\mathcal{C}_2$ is the continuation. With two other external reductions on the $\mathsf{Cfg}$ predicates, we unfold the definition of the predicates based on their pattern and get back to a tape satisfying the invariant. We start over the procedure from {\bf Step 1}. \footnote{We need to break down this step to be compatible with the order enforced by the cut elimination algorithm.}
\end{itemize}

\item {\bf Step 3.  Communication along an internal channel:} an internal reduction can be applied on predicates of the form $[w^\gamma:\mathtt{A}]$.\\
We provide the steps of the cut elimination algorithm for a case in which the protocol of communication is an internal choice ($\oplus$) in Figure \ref{fig:intred}. The other cases are similar.\footnote{Our system is based on an asynchronous semantics. However, cut elimination is a synchronous procedure. When we run our cut elimination algorithm on this  particular derivation, it models a synchronized sending and receiving messages between processes.} It is straightforward to observe that both processes $\mathsf{P}$ and $\mathsf{Q}_k$ are the computational continuations of the original processes in the configuration: $(Rw.k;\mathsf{P})$ sends the label $k$ along the internal channel $w^\gamma$ and steps to $\mathsf{P}$. Process $\mathbf{case}Lw(\ell \Rightarrow \mathsf{Q}_\ell)_{\ell \in L}$  when receiving label $k$ along channel $w^\gamma$ steps to $\mathsf{Q}_k$. Similar to {\bf Step 1.} with two external reductions on the $\mathsf{Cfg}$ predicates, we get back to a tape satisfying the invariant. 

Similar to {\bf Step 1.} here we get an extra (green) tape with the single judgment ${{\mathsf{Msg}(w^\gamma.k) \vdash \mathsf{Msg}(w^\gamma.k)}}$ that can be closed by an ID-elim. We also add a similar extra judgment  ${{\mathsf{Msg}(w^\gamma.k) \vdash \mathsf{Msg}(w^\gamma.k)}}$ to the current tape by a $\otimes$ principal reduction. To be compatible with the order enforced by the cut elimination algorithm, we ignore this judgment until the next {\bf Step 2.3}.
\begin{figure}
    \centering
 { \[\hspace*{-2cm}
 {\begin{array}{ll}
    {[z^\eta:\mathtt{C}]}, {\color{red}\mathsf{Cfg}_{z^\eta: \mathtt{C},w^\gamma:\oplus\{\ell:A_\ell\}_{\ell \in L}}(Rw.k;\mathsf{P})} \vdash {[w^\gamma:\oplus\{\ell:\mathtt{A}_\ell\}_{\ell \in L}]}\\ \hspace{2cm} {[w^\gamma:\oplus\{\ell:\mathtt{A}_\ell\}_{\ell \in L}]}, {\color{red}\mathsf{Cfg}_{w^\gamma:\oplus\{\ell:A_\ell\}_{\ell \in L}, v^\delta:\mathtt{D}}(\mathbf{case}Lw(\ell \Rightarrow \mathsf{Q}_\ell)_{\ell \in L})} \vdash {[v^\delta:\mathtt{D}]} \\ 
 \hspace{9cm} \Downarrow \mbox{External reduction}(\mu) \notag\\
\hline\\
    {[\bar{z}^\eta:\omega]}, \mathsf{Msg}(w^\gamma.k) \otimes \mathsf{Cfg}_{ \bar{z}^\eta:\omega, w^{\gamma+1}:\mathtt{A}_k}(\mathsf{P}) \vdash {\color{blue}[w^\gamma:\oplus\{\ell:\mathtt{A}_\ell\}_{\ell \in L}]} \\\hspace{2cm} {\color{blue}[w^\gamma:\oplus\{\ell:\mathtt{A}_\ell\}_{\ell \in L}]}, \&\{\ell:  \mathsf{Msg}(w^\gamma.\ell) \multimap \mathsf{Cfg}_{w^{\gamma+1}:\mathtt{A}_\ell, v^\delta:\mathtt{D}}(\mathsf{Q}_\ell)\}_{\ell\in L} \vdash {[v^\delta:\mathtt{D}]} \\
 \hspace{9cm}  \Downarrow \mbox{principal reduction}(\mu) \notag\\
   \hline\\
    {[\bar{z}^\eta:\omega]}, \mathsf{Msg}(w^\gamma.k) \otimes \mathsf{Cfg}_{ \bar{z}^\eta:\omega, w^{\gamma+1}:\mathtt{A}_k}(\mathsf{P}) \vdash {\color{blue}\oplus\{\ell:(\mathsf{Msg}(w^\gamma.\ell) \otimes [w^{\gamma+1}:\mathtt{A}_\ell])\}_{\ell \in L}}\\\hspace{2cm} {\color{blue}\oplus\{\ell:(\mathsf{Msg}(w^\gamma.\ell) \otimes [w^{\gamma+1}:\mathtt{A}_\ell])\}_{\ell \in L}}, \&\{\ell:  \mathsf{Msg}(w^\gamma.\ell) \multimap \mathsf{Cfg}_{w^{\gamma+1}:\mathtt{A}_\ell, v^\delta:\mathtt{D}}(\mathsf{Q}_\ell)\}_{\ell\in L} \vdash {[v^\delta:\mathtt{D}]}  \\
   \hspace{9cm}\Downarrow \mbox{principal reduction}(\oplus) \notag\\
  \hline\\{[\bar{z}^\eta:\omega]}, \mathsf{Msg}(w^\gamma.k) \otimes \mathsf{Cfg}_{ \bar{z}^\eta:\omega, w^{\gamma+1}:\mathtt{A}_k}(\mathsf{P}) \vdash (\mathsf{Msg}(w^\gamma.k) \otimes [w^{\gamma+1}:\mathtt{A}_k])\\\hspace{2cm} (\mathsf{Msg}(w^\gamma.k) \otimes [w^{\gamma+1}:\mathtt{A}_k]), {\color{red}\&\{\ell:  \mathsf{Msg}(w^\gamma.\ell) \multimap \mathsf{Cfg}_{w^{\gamma+1}:\mathtt{A}_\ell, v^\delta:\mathtt{D}}(\mathsf{Q}_\ell)\}_{\ell\in L}} \vdash {[v^\delta:\mathtt{D}]} \\
  \hspace{9cm} \Downarrow \mbox{External reduction}(\&) \notag\\
   \hline\\
    {[\bar{z}^\eta:\omega]}, {\color{red}\mathsf{Msg}(w^\gamma.k) \otimes \mathsf{Cfg}_{ \bar{z}^\eta:\omega, w^{\gamma+1}:\mathtt{A}_k}(\mathsf{P})} \vdash (\mathsf{Msg}(w^\gamma.k) \otimes [w^{\gamma+1}:\mathtt{A}_k])\\\hspace{2cm} (\mathsf{Msg}(w^\gamma.k) \otimes [w^{\gamma+1}:\mathtt{A}_k]),  (\mathsf{Msg}(w^\gamma.k) \multimap \mathsf{Cfg}_{w^{\gamma+1}:\mathtt{A}k, v^\delta:\mathtt{D}}(\mathsf{Q}_k) \vdash {[v^\delta:\mathtt{D}]} \\
    \hspace{9cm}  \Downarrow \mbox{External reduction}(\otimes) \notag\\ \hline\\ {[\bar{z}^\eta:\omega]}, \mathsf{Msg}(w^\gamma.k), \mathsf{Cfg}_{ \bar{z}^\eta:\omega, w^{\gamma+1}:\mathtt{A}_k}(\mathsf{P}) \vdash {\color{blue}(\mathsf{Msg}(w^\gamma.k) \otimes [w^{\gamma+1}:\mathtt{A}_k])}\\\hspace{2cm} {\color{blue}(\mathsf{Msg}(w^\gamma.k) \otimes [w^{\gamma+1}:\mathtt{A}_k])},  (\mathsf{Msg}(w^\gamma.k) \multimap \mathsf{Cfg}_{w^{\gamma+1}:\mathtt{A}k, v^\delta:\mathtt{D}}(\mathsf{Q}_k) \vdash {[v^\delta:\mathtt{D}]} \\
       \hspace{9cm}  \Downarrow \mbox{Principal reduction}(\otimes) \notag\\\hline\\
    {\mathsf{Msg}(w^\gamma.k) \vdash \mathsf{Msg}(w^\gamma.k)} \quad{[\bar{z}^\eta:\omega]} \mathsf{Cfg}_{ \bar{z}^\eta:\omega, w^{\gamma+1}:\mathtt{A}_k}(\mathsf{P}) \vdash  [w^{\gamma+1}:\mathtt{A}_k]\\\hspace{2cm} \mathsf{Msg}(w^\gamma.k), [w^{\gamma+1}:\mathtt{A}_k], {\color{red} \mathsf{Msg}(w^\gamma.k) \multimap \mathsf{Cfg}_{w^{\gamma+1}:\mathtt{A}k, v^\delta:\mathtt{D}}(\mathsf{Q}_k)} \vdash {[v^\delta:\mathtt{D}]} \\
  \hspace{9cm}  \Downarrow \mbox{External reduction}(\multimap) \notag\\\hline\\
    {\mathsf{Msg}(w^\gamma.k) \vdash \mathsf{Msg}(w^\gamma.k)} \quad{[\bar{z}^\eta:\omega]}, \mathsf{Cfg}_{ \bar{z}^\eta:\omega, w^{\gamma+1}:\mathtt{A}_k}(\mathsf{P}) \vdash  [w^{\gamma+1}:\mathtt{A}_k]\\\hspace{2cm}  [w^{\gamma+1}:\mathtt{A}_k], \mathsf{Cfg}_{w^{\gamma+1}:\mathtt{A}k, v^\delta:\mathtt{D}}(\mathsf{Q}_k) \vdash {[v^\delta:\mathtt{D}]} \quad {\color{green}{\mathsf{Msg}(w^\gamma.k) \vdash \mathsf{Msg}(w^\gamma.k)}}(\mbox{identity elimination})\\
\hspace{9cm}    \Downarrow \mbox{External reduction}\times 2 \notag\\ \hline\\
    {{\mathsf{Msg}(w^\gamma.k) \vdash \mathsf{Msg}(w^\gamma.k)}}  \quad
    {[\bar{z}^\eta:\omega]},T_{P} \vdash  [w^{\gamma+1}:\mathtt{A}_k]\\\hspace{2cm}   [w^{\gamma+1}:\mathtt{A}_k], T_{Q_k} \vdash {[v^\delta:\mathtt{D}]} \\
\end{array}}
\]}
    \caption{A run of the cut elimination algorithm on communicating processes.}
    \label{fig:intred}
\end{figure}

 \end{description}
By the definition of predicates $\mathsf{Cfg}$ and $[x^\alpha: \mathtt{A}]$, and the observations we made in the proof of Lemma \ref{lem:derivation}, these are the only steps we need to consider for the cut elimination algorithm.

Consider the cut-free proof returned by our algorithm.  By the property proved above, it is enough to show that an external reduction will be applied on $[x^\alpha:\omega]$ or $[y^\beta:\mathtt{B}]$. We use linearity and validity of the cut-free output derivation. If there are no infinite branches in the proof then by linearity of the calculus we know that a rule is applied on $\overline{[\bar{x}^\alpha:\omega]}$ and $[y^\beta:\mathtt{B}]$. In the infinite case, recall that the predicate $\mathsf{Cfg}$ for a recursive process is defined coinductively; no subformula of it in the antecedents can be a part of an infinite $\mu$-trace. Thus an external reduction (flip rule)  has to be applied on $[x^\alpha:\omega]$ or $[y^\beta:\mathtt{B}]$ to produce a judgment of the derivation. This completes the proof as it shows that the configuration will eventually communicate with one of its external channels. 

We can take one step further, and show that the configuration either terminates or it will eventually communicate with one of its external channels by \emph{receiving a message}. Consider a branch in the cut-free valid proof as described above. If the branch is finite it is straightforward to see that the computation terminates. By a similar reasoning to the previous paragraph, in an infinite branch either $[x^\alpha:\omega]$ has to be a part of an (infinite) $\mu$-trace or $[y^\beta:\mathtt{B}]$ has to be a part of an (infinite) $\nu$-trace. Without loss of generality assume that $[x^\alpha:\omega]$ is a part of a $\mu$-trace.  Since the traces are infinite, there has to be an occurrence of a least fixed point type $\mathtt{t}$ in a predicate $[x^{\gamma}:\mathtt{t}]$ on the branch. As a result, there will be a process in the computation such that it communicates along $x^\gamma$, and by the type of $x^\gamma$, we know that it will be receiving a fixed point unfolding message.
\end{proof} 

\end{document}